\newcommand{\R}{{\mathbb R}}
\newcommand{\I}{{\rm I}}
\renewcommand{\P}{{\rm P}}
\newcommand{\LI}{L_{\rm I}}
\newcommand{\bLI}{\bar{L}_{\rm I}}
\newcommand{\bLP}{\bar{L}_{\rm P}}
\newcommand{\E}{{\cal E}}
\newcommand{\N}{{\mathcal{N}}}
\newcommand{\ra}{\rightarrow}
\newcommand{\la}{\leftarrow}
\newtheorem{theorem}{Theorem}
\newtheorem{assumption}{Assumption}
\newtheorem{lemma}{Lemma}
\newtheorem{remark}{Remark}
\newtheorem{definition}{Definition}
\title{Passivity-Based Distributed Optimization with Communication Delays Using PI Consensus Algorithm}
\author{Takeshi Hatanaka,~\IEEEmembership{Member,~IEEE}, Nikhil Chopra,~\IEEEmembership{Member,~IEEE},  
Takayuki Ishizaki,~\IEEEmembership{Member,~IEEE},\\ and Na Li,~\IEEEmembership{Member,~IEEE}
\thanks{T. Hatanaka (corresponding author) and T. Ishizaki
are with School of Engineering,
Tokyo Institute of Technology, 
2-12-1 S5-16 (Hatanaka)/W8-1 (Ishizaki), Ookayama, Meguro-ku, Tokyo 152-8550, JAPAN.
Tel: +81-3-5734-3316 (Hatanaka), +81-3-5734-2646, {\sf ishizaki@mei.titech.ac.jp} (Ishizaki).
Email: {\sf hatanaka@ctrl.titech.ac.jp} (Hatanaka), {\sf ishizaki@mei.titech.ac.jp} (Ishizaki)
N. Chopra is with Department of Mechanical Engineering, University of Maryland, 
College Park, MD 20742, USA.
Tel: +1-301-405-7011.
Email: {\sf nchopra@umd.edu}.
N. Li is with Electrical Engineering and Applied Mathematics of the School of Engineering and Applied Sciences,
Harvard University, 33 Oxford St, Cambridge, MA 02138, USA.
Tel: +1-617-496-1441.
Email: {\sf nali@seas.harvard.edu}.}
\thanks{
The results of this paper are partially presented in the authors' antecessor \cite{Nec},
but communication delays are not addressed there.}}
\begin{document}
\maketitle
\thispagestyle{empty}
\pagestyle{empty}

\begin{abstract}
In this paper, we address a class of distributed optimization problems in the presence of
inter-agent communication delays based on passivity.
We first focus on unconstrained distributed optimization and provide
a passivity-based perspective for 
distributed optimization algorithms. 
This perspective allows us to handle communication delays while using scattering transformation.
Moreover, we extend the results to constrained distributed optimization,
where it is shown that the problem is solved by just adding one more feedback loop
of a passive system to the solution of the unconstrained ones.
We also show that delays can be incorporated in the same way as the
unconstrained problems.
Finally, the algorithm is applied to a visual human localization problem using
a pedestrian detection algorithm.
\end{abstract}


%


\begin{IEEEkeywords}
\noindent 
Distributed optimization,
Passivity, 
PI consensus,
Communication delays,
Scattering transformation
\end{IEEEkeywords}

\section{Introduction}

Passivity has been extensively studied for 
analysis and design of both linear and nonlinear systems with
broad applications in physical systems \cite{OL_BK}.
In the last decade, passivity has also been actively studied 
in control of network systems \cite{BAW_BK,HCFS_BK},
where inherent passivity preservation and energy dissipative property 
allow one to systematically 
analyze the network properties and design cooperative controllers.
The passivity-based approach is also employed in more advanced research fields 
including transportation networks \cite{KD_AT14}, building control \cite{MMW_CDC12}, 
power grids \cite{IB_CDC14}, cyber-security \cite{LC_TAC12},  
cyber-physical systems \cite{AG_EJC13} and human-swarm interactions \cite{HCF_CDC15}.

Recently, passivity has been also used in optimization and related application fields
\cite{arcak2}--\cite{GDOB_Pdh15}.
Wen and Arcak \cite{arcak2} address Internet congestion control,
and present a unifying passivity-based 
design framework for network utility maximization problems. 
The framework is applied to CDMA power control in \cite{arcak}.
Yamamoto and Tsumura \cite{YT_TR} apply a distributed algorithm called Uzawa's algorithm
to smart grid control based on passivity.
Relations between passivity-based cooperative control and network flow optimization
are also discussed in \cite{BZA_AT14}. 
B\"{u}rger and Persis \cite{BP_AT15} present a passivity-based solution to
distributed optimization with dynamic elements. 
Ishizaki et al. \cite{IUI_ECC16} prove equivalence between 
convex gradients and incrementally passive systems.
Bravo \cite{GDOB_Pdh15} presents a fully distributed algorithm, which does not require 
a subject to gather information from all agents.

All of the above papers address so-called resource allocation problems \cite{B_BK} or its variations.
In this paper, we deal with another type of
distributed optimization studied in 
\cite{NO_TAC09}--\cite{DE_ACC14}.
Nedi\'{c} and Ozdaglar \cite{NO_TAC09} present a distributed algorithm
which combines consensus algorithms and subgradient methods.
The results are extended to constrained problems in \cite{NOP_TAC10},
where a variation of the algorithm in \cite{NO_TAC09} is shown to 
ensure exact convergence to the optimal solution using a diminishing step size.
A solution to the problem with globally defined inequality and
equality constraints is presented by
Zhu and Martinez \cite{ZM_TAC12}.
The results of \cite{ZM_TAC12} are further extended to problems with partial knowledge
on the global constraints in \cite{CNS_TAC14}.
The proposed solution, termed primal-dual perturbed subgradient method,
is shown to outperform \cite{ZM_TAC12} in terms of the convergence speed.
Acceleration of the convergence speed is also addressed in \cite{GN_arxiv}. 
Extensions to random networks and a stochastic subgradient method
are found in \cite{LO_TAC10,RNV_JOTA}.

While the above works present discrete-time recursive processes to compute the solution,
Wang and Elia \cite{WE_AAC10,WE_CDC11} take a continuous-time algorithm and 
provide a control theoretic perspective for the distributed optimization algorithms, 
which are the most closely related to this paper. 
The solution in \cite{WE_AAC10,WE_CDC11}  is further extended to dynamic topologies by Droge and Egerstedt \cite{DE_ACC14}.

All of the solutions in \cite{NO_TAC09}--\cite{DE_ACC14} rely on
the inter-agent information exchanges, which is typically implemented using communication technology.
However, the problems caused by the communication have not been fully explored in the literature.
Among many of such problems, this paper treats communication delays inherent in communication,
which may slow down the convergence to wait for the arrivals of messages
or even destabilize the solution processes in the worst case.
To address the issue, we employ the notion of passivity. 

We start with presenting a passivity-based perspective for 
the algorithms based on the consensus algorithm and so-called PI (Proportional-Integral) consensus algorithm \cite{FYL_CDC06,BFL_CDC10}.
In particular, it is revealed that the PI  consensus-based solution is regarded as a feedback connection of passive systems.
Passivity-based formalism presented above allows one
to utilize rich knowledge established in the history of passivity-based control.

We next treat communication delays using our passivity-based perspective of the distributed algorithm.
Specifically, we show that the delays are successfully integrated with
the above solution 
by using the techniques in \cite{CS_CDC06} together with 
the scattering transformation \cite{HCFS_BK}.
Exact convergence to the optimal solution is then proved based on passivity.

The above results are then extended to a constrained optimization problem.
In this part, we start with a delay free case, and
show that the problem is solvable 
by just adding one more feedback loop of a passive system
originating from the gradient-based update of the Lagrange multiplier.
Note that the resulting architecture is similar to the one presented in \cite{WE_CDC11}.
However,  
\cite{WE_CDC11} achieves only convergence to an approximate solution in the presence of the nonlinear inequality constraints,
due to the use of barrier functions.
In contrast, our solution ensures exact convergence to the optimal solution
while avoiding the gain tuning of the barrier functions.
The results are also extended to the case with communication delays
by following the same procedure as the unconstrained problem.

Finally, the present algorithm is applied to a visual human localization problem using
a pedestrian detection algorithm.

The contribution of this paper is summarized as below:
The primary contribution is to present a distributed algorithm with robustness against
communication delays. The issue, delays in distributed optimization, 
is addressed in \cite{AD_CDC12}--\cite{HNE_A15}.
Fair comparison with these articles is difficult due to 
considerably different problem settings, but in general they do not 
prove exact convergence to the optimal solution whereas we do.
The secondary contribution is to reveal that the problem in \cite{NO_TAC09}--\cite{DE_ACC14}
can be treated within the passivity paradigm.
Thirdly, we handle general convex inequality constraints in this paper, 
while the other passivity-based approaches
\cite{arcak2}--\cite{GDOB_Pdh15} take only linear and/or scalar constraints.


\section{Preliminary}
\label{sec:0}

In this section, we introduce some terminologies used in this paper.

We first introduce  passivity.
Consider a system with a state-space representation
\begin{eqnarray}
\dot x = \phi(x,u),\ \ y = \varphi(x,u),
\label{eq:0.0}
\end{eqnarray}
where $x(t)\in \R^N$ is the state, 
$u(t) \in \R^p$ is the input and $y(t)\in \R^p$ is the output.
Then, passivity is defined as below.
\begin{definition}
The system (\ref{eq:0.0}) is said to be passive if there exists a 
positive semi-definite
function $S: \R^N \to \R_+ := [0, \infty)$, 
called storage function, such that 
\begin{eqnarray}
S(x(t)) - S(x(0)) \leq \int^{t}_0 y^T(\tau)u(\tau)d\tau
\label{eq:0.1}
\end{eqnarray}
holds for all inputs $u:[0, t]\to \R^p$, 
all initial states $x(0)\in \R^N$ and all $t \in \R^+$.
In the case of the static system $y = \varphi(u)$,
it is passive if $y^Tu = \varphi^T(u)u \geq 0$ for all $u\in \R^p$.
\end{definition}
As widely known, if $S$ is differentiable,
(\ref{eq:0.1}) can be replaced by
\begin{eqnarray}
\dot S(x(t)) \leq y^T(t)u(t).
\label{eq:0.2}
\end{eqnarray}
Passivity is known to be preserved for feedback interconnections of passive systems,
and closed-loop stability is also ensured 
under additional assumptions on strict energy dissipation and observability.
Please refer to \cite{OL_BK}--\cite{HCFS_BK} or other seminal books cited therein for more details on passivity.

We next introduce another notion closely related to passivity,
namely incremental passivity.
In the context of this paper, it is sufficient to define incremental passivity for a static system.
\begin{definition}
A static system $y = \varphi(u)$ is said to be incrementally passive if the function $\varphi$ satisfies
\begin{eqnarray}
(\varphi(u_1) - \varphi(u_2))^T(u_1 - u_2)\geq 0
\nonumber
\end{eqnarray}
for all $u_1\in \R^p$ and $u_2\in \R^p$.
\end{definition}

We also use the following fundamental tool in convex optimization.
\begin{definition}
\label{lem:4}
A function $f: \R^N \to \R$ is said to be convex
if the following inequality holds for any $x,y\in \R^N$.
\begin{eqnarray}
(\nabla f(x))^T(y-x) \leq f(y) - f(x)
\label{eq:0.4}
\end{eqnarray}
The function is said to be strictly convex if the inequality (\ref{eq:0.4}) strictly holds
whenever $x\neq y$.
\end{definition}
The following well known result links the convex functions and 
passivity theory.
\begin{lemma}\cite{B_BK}
\label{lem:0}
Consider a convex function $f: \R^N \to \R$.
Then, its gradient $\nabla f: \R^N \to \R^N$ is incrementally passive, i.e.,
the following inequality holds for any $x,y\in \R^N$.
\begin{eqnarray}
(\nabla f(x) - \nabla f(y))^T(x-y) \geq 0
\nonumber
\end{eqnarray}
If $f$ is strictly convex, the inequality strictly holds as long as $x\neq y$.
\end{lemma}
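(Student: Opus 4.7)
The plan is to derive the incremental passivity of $\nabla f$ directly from the convexity inequality in Definition 3 by applying that inequality twice, with the roles of the two points swapped, and then adding the resulting bounds. Since Definition 3 already gives a first-order characterization of convexity in terms of the gradient, no auxiliary machinery (mean value theorems, Hessians, or differentiability beyond what is implicit in Definition 3) should be required.

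First I would instantiate (\ref{eq:0.4}) at the pair $(x,y)$ to obtain $(\nabla f(x))^T(y-x) \leq f(y) - f(x)$. Next I would apply the same inequality with the roles of $x$ and $y$ reversed, yielding $(\nabla f(y))^T(x-y) \leq f(x) - f(y)$. Adding these two inequalities causes the right-hand sides to cancel, leaving $(\nabla f(x))^T(y-x) + (\nabla f(y))^T(x-y) \leq 0$. Factoring the left-hand side produces $(\nabla f(x) - \nabla f(y))^T(y-x) \leq 0$, and multiplying through by $-1$ gives exactly the desired incremental passivity inequality $(\nabla f(x) - \nabla f(y))^T(x-y) \geq 0$.

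For the strict convexity claim, I would observe that whenever $x \neq y$, both instantiations of (\ref{eq:0.4}) hold with strict inequality by the second part of Definition 3. Adding two strict inequalities preserves strictness, so the same manipulation delivers the strict version of the bound, completing the argument.

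I do not anticipate any substantive obstacle here; the result is essentially a bookkeeping consequence of Definition 3. The only point that requires care is tracking signs when adding the two inequalities, since the factor $(y-x)$ in the first must be reconciled with $(x-y)$ in the second before the final sign flip.
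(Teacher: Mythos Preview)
Your argument is correct and is precisely the standard two-line proof of gradient monotonicity for convex functions. The paper does not actually supply a proof of this lemma; it simply cites \cite{B_BK}, so there is nothing further to compare against.
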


\section{Passivity-Based Perspective for Unconstrained Distributed Optimization}
\label{sec:2}

In this section, we consider the following distributed optimization problem
investigated in \cite{NO_TAC09}.
\begin{eqnarray}
\min_{z\in \R^N} f(z) := \sum_{i = 1}^n f_i(z)
\label{eq:2.1}
\end{eqnarray}
The function $f_i: \R^N \to \R\ (i=1,2,\dots, n)$ is 
the private cost function of agent $i$, which is assumed to be
unaccessible from agents other than $i$.
The subsequent discussions rely on the following assumption.
\begin{assumption}
\label{ass:1}
The functions $f_1, f_2, \dots, f_n$ are convex, continuously differentiable, and their gradients 
denoted by $\phi_i := \nabla f_i\ (i = 1,2,\dots, n)$ are locally Lipschitz, i.e., . for every point $x_0 \in \R^N$,
there exists its neighborhood ${\mathcal X}_0$ such that 
$\|\phi_i(x) - \phi_i(y)\| \leq L_0(x_0)\|x-y\|$ holds for all $x,y \in {\mathcal X}_0$
with some constant $L_0(x_0)$.
\end{assumption}

Throughout this paper, we assume that the set of the optimal solutions is not empty,
and the corresponding minimal value of $f$ is finite.
An optimal solution to (\ref{eq:2.1})
is denoted by $z^* \in \R^N$.
Since $f$ is also  continuously differentiable and convex under Assumption \ref{ass:1},
a vector $z^*\in \R^N$ is an optimal solution to the problem (\ref{eq:2.1})
if and only if the following equation holds \cite{B_BK}.
\begin{eqnarray}
\nabla f(z^*) = \sum_{i = 1}^n \phi_i(z^*) = 0
\label{eq:2.2}
\end{eqnarray}

\subsection{Consensus-Based Distributed Optimization}
\label{sec:2.1}

Suppose that each agent $i$ has an estimate of the optimal solution $z^*$,
denoted by $x_i \in \R^N$, and that $x_i$ is updated 
so that it converges to the set of optimal solutions.
The agents are assumed to be able to exchange information
with neighboring agents through a network modeled by a graph $G := (\{1,2,\dots, n\}, {\mathcal E})$
satisfying the following assumption.
\begin{assumption}
\label{ass:2}
The graph $G$ is undirected and connected.
\end{assumption}
The set of all neighbors of agent $i$ is denoted by $\N_i$.

\begin{figure}
\centering
\includegraphics[width=6cm]{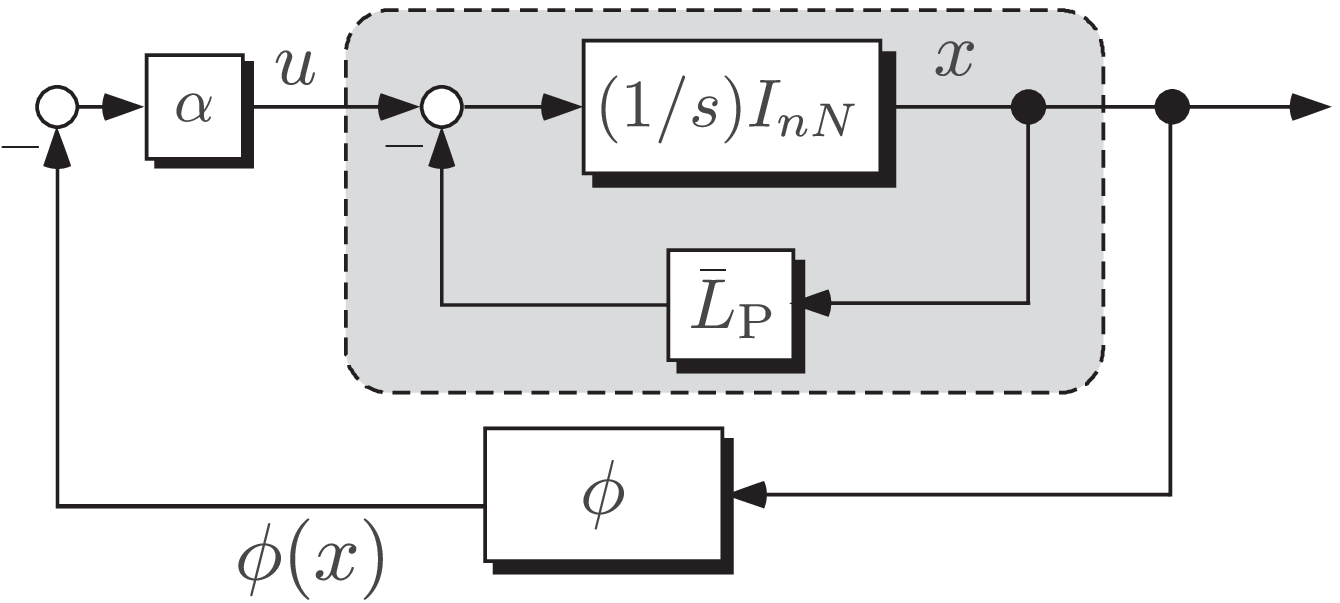}
\caption{Block diagram of the consensus-based distributed optimization algorithm for unconstrained optimization.
The system enclosed by the dashed line is passive from 
$u$ to $x$, and the bottom block $\phi$ is incrementally passive (Lemma \ref{lem:0}).}
\label{fig:1}
\end{figure}

Nedi\'{c} and Ozdaglar \cite{NO_TAC09} present an update rule of $x_i$ 
combining consensus algorithms and gradient descent algorithms whose 
continuous-time representation is given as
\begin{eqnarray}
\dot x_i = -\alpha \phi_i(x_i) + \sum_{j\in \N_i}a_{ij}(x_j - x_i),
\nonumber
\end{eqnarray}
where $\alpha$ is a positive scalar and $a_{ij}$ is the $(i,j)$-element of an 
adjacency matrix for $G$, where
$a_{ij} > 0$ if $(i,j) \in {\mathcal E}$
and $a_{ij} = 0$ otherwise, and
$a_{ij} = a_{ji}\ {\forall i,j}$.
The graph Laplacian associated with the adjacency matrix with elements $a_{ij}$ is denoted by $L_{\P}$.
The matrix $L_{\P}$ is symmetric and positive semidefinite
with a simple zero eigenvalue corresponding to the eigenvector $\textbf{1}_N$, where $\textbf{1}_N$ is the $N$ dimensional real vector whose elements all
equal to 1.

Collecting estimates $x_i$ as 
$x:=[x_1^T\ x_2^T\ \cdots\ x_n^T]^T$,
the evolution of $x$ is described by
\begin{eqnarray}
\dot x = -\alpha \phi(x) - \bar L_{\P} x,
\label{eq:2.4}
\end{eqnarray}
where 
$\phi(x):= [\phi_1^T(x_1)\
\cdots\
\phi_n^T(x_n)]^T$, 
$\bar L_{\P} := L_{\P} \otimes I_N$, the symbol $\otimes$ describes
the Kronecker product, and $I_N$ is the $N$-by-$N$ identity matrix.
The block diagram of the system is illustrated in Fig. \ref{fig:1}.

Let us now consider the system enclosed by the dashed line in Fig. \ref{fig:1}
whose input is denoted by $u$.
The dynamics of the system is described by
\begin{eqnarray}
\dot x = - \bar L_{\P} x + u.
\label{eq:11.4}
\end{eqnarray}
Take a storage function $S_{\rm P} := \frac{1}{2}\|x\|^2$. 
The time derivative of $S$ along the trajectories of (\ref{eq:11.4}) is then given as
\begin{eqnarray}
\dot S_{\rm P} = - x^T\bar L_{\P} x + x^Tu \leq x^Tu.
\label{eq:0.10}
\end{eqnarray}
We see from (\ref{eq:0.2}) that the system is passive.
Lemma \ref{lem:0} also ensures that the bottom block $\phi$ in Fig. \ref{fig:1}
is incrementally passive.
Thus, Fig. \ref{fig:1} is a feedback connection of a passive system
and an incrementally passive system.
However, because of the mismatch of the input-output pairs, 
the trajectories of $x_i$ do not converge to the solution as confirmed below.
The goal state is now formulated as
$x = x^*$ for $x^* 
:= \textbf{1}_n \otimes z^*$.
If $x = x^*$, the right-hand side of (\ref{eq:2.4})
is equal to
\begin{eqnarray}
-\alpha \phi(x^*) - (L_{\P} \textbf{1}_n) \otimes z^* 
= -\alpha\phi(x^*).
\nonumber
\end{eqnarray}
Although the sum of the elements $\phi_1(z^*),\phi_2(z^*),\dots, \phi_n(z^*),$ of $\phi(x^*)$ is zero from (\ref{eq:2.2}),
each element is not always zero, which implies that 
$x = x^*$ is not an equilibrium of (\ref{eq:2.4}).
Thus, the state trajectories do not converge to the goal state $x^*$.


\subsection{PI Consensus-Based Distributed Optimization}
\label{sec:2.2}

In this subsection, we focus on a distributed algorithm based 
on a PI consensus algorithm \cite{FYL_CDC06},
formulated as
\begin{eqnarray}
\dot x_i \!\!&\!\!=\!\!&\!\! \sum_{j\in \N_i}a_{ij}(x_j - x_i) - \sum_{j\in \N_i}b_{ij}(\xi_j - \xi_i)  + u_i,
\label{eq:2.9a}\\
\dot \xi_i \!\!&\!\!=\!\!&\!\! \sum_{j\in \N_i}b_{ij}(x_j - x_i),
\label{eq:2.9b}
\end{eqnarray}
where $u_i \in \R^N$ is an external input, $\xi_i \in \R^N$
is an additional variable which generates the integral of the consensus input 
$\sum_{j\in \N_i}b_{ij}(x_j - x_i)$, and 
$b_{ij}$ is the $(i,j)$-element of an adjacency matrix for $G$.
The graph Laplacian associated with the adjacency matrix with elements $b_{ij}$
is denoted by $L_{\I}$.

\begin{figure}
\centering
\includegraphics[width=7.5cm]{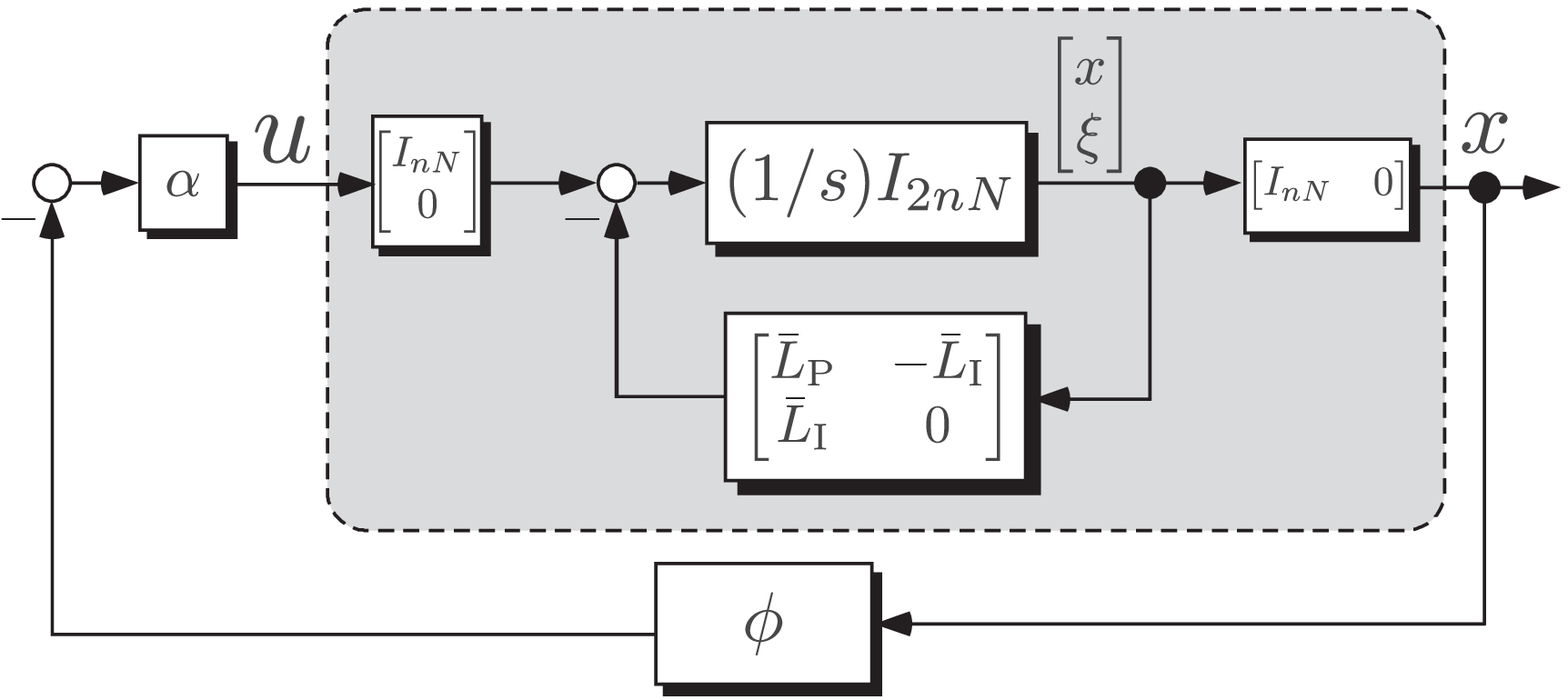}
\caption{Block diagram of the PI consensus-based distributed optimization algorithm for unconstrained optimization.
The system enclosed by the dashed line is passive from 
$\tilde u = u - u^*$ to
$\tilde x = x - x^*$ with $u^* := -\alpha \phi(x^*)$ as confirmed in (\ref{eq:2.21}).
The bottom block $\phi$ is incrementally passive (Lemma \ref{lem:0}), and hence passive from
$x - x^*$ to $\phi(x) - \phi(x^*)$.
}
\label{fig:2}
\end{figure}

Defining $\xi := [\xi_1^T\ \xi_2^T\ \cdots\ \xi_n^T]^T$ and 
$u := [u_1^T\ u_2^T\ \cdots\ u_n^T]^T$, the total system  is described as
\begin{eqnarray}
\begin{bmatrix}
\dot x\\
\dot \xi
\end{bmatrix}
\!\!&\!\!=\!\!&\!\! 
-
\begin{bmatrix}
\bar L_{\P}&-\bar L_{\I}\\
\bar L_{\I}& 0
\end{bmatrix}
\begin{bmatrix}
x\\
\xi
\end{bmatrix}
+ 
\begin{bmatrix}
I_{nN}\\
0
\end{bmatrix}
u,
\label{eq:2.10}
\end{eqnarray}
where $\bLI := \LI \otimes I_N$.
We can prove the following lemma
by making use of the skew symmetry of the non-diagonal blocks of
$\begin{bmatrix}
\bLP&-\bLI\\
\bLI& 0
\end{bmatrix}$.

\begin{lemma}
\label{lem:1}
The system (\ref{eq:2.10}) is passive from $u$ to $x$
with respect to the storage function
$S = \frac{1}{2}\|x\|^2 + \frac{1}{2}\|\xi\|^2$.
\end{lemma}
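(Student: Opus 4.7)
The plan is to verify the passivity inequality (\ref{eq:0.2}) directly by differentiating the candidate storage function along trajectories of (\ref{eq:2.10}) and exploiting the block skew-symmetry that is already emphasized in the lemma statement.

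First I would compute $\dot S = x^T \dot x + \xi^T \dot \xi$. Substituting the dynamics (\ref{eq:2.10}) gives
\begin{equation*}
\dot S = -x^T \bar L_{\P} x + x^T \bar L_{\I} \xi - \xi^T \bar L_{\I} x + x^T u.
\end{equation*}
The second and third terms are the contributions from the off-diagonal blocks $-\bar L_{\I}$ and $\bar L_{\I}$. Because $\bar L_{\I} = L_{\I} \otimes I_N$ inherits symmetry from $L_{\I}$, we have $x^T \bar L_{\I} \xi = \xi^T \bar L_{\I} x$, so these two cross terms cancel. This is exactly the skew-symmetry observation noted before the lemma, and it is the crucial structural step.

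What remains is $\dot S = -x^T \bar L_{\P} x + x^T u$. Since $L_{\P}$ is symmetric positive semidefinite, so is $\bar L_{\P}$, and therefore $x^T \bar L_{\P} x \geq 0$. This yields $\dot S \leq x^T u$, which is exactly the differential passivity condition (\ref{eq:0.2}) with input $u$ and output $x$. Since $S = \tfrac{1}{2}\|x\|^2 + \tfrac{1}{2}\|\xi\|^2$ is clearly positive semidefinite (in fact positive definite), the system is passive from $u$ to $x$ with this storage function.

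I do not anticipate any real obstacle here: the only nontrivial step is recognizing that the cross terms cancel, and this is dictated by the skew-symmetric arrangement of $\bar L_{\I}$ in the system matrix. The proof therefore amounts to a short direct calculation, and no further regularity or convexity assumptions are required at this stage.
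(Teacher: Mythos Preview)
Your proposal is correct and matches the paper's own proof essentially line for line: both differentiate $S$ along (\ref{eq:2.10}), use the skew placement of $\bLI$ to cancel the cross terms, and invoke $\bLP \geq 0$ to conclude $\dot S = -x^T\bLP x + x^Tu \leq x^Tu$. The only cosmetic difference is that the paper writes the quadratic form in block-matrix notation rather than expanding the individual terms.
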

\begin{proof}
The time derivative of $S$ along the system trajectories
is given by
\begin{eqnarray}
\dot S \!\!&\!\!=\!\!&\!\! -\begin{bmatrix}
x\\
\xi
\end{bmatrix}^T
\begin{bmatrix}
\bLP&-\bLI\\
\bLI& 0
\end{bmatrix}
\begin{bmatrix}
x\\
\xi
\end{bmatrix}
+ 
\begin{bmatrix}
x\\
\xi
\end{bmatrix}^T
\begin{bmatrix}
I_{nN}\\
0
\end{bmatrix}
u
\nonumber\\
\!\!&\!\!=\!\!&\!\! -x^T \bLP x + x^Tu \leq x^T u.
\nonumber
\end{eqnarray}
This completes the proof.
\end{proof}

We close the loop of $u$
by the gradient-based feedback law 
$u = -\alpha \phi(x)$.
Then, the closed-loop system is formulated as
\begin{eqnarray}
\begin{bmatrix}
\dot x\\
\dot \xi
\end{bmatrix}
\!\!&\!\!=\!\!&\!\! 
-
\begin{bmatrix}
\bLP&-\bLI\\
\bLI& 0
\end{bmatrix}
\begin{bmatrix}
x\\
\xi
\end{bmatrix}
- \alpha
\begin{bmatrix}
I_{nN}\\
0
\end{bmatrix}
\phi(x),
\label{eq:2.14}
\end{eqnarray}
whose block diagram is illustrated in Fig. \ref{fig:2}.



\subsection{Equilibrium/Convergence Analysis: Unconstrained Case}
\label{sec:2.3}

Regarding the equilibrium of (\ref{eq:2.14}), we have the following lemma.
\begin{lemma}
\label{lem:2}
Under Assumptions \ref{ass:1} and \ref{ass:2}, there exists $\xi^*$ such that 
$[(x^*)^T\ (\xi^*)^T]^T$ is an equilibrium of (\ref{eq:2.14}).
\end{lemma}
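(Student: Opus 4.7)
The plan is to write down the two block-equilibrium conditions from (\ref{eq:2.14}) and verify them separately by exploiting the Kronecker structure of $\bar{L}_{\mathrm{P}}$, $\bar{L}_{\mathrm{I}}$ together with the optimality condition (\ref{eq:2.2}).

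First I would substitute $x = x^* = \mathbf{1}_n \otimes z^*$ into (\ref{eq:2.14}) and observe that the second block row demands $\bar L_{\mathrm{I}} x^* = 0$, while the first block row demands $\bar L_{\mathrm{I}} \xi^* = \bar L_{\mathrm{P}} x^* + \alpha \phi(x^*)$. The identities $\bar L_{\mathrm{P}} x^* = (L_{\mathrm{P}} \mathbf{1}_n)\otimes z^* = 0$ and $\bar L_{\mathrm{I}} x^* = (L_{\mathrm{I}} \mathbf{1}_n)\otimes z^* = 0$ both follow immediately from $L_{\mathrm{P}}\mathbf{1}_n = L_{\mathrm{I}}\mathbf{1}_n = 0$. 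This reduces the problem to finding $\xi^*$ satisfying the linear system $\bar L_{\mathrm{I}} \xi^* = \alpha \phi(x^*)$.

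Next I would check that $\alpha \phi(x^*)$ lies in the range of $\bar L_{\mathrm{I}}$. Because Assumption \ref{ass:2} guarantees $G$ is undirected and connected, $L_{\mathrm{I}}$ is symmetric positive semidefinite with a simple zero eigenvalue and null space $\mathrm{span}\{\mathbf{1}_n\}$, so $\mathrm{range}(L_{\mathrm{I}}) = \{v \in \R^n : \mathbf{1}_n^T v = 0\}$ and consequently $\mathrm{range}(\bar L_{\mathrm{I}}) = \ker(\mathbf{1}_n^T \otimes I_N)$. Applying $\mathbf{1}_n^T \otimes I_N$ to $\alpha\phi(x^*)$ yields $\alpha \sum_{i=1}^n \phi_i(z^*)$, which equals zero by the optimality condition (\ref{eq:2.2}). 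Hence $\alpha \phi(x^*) \in \mathrm{range}(\bar L_{\mathrm{I}})$, and some $\xi^*$ satisfying $\bar L_{\mathrm{I}} \xi^* = \alpha \phi(x^*)$ exists (for instance, $\xi^* = \alpha \bar L_{\mathrm{I}}^{\dagger}\phi(x^*)$, where $\dagger$ denotes the Moore--Penrose pseudoinverse).

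I do not expect any serious obstacle here: the argument is a direct algebraic verification, and the key ingredients, namely the kernel structure of a connected-graph Laplacian and the first-order optimality condition (\ref{eq:2.2}), are both already available. The only point requiring minor care is keeping the Kronecker factorizations straight so that $\mathrm{range}(\bar L_{\mathrm{I}})$ and $\ker(\mathbf{1}_n^T\otimes I_N)$ are correctly identified; beyond that, the lemma is essentially a consistency check of a linear system.
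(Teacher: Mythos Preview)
Your argument is correct and is exactly the standard one: the paper itself defers the proof to \cite{WE_AAC10}, but the identical range-condition argument you give (checking $(\mathbf{1}_n^T\otimes I_N)\,\alpha\phi(x^*)=0$ via (\ref{eq:2.2}) so that $\alpha\phi(x^*)\in\mathrm{range}(\bar L_{\mathrm I})$) is precisely what the paper spells out for the constrained analogue in Lemma~\ref{lem:6}. Nothing further is needed.
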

\begin{proof}
See \cite{WE_AAC10}.
\end{proof}

Lemma \ref{lem:2} means that, if we define $u^* := -\alpha \phi(x^*)$,
the following equation holds.
\begin{eqnarray}
(0=)
\begin{bmatrix}
\dot x^*\\
\dot \xi^*
\end{bmatrix}
\!\!&\!\!=\!\!&\!\! 
-
\begin{bmatrix}
\bLP&-\bLI\\
\bLI& 0
\end{bmatrix}
\begin{bmatrix}
x^*\\
\xi^*
\end{bmatrix}
+ 
\begin{bmatrix}
I_{nN}\\
0
\end{bmatrix}
u^*
\label{eq:2.16}
\end{eqnarray}
Define $\tilde x := x-x^*$, $\tilde \xi := \xi - \xi^*$ and $\tilde u := u - u^*$.
Then, subtracting (\ref{eq:2.16}) from (\ref{eq:2.10}) yields
\begin{eqnarray}
\begin{bmatrix}
\dot{\tilde{x}}\\
\dot{\tilde{\xi}}
\end{bmatrix}
\!\!&\!\!=\!\!&\!\! 
-
\begin{bmatrix}
\bLP&-\bLI\\
\bLI& 0
\end{bmatrix}
\begin{bmatrix}
\tilde x\\
\tilde \xi
\end{bmatrix}
+ 
\begin{bmatrix}
I_{nN}\\
0
\end{bmatrix}
\tilde u.
\label{eq:2.17}
\end{eqnarray}
Since the system matrices of (\ref{eq:2.17})
are the same as those of (\ref{eq:2.10}),
passivity of the system (\ref{eq:2.17}) 
from $\tilde u$ to $\tilde x$ can be immediately proved by
modifying the storage function as
\begin{eqnarray}
\tilde S := \frac{1}{2}\|\tilde x\|^2 + \frac{1}{2}\|\tilde \xi\|^2.
\label{eq:2.18}
\end{eqnarray}
More precisely, the following inequality holds.
\begin{eqnarray}
\dot{\tilde{S}} \!\!&\!\!=\!\!&\!\! -\tilde x^T \bLP \tilde x + \tilde x^T\tilde u \leq \tilde x^T\tilde u.
\label{eq:2.21}
\end{eqnarray}
Remark that, differently from (\ref{eq:0.10}), the input and output
are defined by the increments $\tilde x := x-x^*$ and $\tilde u = u - u^*$.


We are now ready to use the passivity interpretation 
of the system (\ref{eq:2.10}) and gradient of convex functions to
prove the following convergence result.

\begin{theorem}
\label{thm:1}
Consider the system (\ref{eq:2.14}).
If Assumptions \ref{ass:1} and \ref{ass:2} hold, 
then $x_i$ asymptotically converges to the set of optimal solutions to (\ref{eq:2.1})
for all $i = 1,2,\dots, n$.
\end{theorem}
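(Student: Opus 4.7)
The plan is to use the shifted passivity inequality (\ref{eq:2.21}) together with the incremental passivity of $\phi$ (Lemma \ref{lem:0}) to build a Lyapunov function, then invoke LaSalle's invariance principle to identify the asymptotic set. I would take the storage $\tilde S$ from (\ref{eq:2.18}) as the Lyapunov candidate for the shifted closed loop obtained by substituting $\tilde u = -\alpha(\phi(x) - \phi(x^*))$ into (\ref{eq:2.17}).

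Feeding this feedback into the equality form of (\ref{eq:2.21}) yields
\[
\dot{\tilde S} = -\tilde x^T \bLP \tilde x - \alpha\, \tilde x^T (\phi(x) - \phi(x^*)) \leq 0,
\]
where both terms are independently nonpositive (the first by positive semidefiniteness of $\bLP$, the second by Lemma \ref{lem:0} applied coordinatewise). Since $\tilde S$ is radially unbounded in $(\tilde x, \tilde \xi)$, this establishes boundedness of the state, and Assumption \ref{ass:1} then guarantees that solutions exist for all $t \geq 0$. LaSalle's invariance principle therefore applies, and trajectories tend to the largest invariant set contained in $\{\dot{\tilde S} = 0\}$, i.e.\ in the intersection of $\bLP \tilde x = 0$ and $\tilde x^T(\phi(x) - \phi(x^*)) = 0$.

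The first condition together with the connectedness of $G$ from Assumption \ref{ass:2} (so $\ker \bLP = \textbf{1}_n \otimes \R^N$) forces $\tilde x = \textbf{1}_n \otimes v$ for some $v \in \R^N$. Substituting into the second and using $\nabla f(z^*) = 0$ from (\ref{eq:2.2}) gives $v^T \nabla f(z^* + v) = 0$. Convexity of $f$ (Definition \ref{lem:4}) applied at the pair $z^* + v$, $z^*$ then produces $f(z^*) \geq f(z^* + v) - v^T \nabla f(z^*+v) = f(z^* + v)$, so $z^* + v$ is itself an optimal solution. Consequently every point in the $\omega$-limit set has $x_i = z^* + v$ for all $i$, with $z^* + v$ minimizing $f$, which is exactly the claim.

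The step I expect to demand the most care is verifying that the $\omega$-limit trajectories reduce to consensus on an optimal solution rather than merely satisfying the two instantaneous null conditions. Along any invariant trajectory, $\dot{\tilde \xi} = \bLI \tilde x = (\LI \textbf{1}_n) \otimes v = 0$, so $\tilde \xi$ is frozen, and the invariance requirement $\bLP \dot{\tilde x} \equiv 0$ forces the residual $\bLI \tilde \xi - \alpha(\phi(x) - \phi(x^*))$ to lie in $\ker \bLP = \textbf{1}_n \otimes \R^N$; averaging across agents (exploiting $\textbf{1}_n^T \LI = 0$ and $\sum_i \phi_i(z^*+v) = \nabla f(z^* + v) = 0$ by the optimality of $z^*+v$ established above) shows this residual has zero mean and hence vanishes, so $v$ is in fact constant on the invariant set.
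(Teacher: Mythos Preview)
Your proof is correct and follows essentially the same route as the paper: the same shifted storage $\tilde S$, the same dissipation equality, the same LaSalle argument leading to consensus $x_i = c$ with $(c-z^*)^T\nabla f(c)=0$, and the same convexity inequality (\ref{eq:0.4}) to conclude $f(c)=f(z^*)$. Your final paragraph, which argues further that $v$ is constant along invariant trajectories, is correct but unnecessary for the theorem as stated (convergence to the \emph{set} of optimizers); the paper simply stops once it has shown that any trajectory in the invariant set satisfies $x_i(t)=c(t)$ with $c(t)$ optimal for every $t$.
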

\begin{proof}
From $u = -\alpha \phi(x)$ and $u^* = -\alpha \phi(x^*)$,
(\ref{eq:2.21}) is rewritten as 
\begin{eqnarray}
\dot{\tilde{S}} 
\!\!&\!\!=\!\!&\!\! -(x-x^*)^T \bLP (x-x^*) 
- \alpha(x-x^*)^T(\phi(x)-\phi(x^*)).
\nonumber
\end{eqnarray}
Since $\bLP x^* = 0$,  
this is further rewritten as 
\begin{eqnarray}
\dot{\tilde{S}} 
\!\!&\!\!=\!\!&\!\! -x^T \bLP x - \alpha \sum_{i=1}^n (x_i-z^*)^T(\phi_i(x_i)-\phi_i(z^*))
\nonumber
\end{eqnarray}
Using Lemma \ref{lem:0} and $\bLP\geq 0$, 
we can prove $\dot{\tilde{S}} \leq 0$.

Since the function $\tilde S$ is radially unbounded and positive definite, 
any level set of the function is positively invariant.
Hence LaSalle's principle is applicable.
Consider the state trajectories such that $\dot{\tilde{S}} \equiv 0$, i.e.,
both of  $x^T \bLP x = 0$ and
\begin{eqnarray}
\sum_{i=1}^n (x_i-z^*)^T(\phi_i(x_i)-\phi_i(z^*))=0
\label{eq:2.25b}
\end{eqnarray}
identically hold. The former equation  means
consensus of the variable $x_i$, namely there exists $c(\cdot)$
such that $x_i(t) = c(t)\ {\forall i}, t$.
In this case, (\ref{eq:2.25b}) is rewritten as
\begin{eqnarray}
\!\!\!\!\!\!\!\!\!\!
0 
\!\!&\!\!=\!\!&\!\! (c(t)-z^*)^T \sum_{i=1}^n (\phi_i(c(t))-\phi_i(z^*))
\nonumber\\
\!\!\!\!\!\!\!\!\!\!
\!\!&\!\!=\!\!&\!\! (c(t)-z^*)^T \sum_{i=1}^n\phi_i(c(t))
=(c(t)-z^*)^T \nabla f(c(t)),
\label{eq:2.26}
\end{eqnarray}
where the third equation holds from the optimality condition (\ref{eq:2.2}).
Using (\ref{eq:0.4}), (\ref{eq:2.26}) is further rewritten as
\begin{eqnarray}
0 \!\!&\!\!=\!\!&\!\! (\nabla f(c(t)))^T(c(t)-z^*) \geq f(c(t)) - f(z^*) \geq 0,
\nonumber
\end{eqnarray}
which implies that $f(c(t)) = f(z^*)$ holds for all $t$, namely the trajectories of $c$ 
must be contained in the set of optimal solutions to (\ref{eq:2.1}).
Thus, LaSalle's invariance principle proves the theorem.
\end{proof}


\begin{remark}
The above algorithm together with the 
convergence result compatible with ours was already presented in 
\cite{WE_AAC10}.
The contribution of this section is not to prove convergence itself but to
provide a passivity-based perspective 
that (\ref{eq:2.14}) is regarded as feedback connection of 
two passive systems with incremental inputs and outputs.
It will be shown in the subsequent sections that this perspective provides 
fruitful design concepts.
\end{remark}

\section{Unconstrained Distributed Optimization with Communication Delay}
\label{sec:3}

In this section, we suppose that the inter-agent communication suffers from delays
which are assumed to be constant but heterogeneous.
The delay
from agent $i$ to $j$ is denoted 
by $T_{ij}$ for any pair $(i,j)\in \E$.

We start with the following restrictive assumption
in order to enhance readability, and then it will be relaxed.
\begin{assumption}
\label{ass:5}
The functions $f_1, f_2, \dots, f_n$ are strictly convex, continuously differentiable,
and their gradients are locally Lipschitz.
\end{assumption}
Under the assumption together with the existence of the optimal solution,
the solution $z^*$ is uniquely determined \cite{B_BK}.

\begin{figure}
\centering
\includegraphics[width=8.4cm]{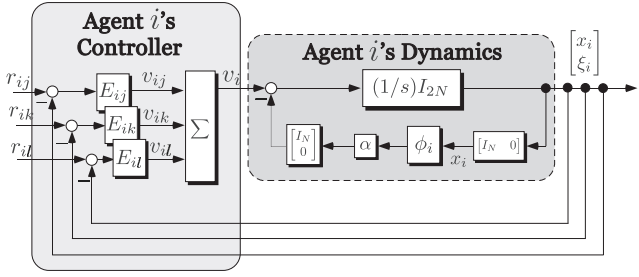}
\caption{Block diagram of agent $i$'s dynamics for unconstrained optimization in the presence of communication delays.
The system enclosed by the dashed line is 
passive from $\bar v_i$
to $[{\bar{x}}_i^T\ {\bar{\xi}}_i^T]^T$ (Lemma \ref{lem:18}), where $\bar v_i$, $\bar x_i$ and $\bar \xi_i$
are defined in (\ref{eq:7.2a}) and (\ref{eq:8.12a}).
}
\label{fig:3}
\end{figure}

\subsection{Individual Dynamics as A Feedback System with Passive Dynamics and A Collection of Passive Controllers}

Consider the PI consensus algorithm (\ref{eq:2.9a}) and (\ref{eq:2.9b}) with $u_i = -\alpha \phi_i(x_i)$,
where $x_j$ and $\xi_j$ are sent by agent $j \in \N_i$ through communication.
Since these information are not directly available in the presence of delays,
we replace these variables by new notations $r^x_{ij}$ and $r^{\xi}_{ij}$ as
\begin{eqnarray}
\!\!\!\!\!\!\!\!
\dot x_i \!\!&\!\!=\!\!&\!\! \sum_{j\in \N_i}a_{ij}(r^x_{ij} - x_i) - \sum_{j\in \N_i}b_{ij}(r^{\xi}_{ij} - \xi_i) -  \alpha \phi_i(x_i),
\label{eq:8.1a}\\
\!\!\!\!\!\!\!\!
\dot \xi_i \!\!&\!\!=\!\!&\!\! \sum_{j\in \N_i}b_{ij}(r^x_{ij} - x_i),
\label{eq:8.1b}
\end{eqnarray}
where $r^x_{ij}$ stands for the signal received by agent $i$ that
is sent by agent $j$ regarding the value of $x_j$.
The block diagram of the system is illustrated in Fig. \ref{fig:3}, which is regarded as a feedback system 
with the agent dynamics
\begin{eqnarray}
\begin{bmatrix}
\dot{{x}}_i\\
\dot{{\xi}}_i
\end{bmatrix}
= v_i -  \alpha 
\begin{bmatrix}
\phi_i(x_i)\\
0
\end{bmatrix}
\label{eq:8.55}
\end{eqnarray}
and the controller
\begin{eqnarray}
v_i  \!\!&\!\!=\!\!&\!\! \sum_{j\in \N_i}v_{ij},\ 
v_{ij} := 
\begin{bmatrix}
v_{ij}^x\\
v_{ij}^{\xi}
\end{bmatrix} =
E_{ij}
\begin{bmatrix}
r^x_{ij} - x_i\\ 
r^{\xi}_{ij} - \xi_i
\end{bmatrix},
\label{eq:8.11}\\
E_{ij}  \!\!&\!\!=\!\!&\!\!
\begin{bmatrix}
a_{ij}I_N&-b_{ij}I_N\\
b_{ij}I_N&0
\end{bmatrix}\ \ j \in \N_i.
\nonumber
\end{eqnarray}
It is easy to confirm that $E_{ij}$ is a passive map.

Let us cut the loop of Fig. \ref{fig:3} and
focus on the open-loop system (\ref{eq:8.55})  with input $v_i$
enclosed by the dashed line in Fig. \ref{fig:3}.
Then, we have the following lemma.

\begin{lemma}
\label{lem:18}
Suppose that Assumption \ref{ass:5} holds.
Then, the system (\ref{eq:8.55}) is passive from 
$\bar v_{i}$ to $[{\bar{x}}_i^T\ {\bar{\xi}}_i^T]^T$
with respect to the storage function
$\bar S_i := \frac{1}{2}\|\bar x_i\|^2 + \frac{1}{2}\|\bar \xi_i\|^2$,
where
\begin{eqnarray}
\bar x_i \!\!&\!\!:=\!\!&\!\! x_i-z^*,\ \bar \xi_i := \xi_i - 2\xi_i^*, 
\label{eq:7.2a}\\
\bar v_{i} \!\!&\!\!:=\!\!&\!\! v_{i} - \sum_{j\in \N_i}
v^*_{ij},\
v^*_{ij} :=
b_{ij}
\begin{bmatrix}
\xi_i^* - \xi_j^*\\
0
\end{bmatrix}.
\label{eq:8.12a}
\end{eqnarray}
The notation $\xi_i^* \in \R^{N}$ is a vector such that 
the stack vector $[(\xi_1^*)^T\ \cdots\ (\xi_n^*)^T]^T$ 
is equal to $\xi^*$ in Lemma \ref{lem:2}.
\end{lemma}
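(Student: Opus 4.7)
The plan is to differentiate $\bar S_i$ along the trajectories of (\ref{eq:8.55}) and manipulate the result using the equilibrium relation supplied by Lemma \ref{lem:2} together with the incremental passivity of $\phi_i$ from Lemma \ref{lem:0}. Denote the upper and lower $N$-blocks of $v_i$ by $v_i^x$ and $v_i^\xi$, so that (\ref{eq:8.55}) reads $\dot x_i = v_i^x - \alpha \phi_i(x_i)$ and $\dot \xi_i = v_i^\xi$. Since $z^*$ and $\xi_i^*$ are constants, $\dot{\bar{x}}_i=\dot x_i$ and $\dot{\bar{\xi}}_i=\dot \xi_i$, whence
\begin{eqnarray}
\dot{\bar S}_i = \bar x_i^T v_i^x + \bar \xi_i^T v_i^\xi - \alpha \bar x_i^T \phi_i(x_i).
\nonumber
\end{eqnarray}

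Next I would introduce $\bar v_i$ by adding and subtracting the equilibrium input $\sum_{j\in \N_i}v_{ij}^*$. Because the lower block of each $v_{ij}^*$ is zero by (\ref{eq:8.12a}), $\bar v_i^\xi = v_i^\xi$ while $v_i^x = \bar v_i^x + \sum_{j\in \N_i} b_{ij}(\xi_i^* - \xi_j^*)$. Substituting and then invoking the per-agent consequence of (\ref{eq:2.16}) — namely $\bLI \xi^* = \alpha\phi(x^*)$ (using $\bLP x^* = 0$), whose $i$-th $N$-block reads $\sum_{j\in \N_i} b_{ij}(\xi_i^* - \xi_j^*) = \alpha\phi_i(z^*)$ — combines the gradient contributions into
\begin{eqnarray}
\dot{\bar S}_i = [\bar x_i^T\ \bar \xi_i^T]\, \bar v_i - \alpha (x_i - z^*)^T(\phi_i(x_i) - \phi_i(z^*)).
\nonumber
\end{eqnarray}
Convexity of $f_i$ (Assumption \ref{ass:5}) and Lemma \ref{lem:0} make the final term nonpositive, yielding (\ref{eq:0.2}) with input $\bar v_i$ and output $[\bar x_i^T\ \bar \xi_i^T]^T$, as claimed.

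One point looks potentially puzzling, and it is essentially cosmetic: the shift $\bar \xi_i := \xi_i - 2\xi_i^*$ uses the coefficient $2$ rather than $1$. The value of this coefficient never enters the derivation above, because the $\xi$-component of the equilibrium input is zero, so the same $\bar \xi_i^T v_i^\xi$ term appears on both sides of the passivity inequality irrespective of the shift. The value $2$ is presumably forced by the network-level scattering-based assembly carried out in the subsequent sections, where the shifts adopted by two neighboring agents must be compatible. For this lemma alone, the main obstacle is therefore not technical but expository — the proof reduces to lining up Lemma \ref{lem:2}'s equilibrium relation block by block and recognizing that Lemma \ref{lem:0} provides exactly the sign needed to absorb the gradient residual.
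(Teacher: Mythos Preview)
Your proof is correct and follows essentially the same route as the paper: the paper first subtracts the per-agent equilibrium relation $\sum_{j\in\N_i}b_{ij}(\xi_i^*-\xi_j^*)=\alpha\phi_i(z^*)$ from (\ref{eq:8.55}) to obtain the error dynamics and then differentiates $\bar S_i$, whereas you differentiate first and then substitute the same relation, arriving at the identical identity (\ref{eq:8.6}) before invoking Lemma~\ref{lem:0}. Your remark about the coefficient $2$ in $\bar\xi_i$ being immaterial for this lemma (and dictated only by the later scattering analysis) is also accurate.
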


\begin{proof}
From (\ref{eq:2.16}), it follows that
\begin{eqnarray}
\!\!\!\!\!\!\!\!\!\!
\dot x_i^* \!\!&\!\!=\!\!&\!\!
\sum_{j\in \N_i}a_{ij}(z^* - z^*) - \sum_{j\in \N_i}b_{ij}(\xi_j^* - \xi_i^*) 
-  \alpha \phi_i(z^*)
\nonumber\\
\!\!\!\!\!\!\!\!\!\!
\!\!&\!\!=\!\!&\!\! 
\sum_{j\in \N_i}b_{ij}(\xi_i^* - \xi_j^*) 
-  \alpha \phi_i(z^*),
\label{eq:8.2a}\\
\!\!\!\!\!\!\!\!\!\!
\dot \xi_i^* \!\!&\!\!=\!\!&\!\! \sum_{j\in \N_i}b_{ij}(z^* - z^*) = 0.
\label{eq:8.2b}
\end{eqnarray}
Subtracting (\ref{eq:8.2a}) and (\ref{eq:8.2b}) from (\ref{eq:8.55}) yields
\begin{eqnarray}
\begin{bmatrix}
\dot{\bar{x}}_i\\
\dot{\bar{\xi}}_i
\end{bmatrix}
= \bar v_i -  \alpha 
\begin{bmatrix}
\phi_i(x_i)-\phi_i(z^*)\\
0
\end{bmatrix}.
\label{eq:8.5}
\end{eqnarray}
because of the definition of $\bar v_i$ in (\ref{eq:8.12a}).
The time derivative of $\bar S_i$ along the trajectories of (\ref{eq:8.5}) is then given by
\begin{eqnarray}
\dot{\bar{S}}_i = \begin{bmatrix}
{\bar{x}}_i\\
{\bar{\xi}}_i
\end{bmatrix}^T \bar v_i - \alpha(x_i - z^*)^T(\phi_i(x_i)-\phi_i(z^*)).
\label{eq:8.6}
\end{eqnarray}
Lemma \ref{lem:0} and Assumption \ref{ass:5} prove the lemma.
\end{proof}

Let us now close the loop between (\ref{eq:8.55}) and the controller (\ref{eq:8.11}).
Define 
\begin{eqnarray}
\bar r_{ij} :=
\begin{bmatrix}
\bar r^x_{ij}\\
\bar r^{\xi}_{ij}
\end{bmatrix}
=
 \begin{bmatrix}
r^x_{ij}\\
r^{\xi}_{ij} 
\end{bmatrix} - r^*_{ij}, \
r^*_{ij} :=
\begin{bmatrix}
z^*\\
\xi_i^* + \xi_j^*
\end{bmatrix}.
\label{eq:8.12b}
\end{eqnarray}
and $\bar v_{ij} := v_{ij} - v_{ij}^*$.
Then, from (\ref{eq:8.11}), (\ref{eq:7.2a}) and (\ref{eq:8.12a}), we have
\begin{eqnarray}
\bar v_{ij} \!\!&\!\!=\!\!&\!\! 
E_{ij}\begin{bmatrix}
 r^x_{ij} - x_i\\
r^{\xi}_{ij} - \xi_i
\end{bmatrix} -
b_{ij}
\begin{bmatrix}
\xi_i^* - \xi_j^*\\
0
\end{bmatrix}
\nonumber\\
\!\!&\!\!=\!\!&\!\! 
E_{ij}\begin{bmatrix}
 r^x_{ij} - x_i\\
r^{\xi}_{ij} - \xi_i + (\xi_i^* - \xi^*_j)
\end{bmatrix}
\nonumber\\
\!\!&\!\!=\!\!&\!\!
E_{ij}\begin{bmatrix}
(r^x_{ij}-z^*) - (x_i-z^*)\\
(r^{\xi}_{ij} - (\xi_i^* + \xi^*_j)) - (\xi_i - 2\xi_i^*) 
\end{bmatrix}
=
E_{ij}\begin{bmatrix}
\bar r^x_{ij} - \bar x_i\\
\bar r^{\xi}_{ij} - \bar \xi_i
\end{bmatrix}.
\nonumber
\end{eqnarray}
Substituting this together with $\bar v_i = \sum_{j\in \N_i}\bar v_{ij}$ into (\ref{eq:8.6}) 
proves the 
following passivity-like property of the closed-loop system.
\begin{eqnarray}
\!\!\!\!\!\!
\dot{\bar{S}}_i  \!\!&\!\!=\!\!&\!\!   \sum_{j\in \N_i}
\begin{bmatrix}
{\bar{x}}_i\\
{\bar{\xi}}_i
\end{bmatrix}^T
\begin{bmatrix}
a_{ij}(\bar r^x_{ij} - \bar x_i) + b_{ij}(\bar \xi_i - \bar r^{\xi}_{ij})\\
b_{ij}(\bar r^x_{ij} - \bar x_i)
\end{bmatrix}
\nonumber\\
\!\!\!\!\!\!&& \hspace{1cm}- \alpha(x_i - z^*)^T(\phi_i(x_i)-\phi_i(z^*))
\nonumber\\
\!\!\!\!\!\!
\!\!&\!\!=\!\!&\!\!   \sum_{j\in \N_i}\{a_{ij}
({\bar{x}}_i^T \bar r^x_{ij} - \|{\bar{x}}_i\|^2) + 
b_{ij}(-{\bar{x}}_i^T\bar r^{\xi}_{ij} + {\bar{\xi}}_i^T\bar r^x_{ij})\}
 \nonumber\\
\!\!\!\!\!\!&& \hspace{1cm}- \alpha(x_i - z^*)^T(\phi_i(x_i)-\phi_i(z^*))
\nonumber\\
 \!\!\!\!\!\!
\!\!&\!\!=\!\!&\!\! \sum_{j\in \N_i}
\bar r_{ij}^T \bar v_{ij} - \sum_{j\in \N_i}a_{ij}\|{\bar{x}}_i - \bar r^x_{ij}\|^2
\nonumber\\
\!\!\!\!\!\!&& - \alpha(x_i - z^*)^T(\phi_i(x_i)-\phi_i(z^*))
\leq \sum_{j\in \N_i}
\bar r_{ij}^T \bar v_{ij}. 
\label{eq:8.9}
\end{eqnarray}
Remark that, from (\ref{eq:8.12a}) and (\ref{eq:8.12b}), the following equations hold,
which plays an important role in deriving the subsequent theoretical results in the section.
\begin{eqnarray}
v^*_{ji} = -v^*_{ij},\  r^*_{ji} = r^*_{ij}.
\label{eq:8.20}
\end{eqnarray}

\subsection{Scattering Transformation}

\begin{figure}
\centering
\includegraphics[width=8.4cm]{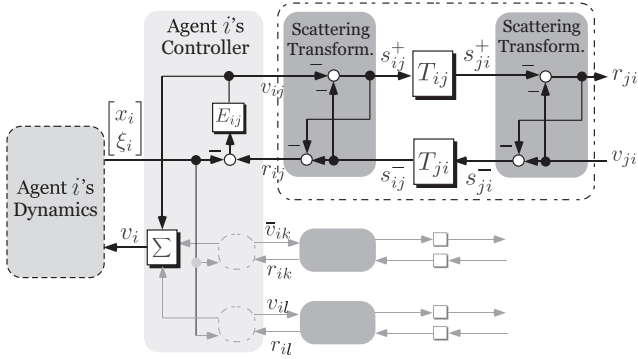}
\caption{Block diagram of agent $i$'s dynamics including the scattering transformation and 
communication delays.
The system enclosed by the dash-dotted line is passive from 
$-[\bar v_{ij}^T\ \bar v_{ji}]^T$ to $[\bar r_{ij}^T\ \bar r_{ji}^T]^T$ (Lemma \ref{lem:com}), where $\bar v_{ij}$ and $\bar r_{ij}$
are defined in (\ref{eq:8.12a}) and (\ref{eq:8.12b}).}
\label{fig:4}
\end{figure}

Our problem is now regarded as a synchronization problem of the variable
$x_i\ (i=1,2,\dots, n)$ to the optimal solution $z^*$.
In the field of output synchronization, communication delays have been investigated by many publications
(See \cite{HCFS_BK} and references therein). 
Among them, this paper focuses on the approach of \cite{CS_CDC06}, where the authors
present a passivity-based control architecture with a form similar to that in the previous subsection.
To be concrete, a closed-loop system is firstly formed with a passive dynamics,
(\ref{eq:8.55}) in the present case, and a collection of passive controllers, (\ref{eq:8.11}) in the present case.
Then, based on the passivity-like property (\ref{eq:8.9}),
the controller output is exchanged with neighboring agents through scattering transformation, introduced below,
and then synchronization is proved. 
This inspires us to exchange the controller outputs $\bar v_{ij}$ instead of $x_i$ and $\xi_i$.
However, in the case of the present problem, $\bar v_{ij}$ ensuring
the passivity-like property (\ref{eq:8.9})
includes $\xi_i^*$ and $\xi_j^*$ in the definitions of 
$\bar \xi_i$ and $\bar r^{\xi}_{ij}$
which are not available for agent $i$.
Thus, we instead let agent $i$ send 
$v_{ij}$
by eliminating such unavailable terms. 
%

Following \cite{CS_CDC06},
the message $v_{ij}$ is sent to neighbors $j \in \N_i$ through
the scattering transformation,
which is well-known to passify the communication block including the delays \cite{HCFS_BK}.
In the present case, the transformation is defined as
\begin{eqnarray}
\!\!\!\!\!\!\!\!\!\!
s_{ij}^{\ra} \!\!&\!\!=\!\!&\!\! \frac{1}{\sqrt{2\eta}}(-v_{ij} + \eta r_{ij}),\ s_{ij}^{\la} = \frac{1}{\sqrt{2\eta }}(-v_{ij} - \eta r_{ij}),
\label{eq:8.13a}\\
\!\!\!\!\!\!\!\!\!\!
s_{ji}^{\la} \!\!&\!\!=\!\!&\!\! \frac{1}{\sqrt{2\eta }}(v_{ji} + \eta r_{ji}),\
s_{ji}^{\ra} = \frac{1}{\sqrt{2\eta }}(v_{ji} - \eta r_{ji}),
\label{eq:8.13b}
\end{eqnarray}
where $r_{ij} := [(r_{ij}^x)^T\ (r_{ij}^{\xi})^T]^T$ and $\eta > 0$.
Then, as illustrated in Fig. \ref{fig:4}, agent $i$ sends $s_{ij}^{\ra}$ instead of $v_{ij}$
and it is received by agent $j$ as $s_{ji}^{\la}$ after the delay $T_{ij}$.
On the other hand, agent $j$ sends $s_{ji}^{\ra}$ 
and it is received by agent $i$ as $s_{ij}^{\la}$ after the delay $T_{ji}$.
Namely, these signals satisfy the following relations.
\begin{eqnarray}
s_{ji}^{\la}(t) =
s_{ij}^{\ra}(t-T_{ij}),\
s_{ij}^{\la}(t) =
s_{ji}^{\ra} (t-T_{ji}).
\label{eq:8.21}
\end{eqnarray}
Once agent $i$ receives $s_{ij}^{\la}$, it computes
$r_{ij}$ from the second equation of (\ref{eq:8.13a})
and adds the resulting $r_{ij}$ to the controller
(\ref{eq:8.11}).
Note that the operation of agent $i$ differs from that of agent $j$, and hence
any pair $(i,j) \in \E$ has to share which operation to be executed, which can be done
by using IDs $i = 1,2,\dots, n$.
An algorithmic description of agent $i$'s operation is shown in Algorithm \ref{alg:1}.

\begin{algorithm}[t]
\caption{Algorithmic Description of in Agent $i$'s Operation in Scattering Transformation}
\label{alg:1}
\begin{algorithmic}
\FOR{$j \in \N_i$}
\IF{$j < i$}
\STATE Set $s_{ij}^{\ra} \la \frac{1}{\sqrt{2\eta}}(-v_{ij} + \eta r_{ij})$ 
and send it to $j$.
\ELSIF{$j > i$}
\STATE Set $s_{ij}^{\ra} \la \frac{1}{\sqrt{2\eta }}(v_{ij} - \eta r_{ij})$
and send it to $j$.
\ENDIF
\ENDFOR
\STATE  Receive Messages $s_{ij}^{\la}$ from all neighbors $j\in \N_i$
\FOR{$j \in \N_i$}
\IF{$j < i$}
\STATE  $r_{ij} \la -(I_{2N}-(1/\eta)E_{ij})^{-1}(\frac{\sqrt{2}}{\sqrt{\eta}} s_{ij}^{\la} + \frac{1}{\eta}E_{ij}[x_i^T\ \xi_i^T]^T)$
\ELSIF{$j > i$}
\STATE  $r_{ij} \la (I_{2N} + (1/\eta)E_{ij})^{-1}(\frac{\sqrt{2}}{\sqrt{\eta}} s_{ij}^{\la} + \frac{1}{\eta}E_{ij}[x_i^T\ \xi_i^T]^T)$
\ENDIF
\ENDFOR
\STATE Set $v_i \la \sum_{j\in \N_i}E_{ij}(r_{ij} - [x_i^T\ \xi_i^T]^T)$ and input $v_i$ to its own dynamics
\end{algorithmic}
\end{algorithm}

The system with the scattering 
transformation (\ref{eq:8.13a}) and (\ref{eq:8.13b}) and the delay blocks (\ref{eq:8.21})
is known to be passive from 
$-[v_{ij}^T\ v_{ji}]^T$ to $[r_{ij}^T\ r_{ji}^T]^T$ \cite{HCFS_BK}.
The following lemma proves that the system is also passive
from the input to output with the biases $v_{ij}^*$, $v_{ji}^*$, $r_{ij}^*$ and $r_{ji}^*$.
For simplicity, we suppose that $s_{ij}^{\ra}(t) = s_{ji}^{\ra}(t) = 0\ {\forall t < 0}$ throughout this paper.

\begin{lemma}
\label{lem:com}
The system consisting of the scattering 
transformation (\ref{eq:8.13a}) and (\ref{eq:8.13b}) and the delay blocks (\ref{eq:8.21})
is passive from $-[\bar v_{ij}^T\ \bar v_{ji}]^T$ to $[\bar r_{ij}^T\ \bar r_{ji}^T]^T$.
\end{lemma}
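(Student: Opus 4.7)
The plan is to imitate the classical energy-based proof of scattering passivity, but applied to signals shifted by the equilibrium values. First I would introduce equilibrium scattering signals $s_{ij}^{\ra,*}, s_{ij}^{\la,*}, s_{ji}^{\ra,*}, s_{ji}^{\la,*}$ obtained by plugging the constants $v_{ij}^*, v_{ji}^*, r_{ij}^*, r_{ji}^*$ into (\ref{eq:8.13a})--(\ref{eq:8.13b}), and define shifted scattering variables $\bar s_{ij}^{\ra} := s_{ij}^{\ra} - s_{ij}^{\ra,*}$, and analogously for the three other scattering signals. Because (\ref{eq:8.13a})--(\ref{eq:8.13b}) are linear, these deviations obey the same transformation with $v_{ij}, r_{ij}, v_{ji}, r_{ji}$ replaced by their barred counterparts $\bar v_{ij}, \bar r_{ij}, \bar v_{ji}, \bar r_{ji}$.

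The next step, which is really the crux of the argument, is to verify that the equilibrium is itself compatible with the delay channel (\ref{eq:8.21}). Substituting $v_{ji}^* = -v_{ij}^*$ and $r_{ji}^* = r_{ij}^*$ from (\ref{eq:8.20}) into the equilibrium forms of (\ref{eq:8.13a})--(\ref{eq:8.13b}) immediately gives the identities $s_{ji}^{\la,*} = s_{ij}^{\ra,*}$ and $s_{ij}^{\la,*} = s_{ji}^{\ra,*}$. Subtracting these from (\ref{eq:8.21}) then shows that the shifted signals still obey $\bar s_{ji}^{\la}(t) = \bar s_{ij}^{\ra}(t - T_{ij})$ and $\bar s_{ij}^{\la}(t) = \bar s_{ji}^{\ra}(t - T_{ji})$.

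Having established this, I would take the storage function
\begin{equation*}
\bar V(t) = \frac{1}{2}\!\int_{t-T_{ij}}^{t}\!\!\|\bar s_{ij}^{\ra}(\tau)\|^2 d\tau + \frac{1}{2}\!\int_{t-T_{ji}}^{t}\!\!\|\bar s_{ji}^{\ra}(\tau)\|^2 d\tau.
\end{equation*}
Because the convention $s_{ij}^{\ra}(t) \equiv s_{ji}^{\ra}(t) \equiv 0$ for $t<0$ gives $\bar s_{ij}^{\ra}(\tau) = -s_{ij}^{\ra,*}$ and $\bar s_{ji}^{\ra}(\tau) = -s_{ji}^{\ra,*}$ on that interval, $\bar V(0)$ is a finite nonnegative constant. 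Differentiating and applying the shifted delay relations yields
\begin{equation*}
\dot{\bar V} = -\frac{1}{2}\bigl(\|\bar s_{ij}^{\la}\|^2 - \|\bar s_{ij}^{\ra}\|^2\bigr) - \frac{1}{2}\bigl(\|\bar s_{ji}^{\la}\|^2 - \|\bar s_{ji}^{\ra}\|^2\bigr).
\end{equation*}
A direct expansion of the scattering transformation applied to the barred variables then produces $\|\bar s_{ij}^{\la}\|^2 - \|\bar s_{ij}^{\ra}\|^2 = 2\bar v_{ij}^T \bar r_{ij}$ and $\|\bar s_{ji}^{\la}\|^2 - \|\bar s_{ji}^{\ra}\|^2 = 2\bar v_{ji}^T \bar r_{ji}$, so $\dot{\bar V} = -\bar v_{ij}^T \bar r_{ij} - \bar v_{ji}^T \bar r_{ji}$. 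Integrating from $0$ to $t$ delivers the required passivity inequality with input $-[\bar v_{ij}^T\ \bar v_{ji}^T]^T$ and output $[\bar r_{ij}^T\ \bar r_{ji}^T]^T$. The only real obstacle is the algebraic compatibility check of the second paragraph; once (\ref{eq:8.20}) is used to transport the bias cleanly across the delay line, the remainder is the textbook scattering energy calculation.
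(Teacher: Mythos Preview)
Your proof is correct and is essentially the same argument as the paper's. The paper builds the equilibrium biases directly into the storage function $V_{ij}$ in (\ref{eq:8.14}) rather than naming the shifted scattering variables $\bar s_{ij}^{\ra}$, etc., but once one unravels the constants (using (\ref{eq:8.20}) exactly as you do) the paper's $V_{ij}$ reduces to your $\bar V$ plus the same finite offset, and the derivative computation leading to (\ref{eq:8.16}) is identical to yours.
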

\begin{proof}
Define
\begin{eqnarray}
V_{ij} \!\!&\!\!:=\!\!&\!\ \frac{1}{2}\int^t_0 \Big(\|s_{ij}^{\ra} + 
\frac{1}{\sqrt{2\eta }}(v^*_{ij} - \eta r^*_{ij})\|^2 
\nonumber\\
\!\!&\!\!\!\!&\!\
- \|s_{ji}^{\la} + \frac{1}{\sqrt{2\eta }}(v^*_{ij} - \eta r^*_{ij})\|^2
\nonumber\\
\!\!&\!\!\!\!&\!\
+\|s_{ji}^{\ra} + \frac{1}{\sqrt{2\eta }}(v^*_{ij} + \eta r^*_{ij})\|^2 
\nonumber\\
\!\!&\!\!\!\!&\!\
- \|s_{ij}^{\la} + \frac{1}{\sqrt{2\eta }}(v^*_{ij} + \eta r^*_{ij})\|^2\Big)d\tau
\nonumber\\
\!\!&\!\!\!\!&\!\
+ \frac{T_{ij}}{4\eta }\|v^*_{ij} - \eta r^*_{ij}\|^2 + \frac{T_{ji}}{4\eta }\|v^*_{ij} + \eta r^*_{ij}\|^2.
\label{eq:8.14}
\end{eqnarray}
Since $v^*_{ij}$ and $r^*_{ij}$ are both constant, 
(\ref{eq:8.14}) is rewritten as
\begin{eqnarray}
\hspace{-.5cm}V_{ij} \!\!&\!\!=\!\!&\!\ \frac{1}{2}\int^t_{t-T_{ij}} \|s_{ij}^{\ra} + 
\frac{1}{\sqrt{2\eta }}(v^*_{ij} - \eta r^*_{ij})\|^2 d\tau
\nonumber\\
\hspace{-.5cm}\!\!&\!\!\!\!&\!\
+ \frac{1}{2}\int^t_{t-T_{ji}}
\|s_{ji}^{\ra} + \frac{1}{\sqrt{2\eta }}(v^*_{ij} + \eta r^*_{ij})\|^2 d\tau\geq 0.
\nonumber
\end{eqnarray}

Using (\ref{eq:8.13a}) and (\ref{eq:8.13b}), the time derivative of $V_{ij}$ is given by
\begin{eqnarray}
\dot V_{ij} \!\!&\!\!=\!\!&\!\ \frac{1}{4\eta }\Big(
\left\|(-v_{ij} + v^*_{ij}) + \eta  (r_{ij}-r^*_{ij})\right\|^2 
\nonumber\\
\!\!&\!\!\!\!&\!\
- \left\|(v_{ji} + v^*_{ij}) + \eta (r_{ji} - r^*_{ij})\right\|^2
\nonumber\\
\!\!&\!\!\!\!&\!\
+\left\|(v_{ji} + v^*_{ij}) - \eta (r_{ji} - r^*_{ij})\right\|^2 
\nonumber\\
\!\!&\!\!\!\!&\!\
- \left\|(-v_{ij} + v^*_{ij}) - \eta  (r_{ij} - r^*_{ij})\right\|^2\Big).
\nonumber
\end{eqnarray}
From (\ref{eq:8.12a}), (\ref{eq:8.12b})  and (\ref{eq:8.20}), we also have
\begin{eqnarray}
\dot V_{ij} \!\!&\!\!=\!\!&\!\ \frac{1}{4\eta }\Big(
\left\|-\bar v_{ij} + \eta  \bar r_{ij}\right\|^2 
- \left\|\bar v_{ji} + \eta \bar r_{ji} \right\|^2
\nonumber\\
\!\!&\!\!\!\!&\!\
+\left\|\bar v_{ji} - \eta \bar r_{ji} \right\|^2 
- \left\|-\bar v_{ij} - \eta  \bar r_{ij}\right\|^2\Big)
\nonumber\\
\!\!&\!\!=\!\!&\!\ - \bar v_{ij}^T \bar r_{ij} - \bar v_{ji}^T \bar r_{ji}.
\label{eq:8.16}
\end{eqnarray}
This completes the proof.
\end{proof}

\subsection{Convergence Analysis}

We are now ready to prove the convergence result.
\begin{lemma}
\label{thm:8}
Consider the system (\ref{eq:8.1a}) and (\ref{eq:8.1b}) for all $i$, and
the scattering transformation (\ref{eq:8.13a}) and (\ref{eq:8.13b}) and the delays (\ref{eq:8.21}) for all $j\in \N_i$ and all $i$.
If Assumptions \ref{ass:2} and \ref{ass:5} hold, 
then $x_i$ asymptotically converges to the optimal solution $z^*$
to (\ref{eq:2.1}) for all $i=1,2,\dots,n$.
\end{lemma}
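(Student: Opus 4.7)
The plan is to construct a composite Lyapunov--Krasovskii functional
$$W(t) := \sum_{i=1}^n \bar S_i(t) + \sum_{\{i,j\}\in \E} V_{ij}(t),$$
where $\bar S_i$ is the local storage from Lemma~\ref{lem:18} and $V_{ij}$ is the edge functional from the proof of Lemma~\ref{lem:com} (one copy per undirected edge). Nonnegativity of each summand makes $W \geq 0$. The aim is to show $\dot W \leq 0$ with dissipation that forces $x_i \to z^*$.

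First, I would sum inequality (\ref{eq:8.9}) over all $i$ and the identity (\ref{eq:8.16}) over all undirected edges. The ordered-pair cross term $\sum_i \sum_{j\in\N_i} \bar r_{ij}^T \bar v_{ij}$ appearing in the former is cancelled exactly by the sum of edge derivatives, since each unordered edge $\{i,j\}$ contributes $-\bar v_{ij}^T \bar r_{ij} - \bar v_{ji}^T \bar r_{ji}$. What remains is
$$\dot W \leq -\alpha \sum_{i=1}^n (x_i-z^*)^T(\phi_i(x_i)-\phi_i(z^*)) - \sum_{i=1}^n \sum_{j\in\N_i} a_{ij}\|\bar x_i - \bar r^x_{ij}\|^2 \leq 0,$$
where Lemma~\ref{lem:0} bounds the first term. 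Because $W(0)$ is finite under the convention $s_{ij}^{\rightarrow}(t)=s_{ji}^{\rightarrow}(t)=0$ for $t<0$, boundedness of $W$ propagates to boundedness of $\bar x_i, \bar \xi_i$, and hence of $x_i, \xi_i$, for all $t \geq 0$.

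Next, integrating $\dot W \leq 0$ over $[0,\infty)$ yields
$$\int_0^\infty (x_i(\tau)-z^*)^T(\phi_i(x_i(\tau))-\phi_i(z^*))\, d\tau < \infty$$
for each $i$, and the integrand is nonnegative by Lemma~\ref{lem:0} and zero only at $x_i=z^*$ by strict convexity in Assumption~\ref{ass:5}. To upgrade integrability to pointwise convergence, I would apply Barbalat's lemma, which requires uniform continuity of the integrand. This follows if $\dot x_i$ is bounded: boundedness of $x_i,\xi_i$, combined with the scattering relations (\ref{eq:8.13a})--(\ref{eq:8.13b}) and the delay identities (\ref{eq:8.21}) that transport bounded histories of the $v_{ji}$ across $T_{ji}$, gives boundedness of the received signals $r_{ij}$, hence of the controller $v_i$; local Lipschitz continuity of $\phi_i$ on this bounded set then bounds the right-hand side of (\ref{eq:8.1a}). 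Strict convexity finally promotes $(x_i-z^*)^T(\phi_i(x_i)-\phi_i(z^*)) \to 0$ into $x_i(t) \to z^*$.

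The main obstacle is precisely the uniform-continuity step. LaSalle's invariance principle, the clean tool used in the delay-free Theorem~\ref{thm:1}, does not directly apply to this infinite-dimensional retarded system, and the chain ``$x_i,\xi_i$ bounded $\Rightarrow$ scattering variables with bounded histories $\Rightarrow$ $r_{ij}$ bounded $\Rightarrow$ $v_i$ bounded $\Rightarrow$ $\dot x_i$ bounded'' must be traced carefully through the scattering transformation. Once the argument is in place, however, Assumptions~\ref{ass:2} and \ref{ass:5} are used only lightly: connectivity of $G$ enters implicitly through Lemma~\ref{lem:2} (existence of $\xi^*$ that defines the shifts used throughout), and strict convexity is what turns the $L^1$ integrability of the Bregman-type mismatch into a pointwise convergence to the unique minimizer $z^*$.
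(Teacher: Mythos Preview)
Your overall strategy is sound, but it diverges from the paper's proof in one important respect. You explicitly dismiss LaSalle's principle for the delayed system and opt for Barbalat's lemma. The paper does the opposite: it invokes the extension of LaSalle's invariance principle for retarded functional differential equations (the reference \cite{hale}), applied to the state $X_t(\theta)=X(t+\theta)$ on $[-\max_{i,j}T_{ij},0]$. With the same functional $V=\sum_i \bar S_i+\sum_{(i,j)\in\E}V_{ij}$ and the same dissipation identity you derived, the invariance principle immediately localizes the limit set to trajectories where $\dot V\equiv 0$; strict convexity (Assumption~\ref{ass:5}) then forces $x_i\equiv z^*$ on that set. No boundedness of $r_{ij}$ or of $\dot x_i$ is needed.

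Your Barbalat route is not wrong---in fact it is exactly the argument the paper deploys later for the \emph{constrained} case (Theorem~\ref{thm:9}), where the projection in $\psi$ obstructs a clean LaSalle argument. There, the paper makes precise the step you flag as ``the main obstacle'': from (\ref{eq:8.11}), (\ref{eq:8.13a})--(\ref{eq:8.13b}) and (\ref{eq:8.21}) one derives the difference equations $r_{ij}(t)=\bar E_{ij}^2 r_{ij}(t-T_{ij}-T_{ji})+\beta_{ij}(t)$ with $\bar E_{ij}=(E_{ij}+\eta I)^{-1}(E_{ij}-\eta I)$, checks that all eigenvalues of $\bar E_{ij}^2$ lie strictly inside the unit circle, and concludes $r_{ij}\in\mathcal L_\infty$ and hence $\dot x_i\in\mathcal L_\infty$. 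Your informal chain ``bounded $x_i,\xi_i\Rightarrow$ bounded $r_{ij}$'' hides this stability argument; without it the implication does not follow, since $r_{ij}$ is defined only implicitly through the scattering feedback. So your proof is completable, but the paper's LaSalle-for-delay-systems argument is both shorter and avoids the eigenvalue computation entirely for the unconstrained lemma at hand.
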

\begin{proof}
Define 
\[
V := \sum_{i=1}^n \bar S_i + \sum_{(i,j)\in \E}V_{ij}.
\]
Then, combining (\ref{eq:8.9}) and (\ref{eq:8.16}), we obtain
\begin{eqnarray}
\dot V 
\!\!&\!\!=\!\!&\!\! - \sum_{i=1}^n\sum_{j\in \N_i}a_{ij}\|{\bar{x}}_i - \bar r^x_{ij}\|^2 
\nonumber\\
\!\!&\!\!\!\!&\!\!
 - \alpha\sum_{i=1}^n (x_i - z^*)^T(\phi_i(x_i)-\phi_i(z^*)) \leq 0.
\label{eq:8.17}
\end{eqnarray}
Now, define $X := [x^T\ \xi^T]^T$ and $X_t$ such that $X_t(\theta) = X(t+\theta)$ for $\theta \in [-\max_{i,j}T_{ij}, 0]$.
Then, the extension of the LaSalle's principle for time delay systems \cite{hale} is
applicable and any solution $X_t$ to the system converges to the largest invariant set in
the set of trajectories satisfying $\dot V \equiv 0$.
Under Assumption \ref{ass:5}, the gradient $\phi_i$ satisfies 
$(x_i - z^*)^T(\phi_i(x_i)-\phi_i(z^*)) > 0$ whenever $x_i \neq z^*$.
Thus, $\dot V = 0$ means $x_i \equiv z^*$ for all $i$,
and hence we can conclude $x_i \to z^*\ {\forall i} = 1,2,\dots, n$.
\end{proof}

The above proof, in particular (\ref{eq:8.17}), means that the communication delay blocks 
are successfully integrated with the distributed optimization algorithm 
as interconnections of passive systems.
However, Assumption \ref{ass:5} requires that $f_i$ is strictly convex on
all elements of $z$,
which is fairly strong and may limit applications.
The assumption can be relaxed as below.
Suppose now that $f_i$ depends only on $(z_l)_{l \in {\mathcal Z}_i}$ for a subset
$ {\mathcal Z}_i \subseteq \{1,2,\dots, N\}$, where $z_l$ is the $l$-th element of $z$.
Then, we assume the following.
\begin{assumption}
\label{ass:11}
The functions $f_1, f_2, \dots, f_n$ are continuously differentiable and their gradients are locally Lipschitz. 
Every $f_i\ (i = 1,2,\dots,n)$ is strictly convex in $(z_l)_{l \in {\mathcal Z}_i}$, where $ {\mathcal Z}_i \subseteq \{1,2,\dots, N\}$. 
Also, $\cup_{i=1}^N {\mathcal Z}_i = \{1,2,\dots, N\}$ holds.
\end{assumption}
Remark that  $ {\mathcal Z}_i$ can be empty for some $i$.

Under Assumption \ref{ass:11} instead of \ref{ass:5}, we have the following theorem.
\begin{theorem}
\label{thm:10}
Consider the same system as Lemma \ref{thm:8}.
If Assumptions \ref{ass:2} and \ref{ass:11} hold, 
then $x_i$ asymptotically converges to the optimal solution $z^*$ to
(\ref{eq:2.1}) for all $i = 1,2,\dots, n$.
\end{theorem}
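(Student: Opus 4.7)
The plan is to reuse, essentially verbatim, the Lyapunov--Krasovskii analysis used to prove Lemma~\ref{thm:8} and to upgrade only the invariant-set step. Neither the functional $V = \sum_i \bar S_i + \sum_{(i,j)\in\E} V_{ij}$, the dissipation bound (\ref{eq:8.17}), nor the LaSalle invariance principle for time-delay systems uses strict convexity of any $f_i$; they rely only on incremental passivity of each $\phi_i$, which by Lemma~\ref{lem:0} follows from plain convexity. Hence under Assumption~\ref{ass:11} every solution $X_t$ still converges to the largest invariant set $\Omega$ contained in $\{\dot V \equiv 0\}$, and the remaining task is to show $\Omega \subseteq \{x_i = z^*\ \forall i\}$.

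On $\Omega$ the identity (\ref{eq:8.17}) forces, along every invariant trajectory, $\bar x_i(t) = \bar r^x_{ij}(t)$ on each edge $(i,j)\in\E$ and $(x_i - z^*)^T(\phi_i(x_i) - \phi_i(z^*)) \equiv 0$ at each node. Since $f_i$ depends only on the coordinates indexed by $\mathcal{Z}_i$, the gradient $\phi_i$ is strictly monotone exactly in those coordinates and identically zero in the complement, so the inner-product identity yields $(x_i(t))_l \equiv z^*_l$ only for $l \in \mathcal{Z}_i$; the coordinates indexed by $\mathcal{Z}_i^c$ are a priori free. To pin them I would propagate through the graph: for any $l \in \{1,\dots,N\}$ the union hypothesis $\cup_i \mathcal{Z}_i = \{1,\dots,N\}$ delivers an agent $i_l$ with $l \in \mathcal{Z}_{i_l}$, whence $(x_{i_l})_l \equiv z^*_l$. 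Combining the edge identity $\bar x_i = \bar r^x_{ij}$ with the scattering relations (\ref{eq:8.13a})--(\ref{eq:8.21}) and the symmetry (\ref{eq:8.20}), one obtains on $\Omega$ a time-shifted consensus relation $(x_i(t))_l = (x_j(t - T_{ij}))_l$ on each edge. Connectedness of $G$ (Assumption~\ref{ass:2}) then chains this equality from $i_l$ to every agent, giving $(x_i)_l \equiv z^*_l$ for all $i$; ranging over $l$ yields $\Omega \subseteq \{x_i = z^*\}$ and LaSalle's principle concludes the proof.

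The main obstacle is precisely this propagation step: rigorously reducing the closed-loop scattering dynamics on $\Omega$ to a pure delay channel, so that the invariant-set identity $\bar x_i = \bar r^x_{ij}$ carries over to $(x_i(t))_l = (x_j(t-T_{ij}))_l$. In the delay-free argument of Theorem~\ref{thm:1} this was immediate because $x^T \bLP x = 0$ forces consensus outright; here one must solve the coupled steady-state relations produced by Algorithm~\ref{alg:1}, exploiting that $V_{ij}$ is constant along any invariant trajectory (so no residual energy is stored in the delay line) and that the scattering transformation is invertible. Once that reduction is in place, the combinatorial propagation along a path from $i_l$ to any $i$ is routine via Assumption~\ref{ass:2}, and the coordinate-wise union property of Assumption~\ref{ass:11} closes the argument.
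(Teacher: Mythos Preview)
Your overall strategy matches the paper's: reuse the Lyapunov--Krasovskii functional and the dissipation bound (\ref{eq:8.17}), invoke LaSalle for delay systems, and on the invariant set use strict convexity in the $\mathcal{Z}_i$-coordinates to pin $(x_i)_l$ at $z^*_l$ for $l\in\mathcal{Z}_i$, then propagate along the graph. Where your proposal diverges from a working argument is in the propagation step itself.

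The claimed reduction to a ``pure delay channel'' $(x_i(t))_l = (x_j(t-T_{ij}))_l$ is not what the scattering relations actually yield on the invariant set. Writing out (\ref{eq:8.13a})--(\ref{eq:8.21}) together with (\ref{eq:8.11}) gives, even after substituting $r^x_{ij}=x_i$, a residual coupling through the $\xi$-components: the $x$-channel equations (the paper's (\ref{eq:8.33a})--(\ref{eq:8.33b})) contain terms $b_{ij}(r^\xi_{ij}-\xi_i)$ and their delayed counterparts that do not vanish by the edge identity alone. Neither ``$V_{ij}$ constant'' nor invertibility of the scattering map removes them. The key observation you are missing is that $\bar x_i=\bar r^x_{ij}$ on every edge forces $\dot\xi_i=\sum_{j\in\N_i}b_{ij}(r^x_{ij}-x_i)=0$ via (\ref{eq:8.1b}); hence each $\xi_i$ converges. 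This is exactly what the paper exploits: convergence of $\xi_i$ gives $\xi_i(t)-\xi_i(t-\bar T_{ij})\to 0$, and combined with the difference of the $\xi$-channel equations one also gets $r^\xi_{ij}(t)-r^\xi_{ij}(t-\bar T_{ij})\to 0$. Only then can the $\xi$-terms be eliminated from the $x$-channel, and what survives is not a one-step delay identity but the averaged relation
\[
\lim_{t\to\infty}\bigl(r^x_{ij}(t)+r^x_{ij}(t-\bar T_{ij})-2\,r^x_{ji}(t-T_{ji})\bigr)=0,
\]
which, together with $r^x_{ij}=x_i$ and the already-established $(x_i)_l\to z^*_l$, forces $(r^x_{ji})_l\to z^*_l$ and hence $(x_j)_l\to z^*_l$. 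Iterating along paths in $G$ then works as you say. So your plan is salvageable, but the missing ingredient is the $\dot\xi_i\equiv 0$ observation and the ensuing two-step elimination; the direct one-edge delay relation you posited does not hold.
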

\begin{proof}
Suppose that $l \in {\mathcal Z}_i$.
It is then sufficient to prove that the $l$-th element of $x_{j}$ converges to $z^*_l$ for all $j$, 
where $z^*_l \in \R$ is the $l$-th element of $z^*$. 
Similarly to Lemma \ref{thm:8}, LaSalle's principle for time delay systems \cite{hale} is applicable and hence
we consider the set of solutions satisfying $\dot V \equiv 0$.
In the set, $\|{\bar{x}}_i - \bar r^x_{ij}\| = 0\ {\forall j}\in \N_i$ holds, which means $\dot \xi = 0$.
Thus, LaSalle's principle implies, under Assumption \ref{ass:11} and $l \in {\mathcal Z}_i$, that
\begin{eqnarray}
\mbox{(i) }\ \ \lim_{t\to \infty}(x_i - r^x_{ij}) = 0\ {\forall j}\in \N_i,\ {\forall i}=1,2,\dots,n
\label{eq:8.32a}
\end{eqnarray}
(ii) the $l$-th element of $x_i$ converges to $z^*_l$,
and (iii) $\xi_i$ has a limit $\lim_{t\to \infty} \xi_{i} $.
From (\ref{eq:8.11}) and (\ref{eq:8.13a})--(\ref{eq:8.21}),
we obtain
\begin{eqnarray}
r_{ij}^x  \!\!&\!\!=\!\!&\!\! r_{ji}^x(t-T_{ji}) + (1/\eta)\{
- a_{ij}d_{ij} + b_{ij}(r_{ij}^{\xi} - \xi_i) 
\nonumber\\
\!\!&\!\!\!\!&\!\!  \hspace{1.5cm}+ b_{ij}(r_{ji}^{\xi}(t-T_{ji}) - \xi_j(t-T_{ji}))\},
\label{eq:8.33a}\\
r_{ji}^x  \!\!&\!\!=\!\!&\!\! r_{ij}^x(t-T_{ij}) + (1/\eta)\{
- a_{ij}d_{ji} + b_{ij}(r_{ji}^{\xi} - \xi_j) 
\nonumber\\
\!\!&\!\!\!\!&\!\!  \hspace{1.5cm} + b_{ij}(r_{ij}^{\xi}(t-T_{ij}) - \xi_i(t-T_{ij}))\},
\label{eq:8.33b}\\
r^{\xi}_{ij} \!\!&\!\!=\!\!&\!\! r^{\xi}_{ji}(t-T_{ji})- (b_{ij}/\eta)d_{ij},
\label{eq:8.33c}\\
r^{\xi}_{ji} \!\!&\!\!=\!\!&\!\! r^{\xi}_{ij}(t-T_{ij})- (b_{ij}/\eta)d_{ji},
\label{eq:8.33d}
\end{eqnarray}
with 
\begin{eqnarray}
d_{ij}(t) \!\!&\!\!:=\!\!&\!\! (r_{ij}^x - x_i)  + (r_{ji}^x(t-T_{ji}) - x_j(t-T_{ji})),
\nonumber\\
d_{ji}(t) \!\!&\!\!:=\!\!&\!\! (r_{ji}^x - x_j) + (r_{ij}^x(t-T_{ij}) - x_i(t-T_{ij})).
\nonumber
\end{eqnarray}
Summing (\ref{eq:8.33d}) at time $t-T_{ji}$ and (\ref{eq:8.33c}) yields
\begin{eqnarray}
r^{\xi}_{ij} - r^{\xi}_{ij}(t-\bar T_{ij}) = - (b_{ij}/\eta)(d_{ij} + d_{ji}(t-T_{ji})),
\label{eq:8.37}
\end{eqnarray}
where $\bar T_{ij} := T_{ij} + T_{ji}$.
From (\ref{eq:8.32a}), 
\begin{eqnarray}
\lim_{t\to \infty} d_{ij} = 0 \mbox{ and } \lim_{t\to \infty} d_{ji}(t-T_{ji}) = 0
\label{eq:8.100}
\end{eqnarray}
hold.
Thus, taking the limit of (\ref{eq:8.37}),
it follows
\begin{eqnarray}
\lim_{t\to \infty} (r^{\xi}_{ij} - r^{\xi}_{ij}(t-\bar T_{ij})) = 0.
\label{eq:8.34}
\end{eqnarray}
Subtracting (\ref{eq:8.33b}) at time $t-T_{ji}$ from (\ref{eq:8.33a}) yields
\begin{eqnarray}
r_{ij}^x  \!\!&\!\!+\!\!&\!\! r_{ij}^x(t-\bar T_{ij}) - 2 r_{ji}^x (t-T_{ji}) 
\nonumber\\
\!\!&\!\!=\!\!&\!\! (1/\eta)\{-a_{ij}d_{ij} 
+ a_{ij}d_{ji}(t-T_{ji}) + b_{ij}(r_{ij}^{\xi} - \xi_i)  
\nonumber\\
 \!\!&\!\!\!\!&\!\! \hspace{1cm}
-b_{ij} (r_{ij}^{\xi}(t-\bar T_{ij}) - \xi_i(t-\bar T_{ij}))\}.
\label{eq:8.35}
\end{eqnarray}
Since $\xi_i$ converges to a constant from (iii), we obtain
\begin{eqnarray}
\lim_{t\to \infty} (\xi_i - \xi_i(t-\bar T_{ij})) = 0.
\label{eq:8.36}
\end{eqnarray}
Taking the limit of (\ref{eq:8.35}) and using (\ref{eq:8.100}),
(\ref{eq:8.34}), and (\ref{eq:8.36}), we have
\begin{eqnarray}
\lim_{t\to \infty}(r_{ij}^x  + r_{ij}^x(t-\bar T_{ij}) - 2 r_{ji}^x (t-T_{ji})) = 0. 
\label{eq:8.38}
\end{eqnarray}
It is confirmed from (ii) and (\ref{eq:8.32a}) that
the $l$-th element of $\lim_{t\to \infty}(r_{ij}^x  + r_{ij}^x(t-\bar T_{ij}))$ in (\ref{eq:8.38}) is equal to $2 z^*_l$, which implies that
the $l$-th element of $r_{ji}^x$ converges to $z^*_l$.
This and (\ref{eq:8.32a}) also mean that the $l$-th element of $x_j$ converges to $z^*_l$.
Following the same procedure for a neighbor $k$ of $i$ or $j$, the $l$-th element of $x_k$ is proved to converge to $z^*_l$.
Repeating the same process, we can prove that the $l$-th element of $x_{j}$ converges to $z^*_l$ for all $j = 1,2,\dots, n$ 
because of Assumption \ref{ass:2}.
Convergence of the other elements is also proved in the same way.
\end{proof}

\section{Extension to Constrained Distributed Optimization}
\label{sec:4}

In this section, we consider the following constrained optimization problem.
\begin{eqnarray}
\min_{z\in \R^N} f(z) \mbox{ subject to } g_i(z) \leq 0\ {\forall i} = 1,2,\dots, n,
\label{eq:3.1}
\end{eqnarray}
where $f$ is defined in the same way as (\ref{eq:2.1}).
The functions $f_i$ and
$g_i: \R^N \to \R^{m_i}\ (i=1,2,\dots, n)$ are assumed to be 
private information of agent $i$ and 
the other agents do not have access to these functions.
In this section, we also assume that the optimal solution exists
and the minimal value of $f$ is finite.
Denoting the $l$-th element of $g_i$ by $g_{il} (l=1,2,\dots, m_i): \R^N \to \R$,
we assume the following assumptions.
\begin{assumption}
\label{ass:3}
The functions $f_i\ (i=1,2,\dots, n)$ are convex and twice differentiable.
The functions $g_{il}\ (l=1,2,\dots, m_i,\ i=1,2,\dots, n)$
are convex and continuously differentiable and their gradients, denoted by 
$\Gamma_i := \nabla g_i \in \R^{N \times m_i}\ (i=1,2,\dots, n)$,
are locally Lipschitz.
In addition, there exists $z$ such that $g_i(z) < 0 \ \ {\forall i}=1,2,\dots, n$.
\end{assumption}
\begin{assumption}
\label{ass:4}
The function $f$ is strictly convex.
\end{assumption}
Note that, if Assumptions \ref{ass:3} and \ref{ass:4} are satisfied,
the optimal solution $z^*$ to (\ref{eq:3.1}) is uniquely determined \cite{B_BK}.
It is well-known that, under these assumptions, 
$z^*$ is the optimal solution
to (\ref{eq:3.1}) if and only if there exist $\lambda_i^*\in \R^{m_i}\ (i=1,2,\dots, n)$ satisfying
the KKT condition \cite{B_BK}:
\begin{eqnarray}
\nabla f(z^*) + \sum_{i=1}^n  \Gamma_i(z^*) \lambda_i^* = 0,
\label{eq:3.2a}\\
\lambda_i^* \geq 0,\ g_i(z^*) \leq 0\ \ {\forall i}=1,2,\dots, n,
\label{eq:3.2b}\\
\lambda_{il}^* g_{il}(z^*) = 0 \ \ {\forall l}=1,2,\dots, m_i, \ {\forall i}=1,2,\dots, n,
\label{eq:3.2c}
\end{eqnarray}
where $\lambda_{il}^*$ is the $l$-th element of $\lambda_{i}^*$.
We finally define the Lagrangian $H$ for the problem (\ref{eq:3.1}) as
\begin{eqnarray}
H(z,\lambda) \!\!&\!\!:=\!\!&\!\! \sum_{i=1}^n H_i(z,\lambda_i),
\label{eq:3.3a}\\
H_i(z,\lambda_i) \!\!&\!\!:=\!\!&\!\! f_i(z) + \lambda_i^T g_i(z).
\nonumber
\end{eqnarray}



\subsection{PI Consensus-Based Distributed Optimization for Constrained Problem}
\label{sec:4.1}



Let us now define the dual function $\min_z H(z,\lambda)$ and $g(z) := [g_1^T(z)\ \cdots\ g_n^T(z)]^T \in \R^m\ (m := \sum_{i=1}^nm_i)$.
Then, inspired by the dual ascent method,
we consider the following dynamical system formulated based on \cite{arcak}.
\begin{eqnarray}
\dot \rho \!\!&\!\!=\!\!&\!\! 
\psi(\rho,g(\hat z^*(\rho))),\ \ \rho(0) \geq 0,
\label{eq:5.1}
\end{eqnarray}
where $\rho \in \R^{m}$, and 
\begin{eqnarray}
\hat z^*(\rho) := \arg\min_z H(z,\rho).
\label{eq:5.2}
\end{eqnarray}
Given vectors $g = [g_1^T\ \cdots\ g_n^T]^T \in \R^m$,  $\rho = [\rho^T_1\ \cdots\ \rho^T_n]^T\in \R^m$,
$g_i = [g_{i1}\ \cdots\ g_{im_i}]^T \in \R^{m_i}$, and $\rho_i = [\rho_{i1}\ \cdots\ \rho_{im_i}]^T$
$(i=1,2,\dots,n)$,
the functions
$\psi: \R^m\times \R^m \to \R^m$, 
$\psi_i: \R^{m_i}\times \R^{m_i} \to \R^{m_i}$
and $\psi_{il}: \R\times \R \to \R$ are defined as
\begin{eqnarray}
\hspace{-.5cm}&& \psi(\rho,g) := \begin{bmatrix}
\psi_{1}(\rho_{1},g_{1})\\
\vdots\\
\psi_{n}(\rho_{n},g_{n})
\end{bmatrix}
, \psi_i(\rho_i,g_{i})  := \begin{bmatrix}
\psi_{i1}(\rho_{i1},g_{i1})\\ 
\vdots\\ 
\psi_{im_i}(\rho_{im_i},g_{im_i})
\end{bmatrix}
\nonumber\\
\hspace{-.5cm}&&
\psi_{il}(\rho_{il},g_{il}) :=
\left\{
\begin{array}{ll}
0,&\mbox{if } \rho_{il} = 0\ \&\ g_{il} < 0 \\
g_{il},&\mbox{otherwise}
\end{array}
\right..
\label{eq:3.6}
\end{eqnarray}

Extracting the dynamics of $\rho_i$ from (\ref{eq:5.1}) yields
\begin{eqnarray}
\dot \rho_{i} = \psi_i(\rho_i,g_i(\hat z^*(\rho))),\ \rho_i(0) \geq 0,
\label{eq:3.5}
\end{eqnarray}
which does not depend on the information of other $g_j\ (j\neq i)$.
Thus, once $\hat z^*(\rho)$ is given,
the dynamics (\ref{eq:3.5}) can be run by agent $i$ 
using only the private constraint function $g_i$.

Let us next consider the calculation of (\ref{eq:5.2}).
We see from (\ref{eq:3.3a}) and (\ref{eq:5.2}) that,
once $\rho$ is fixed, the problem 
(\ref{eq:5.2}) to be solved here takes the same form as (\ref{eq:2.1})
except for the definition of the cost functions.
Inspired by the fact, we let each agent $i$ run the PI consensus algorithm
(\ref{eq:2.9a}) and (\ref{eq:2.9b}) with 
\begin{eqnarray}
u_i = -\alpha \nabla_{z} H_i(x_i,\rho_i) = -\alpha \phi_i(x_i) - \alpha  (\Gamma_i(x_i)) \rho_i.
\label{eq:3.4}
\end{eqnarray}
Remark that (\ref{eq:3.4}) relies only on the private functions $f_i$ and $g_i$, and local variables $\rho_i$ and $x_i$.
Since $x_i$ is regarded as an estimate of $\hat z^*(\rho)$, we replace $\hat z^*(\rho)$ in 
(\ref{eq:3.5}) by $x_i$ and reformulate the dynamics as
\begin{eqnarray}
\dot \rho_{i} = \psi_i(\rho_i,g_i(x_i)),\ \rho_i(0) \geq 0, 
\label{eq:5.3}
\end{eqnarray}
which also consists only of the local variables.
Note that the right-hand side of (\ref{eq:5.3}) includes discontinuity due to the definition of $\psi_{il}$ in (\ref{eq:3.6}),
but we let each agent run the dynamics so that the trajectories of $\rho_i$ are continuous, in other words,
jumps of $\rho_i$ are not artificially generated.
It is then immediately confirmed that if $\rho_i(0) \geq 0$ then $\rho_i(t) \geq 0$ for all subsequent time $t$
regardless of the trajectory of $g_i(x_i)$.



The collective dynamics of all agents is given by
\begin{eqnarray}
\begin{bmatrix}
\dot x\\
\dot \xi
\end{bmatrix}
\!\!&\!\!=\!\!&\!\! -
\begin{bmatrix}
\bLP&-\bLI\\
\bLI&0
\end{bmatrix}
\begin{bmatrix}
x\\
\xi
\end{bmatrix}
-\alpha
\begin{bmatrix}
\phi(x)+ \Gamma(x)\rho\\
0
\end{bmatrix},
\label{eq:5.4a}\\
\dot \rho \!\!&\!\!=\!\!&\!\!
\psi(\rho,\bar g(x)),
\label{eq:5.4b}
\end{eqnarray}
where  
$\Gamma(x)$ is the block diagonal matrix with diagonal blocks $\Gamma_1(x_1), \dots, \Gamma_n(x_n)$,
and $\bar g(x) := [g_1^T(x_1)\ \cdots\ g_n^T(x_n)]^T$.
The block diagram of the system is illustrated in Fig. \ref{fig:8}.
Notice that an outer feedback path is added to the solution of unconstrained problems in Fig. \ref{fig:2}.

\subsection{Equilibrium/Convergence Analysis: Constrained Case}
\label{sec:4.2}

\begin{figure}[t]
\centering
\includegraphics[width=8.4cm]{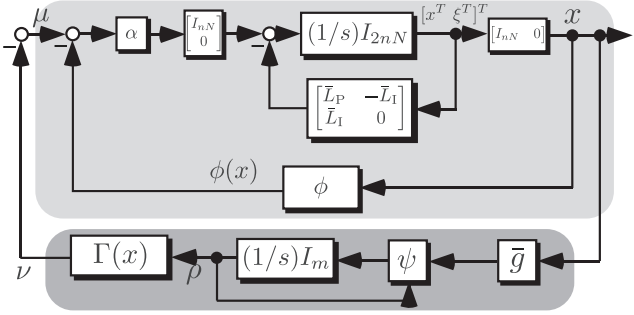}
\caption{Block diagram of the PI consensus-based distributed optimization algorithm
for constrained optimization.
The system colored by light gray is passive from $\tilde \mu = \mu - \mu^*$ to $\tilde x = x - x^*$ with $\mu^* := \Gamma(x^*)\lambda^*$
(Lemma \ref{lem:20}).
The system colored by dark gray 
is also passive from $\tilde x = x - x^*$ to $\tilde \nu = \nu - \nu^*$ with $\nu^* := \Gamma(x^*) \lambda^*$ (Lemma \ref{lem:5}).}
\label{fig:8}
\end{figure}

We consider Fig. \ref{fig:8} and cut the loop at the left of the block $\Gamma(x)$
and focus on the systems encircled by light and dark gray,
where the input of the former is denoted by $\mu$ and output of the latter is by $\nu$.
Then, these systems are formulated as
\begin{eqnarray}
\begin{bmatrix}
\dot x\\
\dot \xi
\end{bmatrix}
\!\!&\!\!=\!\!&\!\! -
\begin{bmatrix}
\bLP&-\bLI\\
\bLI&0
\end{bmatrix}
\begin{bmatrix}
x\\
\xi
\end{bmatrix}
-\alpha
\begin{bmatrix}
\phi(x)\\
0
\end{bmatrix}
+ \alpha 
\begin{bmatrix}
\mu\\
0
\end{bmatrix}
\label{eq:5.5a}
\end{eqnarray}
with output $x$ and
\begin{eqnarray}
\dot \rho \!\!&\!\!=\!\!&\!\!
\psi(\rho,\bar g(x)),\ \
\nu = \Gamma(x)\rho
\label{eq:5.6b}
\end{eqnarray}
respectively. Now, we obtain the following lemma.

\begin{lemma}
\label{lem:6}
Suppose that Assumptions \ref{ass:2}, \ref{ass:3} and \ref{ass:4} hold.
Denote the unique solution to (\ref{eq:3.2a})--(\ref{eq:3.2c}) by
$(z^*,\lambda^*)$, where $\lambda^* := [(\lambda_1^*)^T\ \cdots\ (\lambda_n^*)^T]^T\in \R^{m}$.
Then, there exists $\xi^*$ such that the pair of $x^* = (\textbf{1}_n\otimes z^*)$
and $\xi^*$ is an equilibrium of (\ref{eq:5.5a}) for the equilibrium input $\mu^* := \Gamma(x^*)\lambda^*$.
\end{lemma}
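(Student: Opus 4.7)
The plan is to verify by direct substitution that the choice $x^{*}=\textbf{1}_n\otimes z^{*}$ reduces the equilibrium conditions of (\ref{eq:5.5a}) to a single linear equation in $\xi^{*}$, and then to invoke the KKT stationarity condition (\ref{eq:3.2a}) to show that this equation admits a solution.

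Under Assumption~\ref{ass:2} both $\LP$ and $\LI$ have a simple zero eigenvalue with eigenvector $\textbf{1}_n$, so $\bLP x^{*}=(\LP\textbf{1}_n)\otimes z^{*}=0$ and $\bLI x^{*}=0$. The second block row of (\ref{eq:5.5a}) evaluated at $x=x^{*}$ therefore becomes $\dot\xi=-\bLI x^{*}=0$, which is satisfied by any $\xi^{*}$. The first block row, evaluated at $x=x^{*}$ and $\mu=\mu^{*}$, collapses after cancellation of $\bLP x^{*}$ to
\begin{equation*}
\bLI\,\xi^{*}=\alpha\bigl(\phi(x^{*})-\mu^{*}\bigr).
\end{equation*}

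The heart of the argument is solvability of this linear equation in $\xi^{*}$. Since $\LI$ is symmetric positive semidefinite with $\ker\LI=\mathrm{span}(\textbf{1}_n)$, the range of $\bLI=\LI\otimes I_N$ coincides with $(\ker\bLI)^{\perp}$, and a stacked vector $v=[v_1^T\ \cdots\ v_n^T]^T$ with $v_i\in\R^N$ lies in $\mathrm{range}(\bLI)$ if and only if $\sum_{i=1}^n v_i=0$. Applying this criterion with $\mu_i^{*}=\Gamma_i(z^{*})\lambda_i^{*}$ reduces the solvability condition to an identity of the form $\sum_i\phi_i(z^{*})\pm\sum_i\Gamma_i(z^{*})\lambda_i^{*}=0$, where the sign is dictated by the loop-cut convention of Fig.~\ref{fig:8}; either way this is precisely the KKT stationarity condition (\ref{eq:3.2a}) since $\nabla f(z^{*})=\sum_i\phi_i(z^{*})$, and existence of the KKT pair $(z^{*},\lambda^{*})$ is assured by Assumptions~\ref{ass:3}--\ref{ass:4}. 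A particular solution may be written as $\xi^{*}=\alpha\bLI^{\dagger}(\phi(x^{*})-\mu^{*})$ via the Moore--Penrose pseudoinverse. The only genuine obstacle I anticipate is the sign bookkeeping relating the cut input $\mu$ in (\ref{eq:5.5a}) to the original coupling $-\alpha\Gamma(x)\rho$ of (\ref{eq:5.4a}); once this is pinned down, the remainder of the proof is pure linear algebra that uses nothing beyond the one-dimensional Laplacian kernel granted by Assumption~\ref{ass:2} and the KKT characterization of $(z^{*},\lambda^{*})$.
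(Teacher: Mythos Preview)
Your proposal is correct and follows essentially the same route as the paper's proof: substitute $x^{*}=\textbf{1}_n\otimes z^{*}$ so that $\bLP x^{*}=\bLI x^{*}=0$, reduce the equilibrium condition to $\bLI\xi^{*}=\alpha(\phi(x^{*})\pm\Gamma(x^{*})\lambda^{*})$, and verify solvability by checking that the right-hand side is orthogonal to $\ker\bLI=\mathrm{span}(\textbf{1}_n\otimes I_N)$, which is exactly the KKT stationarity condition (\ref{eq:3.2a}). The paper phrases the range condition as $(\textbf{1}_n\otimes I_N)^{T}(\phi(x^{*})+\Gamma(x^{*})\lambda^{*})=0$ without spelling out the kernel/range duality or the pseudoinverse, but the substance is identical; your only loose end is the sign, which you already flagged and which resolves once you trace the feedback convention $\mu=-\nu$ linking (\ref{eq:5.5a}) to (\ref{eq:5.4a}).
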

\begin{proof}
Consider the right-hand side of (\ref{eq:5.5a}).
Substituting $x = x^*$ yields
$\bLI x^* = (L_{\bf I}\textbf{1}_N)\otimes z^* = 0$ and hence $\dot \xi = 0$.
Replacing $\mu$ by $\mu^*$, we obtain 
\begin{eqnarray}
\dot x^*
=
\bLI \xi
- \alpha
(\phi(x^*) + \Gamma(x^*) \lambda^*).
\label{eq:3.20}
\end{eqnarray}
It is thus sufficient to prove that there exists $\xi$ such that
the right-hand side of (\ref{eq:3.20}) is zero.
In other words, we have only to prove that 
$\alpha(\phi(x^*) + \Gamma(x^*) \lambda^*)$ is included in the image of the matrix $\bLI$,
which is equivalent to
$({\bf 1}_n\otimes I_N)^T(\phi(x^*) + \Gamma(x^*) \lambda^*) = 0$.
This equation is immediately proved from (\ref{eq:3.2a}).
\end{proof}

Using the above lemma, we can prove the following result.

\begin{lemma}
\label{lem:20}
Suppose that Assumptions \ref{ass:2}, \ref{ass:3} and \ref{ass:4}  hold.
Then, the system (\ref{eq:5.5a})
is passive from $\tilde \mu := \mu - \mu^*$ to $\tilde x:= \xi - \xi^*$
with respect to the storage function $\frac{1}{\alpha}{\tilde{S}}$,
where $\tilde S$ is defined in (\ref{eq:2.18}).
\end{lemma}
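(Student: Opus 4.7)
The plan is to mirror the strategy already used for the unconstrained PI system in Section III-C: work with error coordinates around the equilibrium identified in Lemma \ref{lem:6}, exploit the skew-symmetry of the off-diagonal block of the interconnection matrix to kill the cross terms, and then invoke the incremental passivity of $\phi$.

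First, I would invoke Lemma \ref{lem:6} to fix $\xi^*$ so that $(x^*,\xi^*)$ is an equilibrium of (\ref{eq:5.5a}) under the constant input $\mu^*=\Gamma(x^*)\lambda^*$. Writing out that equilibrium equation and subtracting it from (\ref{eq:5.5a}) gives the incremental system
\begin{eqnarray}
\begin{bmatrix} \dot{\tilde x}\\ \dot{\tilde \xi}\end{bmatrix}
= -\begin{bmatrix} \bLP & -\bLI\\ \bLI & 0\end{bmatrix}
\begin{bmatrix} \tilde x\\ \tilde \xi\end{bmatrix}
-\alpha \begin{bmatrix}\phi(x)-\phi(x^*)\\ 0\end{bmatrix}
+\alpha \begin{bmatrix}\tilde \mu\\ 0\end{bmatrix},
\nonumber
\end{eqnarray}
where $\tilde x=x-x^*$, $\tilde \xi=\xi-\xi^*$, $\tilde\mu=\mu-\mu^*$. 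Note that the system matrices are exactly those of (\ref{eq:2.17}), only the input and an extra incremental gradient term differ.

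Next I would differentiate $\tilde S = \tfrac12\|\tilde x\|^2+\tfrac12\|\tilde \xi\|^2$ along these incremental dynamics. The quadratic form $[\tilde x^T\ \tilde \xi^T]\,\mathrm{diag}(\bLP,-\bLI;\bLI,0)\,[\tilde x^T\ \tilde \xi^T]^T$ reduces to $\tilde x^T \bLP \tilde x$ because the $\bLI$ and $-\bLI$ blocks contribute $\tilde x^T\bLI\tilde \xi - \tilde \xi^T\bLI \tilde x = 0$, exactly as in the proof of Lemma \ref{lem:1}. This yields
\begin{eqnarray}
\dot{\tilde S} = -\tilde x^T \bLP \tilde x - \alpha\,\tilde x^T(\phi(x)-\phi(x^*)) + \alpha\,\tilde x^T \tilde \mu.
\nonumber
\end{eqnarray}
Dividing by $\alpha>0$, the candidate storage $\tfrac{1}{\alpha}\tilde S$ satisfies
\begin{eqnarray}
\tfrac{1}{\alpha}\dot{\tilde S} = -\tfrac{1}{\alpha}\tilde x^T \bLP \tilde x -\tilde x^T(\phi(x)-\phi(x^*)) + \tilde x^T\tilde \mu.
\nonumber
\end{eqnarray}

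Finally I would bound the right-hand side: $\bLP\succeq 0$ kills the first term, and since each $f_i$ is convex (Assumption \ref{ass:3}), Lemma \ref{lem:0} applied componentwise gives $\tilde x^T(\phi(x)-\phi(x^*))\ge 0$. Hence $\tfrac{1}{\alpha}\dot{\tilde S}\le \tilde x^T\tilde \mu$, which is the passivity inequality (\ref{eq:0.2}) from $\tilde \mu$ to $\tilde x$ with storage $\tfrac{1}{\alpha}\tilde S$. There is no real obstacle here; the argument is a direct reuse of the incremental passivity computation of Section III-C, and the only bookkeeping subtlety is the factor $1/\alpha$ absorbed into the storage function to match the scaled input channel $\alpha\mu$ in (\ref{eq:5.5a}).
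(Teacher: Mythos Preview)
Your argument is correct and matches the paper's proof essentially step for step: subtract the equilibrium equation from Lemma~\ref{lem:6} to obtain the incremental dynamics (\ref{eq:5.7a}), differentiate $\tilde S$ using the skew-symmetry of the $\bLI$ blocks as in Lemma~\ref{lem:1}, and then bound using $\bLP\succeq 0$ and Lemma~\ref{lem:0} to arrive at (\ref{eq:5.8}). The only cosmetic difference is that the paper states the resulting equality (\ref{eq:5.8}) before passing to the inequality, whereas you combine these in one line.
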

\begin{proof}
Define $\tilde \xi := \xi - \xi^*$ for a fixed $\xi^*$.
Then, it follows from (\ref{eq:5.5a}) and Lemma \ref{lem:6} that
\begin{eqnarray}
\begin{bmatrix}
\dot{\tilde{x}}\\
\dot{\tilde{\xi}}
\end{bmatrix}
\!
= -
\begin{bmatrix}
\bLP&-\bLI\\
\bLI&0
\end{bmatrix}
\!
\begin{bmatrix}
\tilde x\\
\tilde \xi
\end{bmatrix}
\!
-\alpha
\!
\begin{bmatrix}
\phi(x) - \phi(x^*)\\
0
\end{bmatrix}
\!
+ \alpha 
\begin{bmatrix}
\tilde \mu\\
0
\end{bmatrix}.
\label{eq:5.7a}
\end{eqnarray}
Now, following the same procedure as Lemma \ref{lem:1},
we can immediately prove
\begin{eqnarray}
\hspace{-.5cm}\frac{1}{\alpha}\dot{\tilde{S}} \!\!&\!\!=\!\!&\!\! -\frac{1}{\alpha}\tilde x^T \bLP \tilde x - (x - x^*)^T(\phi(x)-\phi(x^*)) + \tilde \mu^T \tilde x
\label{eq:5.8}\\
\hspace{-.5cm}\!\!&\!\!\leq\!\!&\!\! \tilde \mu^T \tilde x.
\end{eqnarray}
This completes the proof.
\end{proof}

Let us next consider (\ref{eq:5.6b}).
Extracting $l$-th row of (\ref{eq:5.3}) yields
\begin{eqnarray}
\dot \rho_{il} = \psi_{il}(\rho_{il},g_{il}(x_{i})),\ \rho_{il}(0) \geq 0, 
\label{eq:12.3}
\end{eqnarray}
whose right-hand side can be discontinuous
at the states such that $\rho_{il}=0$ and $g_{il}(x_{i}) < 0$.
For convenience, the mode satisfying the upper condition in (\ref{eq:3.6})
is called mode 1 and the other is mode 2.
Then, the following lemma holds.

\begin{lemma}
\label{lem:5}
Suppose that Assumptions \ref{ass:3} and \ref{ass:4} hold.
Then, the system (\ref{eq:5.6b}) with $\rho(0) \geq 0$
is passive from $\tilde x$ to $\tilde \nu := \nu - \nu^*$ with $\nu^* := \Gamma(x^*) \lambda^*$
for the storage function
$\sum_{i=1}^n U_i$, where
$U_i := \frac{1}{2}\|\rho_i - \lambda_i^*\|^2$.
\end{lemma}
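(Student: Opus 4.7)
The goal is to show that along trajectories of (\ref{eq:5.6b}) with $\rho(0)\ge 0$, the time derivative of $U:=\sum_{i=1}^n U_i$ satisfies $\dot U \le \tilde x^T\tilde\nu$. My plan is to split the calculation into two independent inequalities: one bounding $\dot U$ from above by a ``linear-in-$\rho-\lambda^*$'' expression involving the raw constraint values $g_i(x_i)$, and a second one bounding that expression from above by $\tilde x^T\tilde\nu$ using the convexity of the $g_{il}$.

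First I would compute, component-wise, $\dot U_i = (\rho_i-\lambda_i^*)^T\psi_i(\rho_i,g_i(x_i)) = \sum_l (\rho_{il}-\lambda_{il}^*)\,\psi_{il}(\rho_{il},g_{il}(x_i))$ and handle the discontinuity in $\psi_{il}$ by a two-case analysis. In mode~2 one has $\psi_{il}=g_{il}(x_i)$ and the summand equals $(\rho_{il}-\lambda_{il}^*)g_{il}(x_i)$ exactly. In mode~1 ($\rho_{il}=0$ and $g_{il}(x_i)<0$) the summand is $0$, while $(\rho_{il}-\lambda_{il}^*)g_{il}(x_i) = -\lambda_{il}^*g_{il}(x_i)\ge 0$ because $\lambda_{il}^*\ge 0$ by (\ref{eq:3.2b}) and $g_{il}(x_i)<0$. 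In both modes therefore $(\rho_{il}-\lambda_{il}^*)\psi_{il}\le (\rho_{il}-\lambda_{il}^*)g_{il}(x_i)$, yielding
\begin{equation*}
\dot U \;\le\; \sum_{i=1}^n (\rho_i-\lambda_i^*)^T g_i(x_i).
\end{equation*}
The invariance $\rho_i(t)\ge 0$ for all $t\ge 0$, noted around (\ref{eq:5.3}), is what makes this argument self-consistent.

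Next I would massage $\tilde x^T\tilde\nu=(x-x^*)^T\bigl(\Gamma(x)\rho-\Gamma(x^*)\lambda^*\bigr)$ using the block-diagonal structure of $\Gamma$ and apply the gradient inequality (\ref{eq:0.4}) to each convex $g_{il}$ twice, at the two different linearization points. Using $\nabla g_{il}(x_i)^T(x_i-z^*)\ge g_{il}(x_i)-g_{il}(z^*)$, multiplying by $\rho_{il}\ge 0$ and summing gives $(x_i-z^*)^T\Gamma_i(x_i)\rho_i \ge \rho_i^T g_i(x_i) - \rho_i^T g_i(z^*)$; primal feasibility $g_i(z^*)\le 0$ together with $\rho_i\ge 0$ then lets me drop the last term to obtain a lower bound of $\rho_i^T g_i(x_i)$. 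The dual-side estimate uses $\nabla g_{il}(z^*)^T(x_i-z^*)\le g_{il}(x_i)-g_{il}(z^*)$, multiplied by $\lambda_{il}^*\ge 0$ and summed, to give $(x_i-z^*)^T\Gamma_i(z^*)\lambda_i^*\le \lambda_i^{*T}g_i(x_i) - \lambda_i^{*T}g_i(z^*)$; here the complementary slackness condition (\ref{eq:3.2c}) kills the last term. Subtracting these estimates yields
\begin{equation*}
\tilde x^T\tilde\nu \;\ge\; \sum_{i=1}^n (\rho_i-\lambda_i^*)^T g_i(x_i),
\end{equation*}
which chained with the previous inequality produces the desired passivity relation $\dot U\le \tilde x^T\tilde\nu$.

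The main obstacle is the careful handling of the projection-like discontinuity in $\psi_{il}$: the bound $(\rho_{il}-\lambda_{il}^*)\psi_{il}\le (\rho_{il}-\lambda_{il}^*)g_{il}(x_i)$ fails without the sign information $\lambda_{il}^*\ge 0$, so it is crucial to invoke dual feasibility (\ref{eq:3.2b}) at exactly this point. The rest is a conceptually routine but notationally delicate application of the convexity inequality in two opposite directions—one using $\rho\ge 0$ and primal feasibility, the other using $\lambda^*\ge 0$ and complementary slackness—which together absorb both $g_i(z^*)$ terms and leave precisely the cross-term $(\rho_i-\lambda_i^*)^T g_i(x_i)$ needed to close the estimate.
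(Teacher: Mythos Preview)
Your proposal is correct and follows essentially the same route as the paper: the same two-mode analysis of $\psi_{il}$ yielding $(\rho_{il}-\lambda_{il}^*)\psi_{il}\le(\rho_{il}-\lambda_{il}^*)g_{il}(x_i)$, then the same pair of convexity inequalities combined with primal feasibility, dual feasibility, and complementary slackness to reach $\tilde\nu^T\tilde x$. The only point the paper adds that you omit is the use of the upper Dini derivative $D^+U_i$ at mode-switching instants (where $U_i$ may fail to be differentiable in the ordinary sense) followed by integration in time to recover the storage-function inequality (\ref{eq:0.1}); you should mention this technicality for full rigor.
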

\begin{proof}
We first consider the time when no mode switch occurs.
The time derivative of $U_i$ along the system trajectories is then given by
\begin{eqnarray}
\dot U_i 
\!\!&\!\!=\!\!&\!\! \sum_{l=1}^{m_i} 
(\rho_{il} - \lambda_{il}^*)\psi_{il}(\rho_{il},g_{il}(x_i)).
\nonumber
\label{eq:3.9}
\end{eqnarray}
If mode 2 is active, $\psi_{il}(\rho_{il},g_{il}(x_i)) = g_{il}(x_i)$
and hence
\begin{eqnarray}
(\rho_{il} - \lambda_{il}^*)\psi_{il}(\rho_{il},g_{il}(x_i)) 
= (\rho_{il} - \lambda_{il}^*)g_{il}(x_i)
\label{eq:3.12}
\end{eqnarray}
holds.
If mode 1 is active, $\rho_{il}=0$ and $\psi_{il}(\rho_{il},g_{il}(x_i)) = 0$ hold, and hence we have
\begin{eqnarray}
\hspace{-1cm}&&(\rho_{il} - \lambda_{il}^*)\psi_{il}(\rho_{il},g_{il}(x_i)) = 0 = \rho_{il}g_{il}(x_i)
\nonumber\\
\hspace{-1cm}&& \hspace{2cm} = (\rho_{il} - \lambda_{il}^*)g_{il}(x_i) + \lambda_{il}^*g_{il}(x_i).
\nonumber
\label{eq:3.10}
\end{eqnarray}
Since $\lambda_{il}^* \geq 0$ from (\ref{eq:3.2b}) and $g_{il}(x_i) < 0$ from (\ref{eq:3.6}),
the term $\lambda_{il}^*g_{il}(x_i)$ is non-positive and hence we obtain
\begin{eqnarray}
(\rho_{il} - \lambda_{il}^*)\psi_{il}(\rho_{il},g_{il}(x_i)) 
\leq (\rho_{il} - \lambda_{il}^*)g_{il}(x_i).
\label{eq:3.11}
\end{eqnarray}

Let us next consider the time when a mode switch happens in (\ref{eq:5.3}) for some $i$
and $l$. 
In this case, $U_{il}(t) := \frac{1}{2}\|\rho_{il}(t) - \lambda_{il}^*\|^2$ can be indifferentiable in the standard sense.
We thus introduce the upper Dini derivative\footnote{The upper Dini derivative $D^+ U(t)$ of a scalar function $U$ is defined
as $D^+ U(t) = \lim\sup_{h \to 0+}\frac{U(t+h)-U(t)}{h}$} denoted by $D^+U_{il}$.
Then, it is given by either of $(\rho_{il} - \lambda_{il}^*)0 = 0$ or 
$(\rho_{il} - \lambda_{il}^*)g_{il}(x_i)$ depending on the sign of $(\rho_{il} - \lambda_{il}^*)$.
Thus, following the same procedure as above, we can confirm that
\begin{eqnarray}
D^+ U_i \leq \sum_{l=1}^{m_i}  (\rho_{il} - \lambda_{il}^*)g_{il}(x_i).
\label{eq:3.13}
\end{eqnarray}

From (\ref{eq:3.12}) and (\ref{eq:3.11}), the inequality (\ref{eq:3.13}) holds
for all time $t\in \R^+$.
This is rewritten as
\begin{eqnarray}
D^+ U_i  \leq (\rho_{i} - \lambda_{i}^*)^T\{g_{i}(x_i) - g_i(z^*)\}
+ (\rho_{i} - \lambda_{i}^*)^T g_i(z^*).
\label{eq:3.14}
\end{eqnarray}
Noticing that $\rho_{i}\geq 0$ and $g_i(z^*) \leq 0$ from (\ref{eq:3.2b}),
the inequality $\rho_{i}^Tg_i(z^*) \leq 0$ holds.
In addition, $(\lambda_{i}^*)^Tg_i(z^*) = 0$ is true from (\ref{eq:3.2c}).
Thus, (\ref{eq:3.14}) is further rewritten as
\begin{eqnarray}
D^+ U_i
\!\!&\!\!\leq\!\!&\!\! (\rho_{i} - \lambda_{i}^*)^T\{g_{i}(x_i) - g_i(z^*)\}
\nonumber\\
\!\!&\!\!=\!\!&\!\! 
\sum_{l=1}^{m_i} \left[\rho_{il}\{g_{il}(x_i) - g_{il}(z^*)\}
-\lambda_{il}^*
\{g_{il}(x_i) - g_{il}(z^*)\}\right].
\nonumber
\end{eqnarray}
Because of the convexity of $g_{il}$,
the following inequalities hold.
\begin{eqnarray}
g_{il}(x_i) - g_{il}(z^*) \geq (\nabla g_{il}(z^*))(x_i - z^*),
\nonumber\\
g_{il}(x_i) - g_{il}(z^*) \leq (\nabla g_{il}(x_i))(x_i - z^*).
\nonumber
\end{eqnarray}
Using these together with $\rho_i\geq 0$ and $\lambda_i^*\geq 0$, we can prove
\begin{eqnarray}
D^+ U_i
\!\!&\!\!\leq\!\!&\!\! 
\left\{\sum_{l=1}^{m_i}((\nabla g_{il}(x_i))\rho_{il} - (\nabla g_{il}(z^*))\lambda_{il}^*)\right\}^T(x_i - z^*)
\nonumber\\
\!\!&\!\!=\!\!&\!\!  \left\{(\Gamma_i(x_i))\rho_{i} - (\Gamma_i(z^*))\lambda_{i}^*\right\}^T(x_i - z^*).
\nonumber
\end{eqnarray}
Thus, we have
\begin{eqnarray}
\!\!\!\!\!\!\!\!
D^+  \left(\sum_{i=1}^n U_i\right)
\!\!&\!\!\leq\!\!&\!\! 
\{(\Gamma(x))\rho - (\Gamma(x^*))\lambda^*\}^T(x - x^*) 
\nonumber\\
\!\!&\!\!=\!\!&\!\! 
\tilde \nu^T\tilde x.
\label{eq:3.17}
\end{eqnarray}
Integrating (\ref{eq:3.17}) in time proves the original definition (\ref{eq:0.1}).
\end{proof}
Namely, the closed-loop system (\ref{eq:5.4a}) and (\ref{eq:5.4b})
is regarded as a feedback interconnection of two passive systems.

We are now ready to state the following convergence result.

\begin{theorem}
\label{thm:2}
Consider the system (\ref{eq:5.4a}) and (\ref{eq:5.4b}).
If Assumptions \ref{ass:2}, \ref{ass:3} and \ref{ass:4} hold,
then $x_i$ asymptotically converges to the optimal solution $z^*$ to
(\ref{eq:3.1}) for all $i = 1,2,\dots, n$.
\end{theorem}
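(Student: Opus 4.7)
The plan is to close the loop between the two passive subsystems identified in Lemmas \ref{lem:20} and \ref{lem:5} and then apply a non-smooth version of LaSalle's invariance principle.

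First, I would observe that the block-diagram interconnection in Fig.~\ref{fig:8} corresponds to the negative feedback $\mu = -\nu$, since comparing (\ref{eq:5.4a}) with (\ref{eq:5.5a}) gives $\alpha\mu = -\alpha\Gamma(x)\rho = -\alpha\nu$. Because $\mu^* = \nu^* = \Gamma(x^*)\lambda^*$, this translates into the incremental relation $\tilde\mu = -\tilde\nu$. I would then propose the composite Lyapunov candidate
\begin{equation}
V := \frac{1}{\alpha}\tilde S + \sum_{i=1}^n U_i,
\nonumber
\end{equation}
which is non-negative, radially unbounded, and zero only at $(x,\xi,\rho)=(x^*,\xi^*,\lambda^*)$ (up to the freedom in $\xi^*$).

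Next I would differentiate $V$ along the closed-loop trajectories. Using the sharper inequality (\ref{eq:5.8}) from the proof of Lemma \ref{lem:20} and the bound (\ref{eq:3.17}) from the proof of Lemma \ref{lem:5}, together with $\tilde\mu = -\tilde\nu$, the cross terms $\tilde\mu^T\tilde x + \tilde\nu^T\tilde x$ cancel and I obtain
\begin{equation}
D^+ V \leq -\frac{1}{\alpha}\tilde x^T \bLP \tilde x - (x-x^*)^T(\phi(x)-\phi(x^*)) \leq 0,
\nonumber
\end{equation}
where the last inequality uses $\bLP \geq 0$ and Lemma \ref{lem:0}. Thus every sublevel set of $V$ is forward invariant and trajectories are bounded.

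Because (\ref{eq:5.4b}) has discontinuous right-hand side, I would invoke a non-smooth LaSalle-type invariance principle (such as the Bacciotti--Ceragioli extension for Filippov/Carath\'eodory systems) to conclude that trajectories approach the largest invariant set in $\{D^+ V = 0\}$. On this set I have both $\tilde x^T \bLP \tilde x = 0$ and $(x-x^*)^T(\phi(x)-\phi(x^*)) = 0$. The first identity, combined with Assumption \ref{ass:2} and the kernel structure of $\bLP$, yields consensus: $x_i(t) = c(t)$ for some common $c(t)$. Substituting $x_i = c(t)$ into the second identity gives
\begin{equation}
0 = (c(t)-z^*)^T \sum_{i=1}^n (\phi_i(c(t))-\phi_i(z^*)) = (c(t)-z^*)^T(\nabla f(c(t)) - \nabla f(z^*)),
\nonumber
\end{equation}
and strict convexity of $f$ (Assumption \ref{ass:4}) forces $c(t) = z^*$. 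Therefore $x \equiv x^*$ on the invariant set, which is exactly the claim.

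The step I expect to be the main obstacle is the rigorous application of LaSalle in the presence of the switching/discontinuous dynamics in $\rho$: I need to verify that $V$ is locally Lipschitz and non-increasing along Filippov solutions, and that the invariance principle for such systems applies to the bounded trajectories of (\ref{eq:5.4a})--(\ref{eq:5.4b}). A secondary subtlety is that the argument above only pins down $x$, not $\rho$ or $\xi$; I would emphasize that the theorem claims only $x_i \to z^*$, so asymptotic indeterminacy of the multipliers and integral states is consistent with the statement.
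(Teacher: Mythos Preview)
Your Lyapunov candidate, the cancellation $\tilde\mu=-\tilde\nu$, and the dissipation bound $D^+V\le -\frac{1}{\alpha}\tilde x^T\bLP\tilde x-(x-x^*)^T(\phi(x)-\phi(x^*))$ all coincide with the paper's. The divergence is in the last step: the paper does \emph{not} invoke a non-smooth LaSalle principle. Instead it integrates the dissipation inequality to get $\frac{1}{\alpha}x^T\bLP x+(x-x^*)^T(\phi(x)-\phi(x^*))\in\mathcal{L}_1$, shows $x,\xi,\rho,\dot x\in\mathcal{L}_\infty$ from the boundedness of $V$ and (\ref{eq:5.4a}), checks that the derivative of the dissipated quantity is bounded (this is where the twice-differentiability of $f_i$ in Assumption~\ref{ass:3} is used, via $\nabla\phi(x)=\nabla^2 f(x)$), and then applies Barbalat's lemma to conclude that both terms vanish asymptotically. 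From there the consensus-plus-strict-convexity argument is essentially the same as yours. What each route buys: Barbalat's lemma completely sidesteps the obstacle you flagged---no invariance principle for the switched $\rho$-dynamics is needed---at the cost of requiring $C^2$ cost functions. Your LaSalle route is more direct and would work under $C^1$, but leans on a hybrid/non-smooth invariance theorem; the paper itself notes right after this theorem that such an approach has been taken in \cite{FP_AT10,cortes}, so your path is legitimate, just not the one the authors chose.
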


\begin{proof}
Define $U := \frac{1}{\alpha}\tilde S + \sum_{i=1}^n U_i$.
From (\ref{eq:5.8}) and (\ref{eq:3.17}), 
\begin{eqnarray}
D^+ U \!\!&\!\!=\!\!&\!\!  \frac{1}{\alpha}\dot{\tilde{S}} + 
D^+ \left(\sum_{i=1}^n U_i\right)
\nonumber\\
\!\!&\!\!\leq\!\!&\!\! 
- \frac{1}{\alpha}x^T\bLP x 
- (x-x^*)^T(\phi(x)-\phi(x^*)) \leq 0
\nonumber
\end{eqnarray}
holds. Integrating this in time, we have
\begin{eqnarray}
\int_{0}^{\infty}\Big(
\frac{1}{\alpha}x^T\bLP x +
(x-x^*)^T(\phi(x)-\phi(x^*)\Big) dt < \infty.
\end{eqnarray}
We also see $x, \xi, \rho \in {\mathcal L}_{\infty}$ since $U$ is positive definite.
Thus, from (\ref{eq:5.4a}), $\dot x \in {\mathcal L}_{\infty}$ also holds.
The time derivative of $\frac{1}{\alpha}x^T\bLP x +
(x-x^*)^T(\phi(x)-\phi(x^*))$ is given as
\begin{eqnarray}
\ \ 
\frac{2}{\alpha} x^T\bLP\dot x + 
(\phi(x)-\phi(x^*))^T \dot x + (x - x^*)^T(\nabla \phi(x))\dot x,
\nonumber
\end{eqnarray}
which is also bounded.
Thus, invoking Barbalat's lemma,
we can prove $\lim_{t\to \infty}x^T\bLP x = 0$ and $\lim_{t\to \infty}(x-x^*)^T(\phi(x)-\phi(x^*)) = 0$.
The first equation means that there exists a trajectory $c(\cdot)$ such that
$\lim_{t\to \infty}(x_i - c) = 0$ for all $i$.
The second equation is then rewritten as 
\begin{eqnarray}
\!\!&\!\!\!\!&\!\!\sum_{i=1}^n\lim_{t\to \infty} (x_i - z^*)^T (\phi_i(x_i) - \phi_i(z^*))
\nonumber\\
\!\!&\!\!=\!\!&\!\!  \sum_{i=1}^n \lim_{t\to \infty}((c - z^*)^T (\phi_i(c) - \phi_i(z^*))
+\sigma(x_i) - \sigma(c)) = 0,
\nonumber
\end{eqnarray}
where $\sigma(x_i) := x_i^T\phi_i(x_i) - x_i^T\phi_i(z^*)-(z^*)^T\phi_i(x_i) $.
Since $\sigma$ is continuous and $\lim_{t\to \infty}(x_i - c) = 0$,
we have $\lim_{t\to \infty}(\sigma(x_i) - \sigma(c)) = 0$ and hence
$\sum_{i=1}^n \lim_{t\to \infty}(c - z^*)^T (\phi_i(c) - \phi_i(z^*)) = 0$.
From this equation, we can also prove $c \to z^*$ in the same way as 
asymptotic stability in Lyapunov theorem.
It is thus concluded that $x_i \to z^*$ hold for all $i$.
\end{proof}

The present results can be easily extended to the problem with equality constraints using
techniques e.g. in \cite{YT_TR}, but we omit including the result to reduce complexity of the paper.
Inequality constraints for another class of distributed optimization are
also addressed in \cite{FP_AT10} using another energy functions,
where the authors employ LaSalle's principle for hybrid systems \cite{sastry}.
The completed version of the proof of the result in \cite{FP_AT10} is presented in \cite{cortes}.
Besides the difference of the problem formulation, a more explicit contribution
relative to \cite{FP_AT10,cortes} is presented in the next subsection.

\subsection{Constrained Distributed Optimization with Delay}
\label{sec:4.3}

In this subsection, we extend the results in Section \ref{sec:3} to
 the constrained problem (\ref{eq:3.1}).

Let us revisit the agent $i$'s dynamics (\ref{eq:2.9a}), (\ref{eq:2.9b}), (\ref{eq:3.4})
and (\ref{eq:5.3}).
Similarly to Section \ref{sec:3},
$x_j$ and $\xi_j$ are not directly received from agent $j \in \N_i$ in the presence of delays
and we again denote the received information as $r^x_{ij}$ and $r^{\xi}_{ij}$, respectively.
Then, the system is formulated as
\begin{eqnarray}
\dot x_i \!\!&\!\!=\!\!&\!\! \sum_{j\in \N_i}a_{ij}(r^x_{ij} - x_i) - \sum_{j\in \N_i}b_{ij}(r^{\xi}_{ij} - \xi_i) 
\nonumber\\
\!\!&\!\!\!\!&\!\! \hspace{2cm} -  \alpha (\phi_i(x_i)+(\Gamma_i(x_i))\rho_i),
\label{eq:7.1a}\\
\dot \xi_i \!\!&\!\!=\!\!&\!\! \sum_{j\in \N_i}b_{ij}(r^x_{ij} - x_i),
\label{eq:7.1b}\\
\dot \rho_{i} \!\!&\!\!=\!\!&\!\! \psi_i(\rho_i,g_i(x_i)),\ \rho_i(0) \geq 0,
\label{eq:7.1c}
\end{eqnarray}
whose block diagram is illustrated in Fig. \ref{fig:9}.

\begin{figure}
\centering
\includegraphics[width=8.4cm]{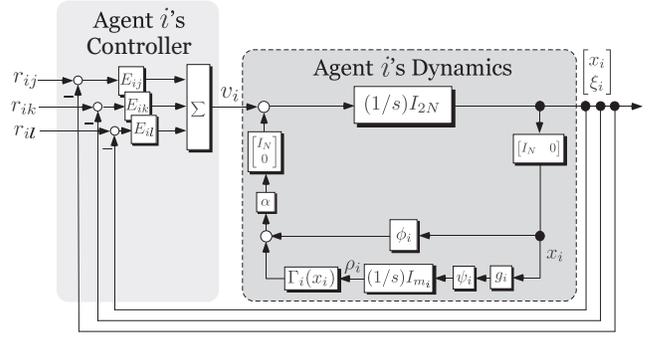}
\caption{Block diagram of agent $i$'s dynamics for constrained optimization 
in the presence of communication delays.
The system enclosed by the dashed line is passive 
from $\bar v_i$ to $[\bar x_i^T\ 
\bar \xi^T_i]^T$ (Lemma \ref{lem:9}), where $\bar v_i$, $\bar x_i$ and $\bar \xi_i$
are defined in (\ref{eq:7.2a}) and (\ref{eq:8.12a}).}
\label{fig:9}
\end{figure}


Now, we focus on the blocks encircled by the dashed line in Fig. \ref{fig:9} whose
system formulation is given by
\begin{eqnarray}
\begin{bmatrix}
\dot{{x}}_i\\
\dot{{\xi}}_i 
\end{bmatrix}
= v_i
-  \alpha 
\begin{bmatrix}
\phi_i(x_i) + \Gamma_i(x_i)\rho_i\\
0
\end{bmatrix}
\label{eq:7.4d}.
\end{eqnarray}
Then, we have the following lemma.

\begin{lemma}
\label{lem:9}
Suppose that Assumptions \ref{ass:2}, \ref{ass:11}, and \ref{ass:3} hold.
Then, the system (\ref{eq:7.4d}) is passive from 
$\bar v_i$ to $[\bar x_i^T\ \bar \xi_i^T]^T$
with respect to the storage function $W_i := \bar S_i + \alpha U_i$,
where $\bar S_i$ and $U_i$ are defined in 
Lemmas \ref{lem:18} and \ref{lem:5}, respectively, and
$\bar x_i$, $\bar \xi_i$ and $\bar v_i$ are defined in the same way as
 (\ref{eq:7.2a}) and (\ref{eq:8.12a}).
\end{lemma}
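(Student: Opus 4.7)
The plan is to combine the arguments used in Lemmas \ref{lem:18} and \ref{lem:5}, leveraging a cancellation between their cross terms produced by the added multiplier-dependent feedback. The new ingredient compared to the unconstrained case is the extra input $-\alpha \Gamma_i(x_i)\rho_i$ into the $x_i$-dynamics, which produces an additional term in $\dot{\bar S}_i$; the $\alpha U_i$ part of the storage function is tailored so that this term is exactly compensated by the bound obtained on $D^+ U_i$.

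First, I would invoke Lemma \ref{lem:6} at the agent level. Writing the $i$-th block of the equilibrium equation yields
\begin{eqnarray}
0 = \sum_{j\in \N_i}b_{ij}(\xi_i^* - \xi_j^*) - \alpha(\phi_i(z^*) + \Gamma_i(z^*)\lambda_i^*),
\nonumber
\end{eqnarray}
together with $\dot \xi_i^* = 0$. Subtracting this from (\ref{eq:7.4d}) and using the definition of $\bar v_i$ from (\ref{eq:8.12a}) gives the error dynamics
\begin{eqnarray}
\begin{bmatrix}\dot{\bar x}_i\\ \dot{\bar \xi}_i\end{bmatrix}
= \bar v_i - \alpha \begin{bmatrix} \phi_i(x_i) - \phi_i(z^*) + \Gamma_i(x_i)\rho_i - \Gamma_i(z^*)\lambda_i^*\\ 0\end{bmatrix},
\nonumber
\end{eqnarray}
which is the natural extension of (\ref{eq:8.5}).

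Next I would differentiate $\bar S_i$ along these trajectories to obtain
\begin{eqnarray}
\dot{\bar S}_i = \begin{bmatrix}\bar x_i\\ \bar \xi_i\end{bmatrix}^T \bar v_i
- \alpha \bar x_i^T(\phi_i(x_i)-\phi_i(z^*))
- \alpha \bar x_i^T(\Gamma_i(x_i)\rho_i - \Gamma_i(z^*)\lambda_i^*),
\nonumber
\end{eqnarray}
matching (\ref{eq:8.6}) up to the extra multiplier term. Separately, the argument in the proof of Lemma \ref{lem:5}, applied at the single-agent level, yields
\begin{eqnarray}
D^+ U_i \leq (\Gamma_i(x_i)\rho_i - \Gamma_i(z^*)\lambda_i^*)^T \bar x_i,
\nonumber
\end{eqnarray}
where I use $\bar x_i = x_i - z^*$ and Assumption \ref{ass:3}. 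This part requires a bit of care because of the discontinuity in $\psi_i$, but the Dini-derivative bookkeeping is exactly the one already carried out in Lemma \ref{lem:5}, so it can be cited directly.

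Finally, adding $\dot{\bar S}_i$ and $\alpha D^+ U_i$, the two terms involving $\Gamma_i(x_i)\rho_i - \Gamma_i(z^*)\lambda_i^*$ cancel, leaving
\begin{eqnarray}
D^+ W_i \leq \begin{bmatrix}\bar x_i\\ \bar \xi_i\end{bmatrix}^T \bar v_i
- \alpha \bar x_i^T(\phi_i(x_i)-\phi_i(z^*))
\leq \begin{bmatrix}\bar x_i\\ \bar \xi_i\end{bmatrix}^T \bar v_i,
\nonumber
\end{eqnarray}
where the last inequality follows from Lemma \ref{lem:0} applied to the convex $f_i$. Integrating in time delivers the passivity inequality (\ref{eq:0.1}) and proves the lemma. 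The only subtle point — and the main obstacle — is handling the Dini derivative at mode switches of $\rho_i$ when combining storage functions, but since the Dini-derivative inequality for $U_i$ has already been established in Lemma \ref{lem:5} and $\bar S_i$ is smooth, the combination poses no real difficulty; the cancellation of the $\Gamma_i$ cross terms is the essential algebraic observation.
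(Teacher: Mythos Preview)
Your proof is correct and follows essentially the same route as the paper: derive the error dynamics from Lemma~\ref{lem:6}, compute $\dot{\bar S}_i$ as in (\ref{eq:7.6}), invoke the single-agent bound on $D^+U_i$ from Lemma~\ref{lem:5} as in (\ref{eq:7.7}), add them so the $\Gamma_i$ cross terms cancel, and integrate. The paper stops at (\ref{eq:7.8}) and integrates directly, while you additionally apply Lemma~\ref{lem:0} to drop the gradient term before integrating, but this is a cosmetic difference only.
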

\begin{proof}
From Lemma \ref{lem:6} and (\ref{eq:7.4d}), we have
\begin{eqnarray}
\begin{bmatrix}
\dot{\bar{x}}_i\\
\dot{\bar{\xi}}_i 
\end{bmatrix}
= \bar v_i
-  \alpha 
\begin{bmatrix}
\phi_i(x_i) - \phi_i(z^*) + \Gamma_i(x_i)\rho_i
- \Gamma_i(z^*)\lambda^*_i\\
0
\end{bmatrix}
\label{eq:7.4a}
\end{eqnarray}
Similarly to the proof of Lemma \ref{lem:18},
it follows
\begin{eqnarray}
\dot{\bar{S}}_i \!\!&\!\!=\!\!&\!\! \begin{bmatrix}
{\bar{x}}_i\\
{\bar{\xi}}_i
\end{bmatrix}^T \bar v_i - \alpha(x_i - z^*)^T\{\phi_i(x_i)-\phi_i(z^*) 
\nonumber\\
\!\!&\!\!\!\!&\!\!
\hspace{1.5cm}
+ (\Gamma_i(x_i))\rho_i - (\Gamma_i(z^*))\lambda^*_i\}.
\label{eq:7.6}
\end{eqnarray}
We also see from Lemma \ref{lem:5} that
\begin{eqnarray}
\alpha D^+ \dot{U}_i \leq \alpha(x_i - z^*)^T\{(\Gamma_i(x_i))\rho_i - (\Gamma_i(z^*))\lambda^*_i\}.
\label{eq:7.7}
\end{eqnarray}
Combining (\ref{eq:7.6}) and (\ref{eq:7.7}), we obtain
\begin{eqnarray}
D^+  W_i \leq
\begin{bmatrix}
{\bar{x}}_i\\
{\bar{\xi}}_i
\end{bmatrix}^T \bar v_i - \alpha(x_i - z^*)^T(\phi_i(x_i)-\phi_i(z^*)).
\label{eq:7.8}
\end{eqnarray}
Integrating this in time proves the lemma.
\end{proof}

The above lemma means that the agent's passive dynamics in Fig. \ref{fig:3} is just
replaced by another passive dynamics with the same input-output pair.
Thus, we take the same inter-agent communication strategy as Section \ref{sec:3}.
Then, (\ref{eq:8.16}) again holds true and hence we have the following result.

\begin{theorem}
\label{thm:9}
Consider the system (\ref{eq:7.1a}), (\ref{eq:7.1b}) and (\ref{eq:7.1c}) for all $i$ with the scattering 
transformation (\ref{eq:8.13a}) and (\ref{eq:8.13b}) and delays (\ref{eq:8.21}) for all $j\in \N_i$ and all $i$.
If Assumptions \ref{ass:2}, \ref{ass:5} and \ref{ass:3} hold, 
then $x_i$ asymptotically converges to the optimal solution $z^*$ to (\ref{eq:3.1})
for all $i=1,2,\dots, n$.
\end{theorem}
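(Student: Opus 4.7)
The plan is to mirror the argument used for the unconstrained delayed case in Lemma \ref{thm:8}, but with the augmented storage function $W_i$ provided by Lemma \ref{lem:9} playing the role previously played by $\bar S_i$. The key observation is that Lemma \ref{lem:9} already establishes that the agent's closed-loop dynamics for the constrained problem is passive from $\bar v_i$ to $[\bar x_i^T\ \bar \xi_i^T]^T$ with exactly the same input-output pair as in the unconstrained case. Since the scattering transformation and delay blocks treated in Lemma \ref{lem:com} act only on these interconnection signals, everything outside the agent dynamics remains unchanged.

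First, I would substitute $\bar v_i = \sum_{j\in \N_i}\bar v_{ij}$ together with the expression for $\bar v_{ij}$ in terms of $(\bar r^x_{ij}-\bar x_i,\bar r^\xi_{ij}-\bar \xi_i)$ into (\ref{eq:7.8}) and repeat the algebraic manipulation that produced (\ref{eq:8.9}). This yields the per-agent inequality
\begin{eqnarray}
D^+ W_i &\leq& \sum_{j\in \N_i}\bar r_{ij}^T\bar v_{ij} - \sum_{j\in \N_i}a_{ij}\|\bar x_i-\bar r^x_{ij}\|^2 \nonumber\\
&& - \alpha(x_i-z^*)^T(\phi_i(x_i)-\phi_i(z^*)).\nonumber
\end{eqnarray}
Next, I would aggregate over $i$ and append the communication storage $\sum_{(i,j)\in\E}V_{ij}$. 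Since (\ref{eq:8.16}) of Lemma \ref{lem:com} continues to hold verbatim (it only depends on the scattering transformation and the structural identities (\ref{eq:8.20}), not on the internal agent dynamics), defining
$$V := \sum_{i=1}^n W_i + \sum_{(i,j)\in \E} V_{ij}$$
and summing yields the delay-system analogue of (\ref{eq:8.17}),
\begin{eqnarray}
D^+ V &\leq& -\sum_{i=1}^n\sum_{j\in\N_i}a_{ij}\|\bar x_i-\bar r^x_{ij}\|^2\nonumber\\
&& -\alpha\sum_{i=1}^n (x_i-z^*)^T(\phi_i(x_i)-\phi_i(z^*)) \leq 0,\nonumber
\end{eqnarray}
where the cross-terms between agents $i$ and $j$ in $\bar r_{ij}^T\bar v_{ij}$ cancel against the $V_{ij}$ contributions exactly as before.

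The final step is a LaSalle-type argument for the delay system, exactly as in the proof of Lemma \ref{thm:8}. Here Assumption \ref{ass:5} is doing the heavy lifting: strict convexity of each $f_i$ ensures that $(x_i-z^*)^T(\phi_i(x_i)-\phi_i(z^*))>0$ whenever $x_i\neq z^*$, so $D^+V\equiv 0$ forces $x_i\equiv z^*$ for every $i$, and Theorem \ref{thm:10}'s relaxation to Assumption \ref{ass:11} is not required. Applying the invariance principle for time-delay systems of \cite{hale} then gives $x_i\to z^*$ for all $i$.

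The one technical point I expect to need care with is the non-smoothness introduced by the switching dynamics (\ref{eq:7.1c}); this is what forces us to work with the upper Dini derivative $D^+$ rather than $\dot V$. However, this has already been absorbed into Lemmas \ref{lem:5} and \ref{lem:9}, and the invariance principle of \cite{hale} accommodates locally Lipschitz Lyapunov functionals, so no new obstacle arises. A secondary subtlety is that the biases $v_{ij}^*, r_{ij}^*$ used to pass from (\ref{eq:7.8}) to the $\bar r_{ij}^T\bar v_{ij}$ form involve the dual variables $\lambda^*$ through $x^*,\xi^*$; one must verify that the constant-bias passivity of Lemma \ref{lem:com} covers precisely these biases, but this is immediate since the lemma was already stated in the required generality.
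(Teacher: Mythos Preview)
Your construction of the aggregate storage $V=\sum_i W_i+\sum_{(i,j)\in\E}V_{ij}$ and the resulting dissipation inequality matches the paper exactly; the paper calls this function $W$ and arrives at the same bound (\ref{eq:7.17}). Where you and the paper part ways is in the final convergence step.

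You invoke the LaSalle principle for time-delay systems from \cite{hale}, as in Lemma~\ref{thm:8}. The paper instead uses a Barbalat argument: from $D^+W\le 0$ it first extracts $x_i,\xi_i,\rho_i\in{\mathcal L}_\infty$, then shows $r_{ij}\in{\mathcal L}_\infty$ by rewriting the scattering/delay relations as the stable difference equation $r_{ij}(t)=\bar E_{ij}^2 r_{ij}(t-T_{ij}-T_{ji})+\beta_{ij}(t)$ with bounded forcing, which yields $\dot x_i\in{\mathcal L}_\infty$. With $(x_i-z^*)^T(\phi_i(x_i)-\phi_i(z^*))\in{\mathcal L}_1$ and its time derivative bounded (this is where twice differentiability of $f_i$ in Assumption~\ref{ass:3} is used), Barbalat gives the limit, and strict convexity finishes.

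The reason the paper does not reuse the LaSalle route here is precisely the point you flag but dismiss too quickly: the projection dynamics (\ref{eq:7.1c}) has a \emph{discontinuous} right-hand side, so the closed loop is not a retarded functional differential equation in the sense of \cite{hale}, and that book's invariance principle does not apply as stated. This is not merely an issue of the storage being Dini-differentiable; it is the vector field itself. The paper makes the same choice already in the delay-free constrained case (Theorem~\ref{thm:2}), where it also switches from LaSalle to Barbalat, and remarks afterward that handling such projected dynamics via invariance requires a hybrid LaSalle principle as in \cite{sastry,cortes}. So your argument has a genuine gap at the invocation of \cite{hale}; to close it you would either need a hybrid/nonsmooth invariance principle for delay systems, or you should follow the paper's Barbalat route, which in turn obliges you to carry out the boundedness analysis of $r_{ij}$ and $\dot x_i$ that you currently omit.
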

\begin{proof}
Define the energy function
\[
W := \sum_{i=1}^n W_i + \sum_{(i,j)\in \E}V_{ij}.
\]
Then, combining (\ref{eq:7.8}) and (\ref{eq:8.16}), we obtain
\begin{eqnarray}
D^+  W
\!\!&\!\!\leq\!\!&\!\! - \sum_{i=1}^n\sum_{j\in \N_i}a_{ij}\|{\bar{x}}_i - \bar r^x_{ij}\|^2 
\nonumber\\
\!\!&\!\!\!\!&\!\!
 - \alpha\sum_{i=1}^n (x_i - z^*)^T(\phi_i(x_i)-\phi_i(z^*)) \leq 0,
\label{eq:7.17}
\end{eqnarray}
which implies that $x_i, \xi_i \in {\mathcal L}_{\infty}\ {\forall i}$.

Using (\ref{eq:8.11}), (\ref{eq:8.13a}), (\ref{eq:8.13b}), and (\ref{eq:8.21}),
we can derive 
\begin{eqnarray}
r_{ij}(t) \!\!&\!\!=\!\!&\!\! \bar E^2_{ij} r_{ij}(t-T_{ij}-T_{ji}) + \beta_{ij}(t)
\label{eq:8.22a}\\
r_{ji}(t) \!\!&\!\!=\!\!&\!\! \bar E^2_{ij} r_{ji}(t-T_{ij}-T_{ji}) + \beta_{ji}(t)
\label{eq:8.22b}
\end{eqnarray}
by calculation, where $\bar E_{ij} := (E_{ij}+\eta I_{2N})^{-1}(E_{ij} - \eta I_{2N})$, and 
$\beta_{ij}$ and $\beta_{ji}$
are linear functions of the states $[x_i^T\ \xi_i]^T$ and $[x_j^T\ \xi_j]^T$
at times $t$, $t-T_{ij}$, $t-T_{ji}$ and $t-T_{ij}-T_{ji}$.
Remark that $\beta_{ij}$ and $\beta_{ji}$ are both 
bounded since $x_i, \xi_i \in {\mathcal L}_{\infty}\ {\forall i}$.
It is also confirmed by calculation that all the eigenvalues of $\bar E_{ij}^2$ 
lie within the unit circle for any $\eta > 0, a_{ij}$ and $b_{ij}$.
Thus, both of (\ref{eq:8.22a}) and (\ref{eq:8.22b}) are stable difference equations with
bounded inputs and hence $r_{ij}, r_{ji}\in {\mathcal L}_{\infty}$.
Therefore, $\dot x_i \in {\mathcal L}_{\infty}\ {\forall i}$.

Integrating both sides of (\ref{eq:8.17}) proves that
\begin{eqnarray}
\int_0^{\infty}  (x_i - z^*)^T(\phi_i(x_i)-\phi_i(z^*)) < \infty
\nonumber
\end{eqnarray}
The derivative of $ (x_i - z^*)(\phi_i(x_i)-\phi_i(z^*))$ is given as
\begin{eqnarray}
(\phi_i(x_i)-\phi_i(z^*))^T \dot x_i + (x_i - z^*)^T\nabla^2 f_i(x_i)\dot x_i,
\nonumber
\end{eqnarray}
which is bounded. Thus, using Barbalat's lemma, we can prove
\begin{eqnarray}
\lim_{t\to \infty}(x_i - z^*)^T(\phi_i(x_i)-\phi_i(z^*)) = 0 \ {\forall i}=1,\dots, n.
\label{eq:8.25}
\end{eqnarray}
Under Assumption \ref{ass:5} and \ref{ass:3}, it is equivalent to
$x_i \to z^*$.
\end{proof}

Assumption \ref{ass:5} can be relaxed to \ref{ass:11} if we add the following
stronger assumption on the communication.
To state the assumption, we define a new graph $G' = (\{1,2,\dots, n\}, \E')$
such that $(j,k) \in \E'$ iff there exists $i$ satisfying $j \in \N_i$ and $k \in \N_i$.
Then, we assume the following.
\begin{assumption}
\label{ass:12}
The delays are homogeneous, namely $T_{ij} = T\ {\forall i,j}$ for some $T$.
In addition, the graph $G'$ is connected.
\end{assumption}

\begin{theorem}
\label{thm:20}
Consider the same system as Theorem \ref{thm:9}.
If Assumptions \ref{ass:11}, \ref{ass:3} and \ref{ass:12} hold and $b_{ij}$ is common, i.e.,
$b_{ij} = b\ {\forall j}\in \N_i,\ {\forall i}$ holds for some $b$, 
then $x_i$ asymptotically converges to the optimal solution $z^*$ to (\ref{eq:3.1})
for all $i=1,2,\dots, n$.
\end{theorem}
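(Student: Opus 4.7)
The approach is to adapt the argument of Theorem \ref{thm:10} to the constrained delayed setting of Theorem \ref{thm:9}, using the extra structure provided by Assumption \ref{ass:12} and the common weight $b$ to carry out the componentwise propagation. I would take $W = \sum_{i=1}^n W_i + \sum_{(i,j)\in \E} V_{ij}$ as in the proof of Theorem \ref{thm:9}, so that (\ref{eq:7.17}) gives $D^+ W \leq 0$. This provides boundedness of $x_i, \xi_i, \rho_i$, and then the stable-difference-equation argument via (\ref{eq:8.22a})-(\ref{eq:8.22b}) yields $r_{ij}, r_{ji} \in \mathcal{L}_\infty$ and thus $\dot x_i \in \mathcal{L}_\infty$, so the LaSalle-Krasovskii principle for time-delay systems is applicable. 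On the $\omega$-limit set $\{D^+ W \equiv 0\}$, (\ref{eq:7.17}) forces both $r^x_{ij}(t) \equiv x_i(t)$ and $(x_i - z^*)^T(\phi_i(x_i) - \phi_i(z^*)) \equiv 0$ for all $i$ and $j \in \N_i$. Under Assumption \ref{ass:11} the latter gives $(x_i(t))_l \equiv z^*_l$ identically for every $l \in \mathcal{Z}_i$, and (\ref{eq:7.1b}) together with $r^x_{ij} \equiv x_i$ gives $\dot\xi_i \equiv 0$, so each $\xi_i$ is an identical constant $\xi_i^\infty$.

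Next I would exploit Assumption \ref{ass:12} and $b_{ij} = b$ to reduce the scattering/delay relations (\ref{eq:8.33a})-(\ref{eq:8.33d}) on the invariant set. With $T_{ij} = T$ and $d_{ij} \equiv d_{ji} \equiv 0$, equations (\ref{eq:8.33c})-(\ref{eq:8.33d}) yield $r^\xi_{ij}(t) = r^\xi_{ji}(t-T)$ and hence $r^\xi_{ij}$ is $2T$-periodic, while (\ref{eq:8.33a}) collapses to an identity of the form $x_i(t) = x_j(t-T) + (b/\eta)\bigl[r^\xi_{ij}(t) + r^\xi_{ji}(t-T) - \xi_i^\infty - \xi_j^\infty\bigr]$ for every $(i,j) \in \E$. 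For a pair $(i,k) \in \E'$, i.e.\ with a common neighbor $j \in \N_i \cap \N_k$, applying this identity to both $(i,j)$ and $(k,j)$ and subtracting expresses $x_i(t) - x_k(t)$ as a linear combination of $2T$-periodic scattering variables plus a constant offset. Averaging over one period $2T$ (or invoking the $2T$-periodicity directly) removes the oscillatory part; combined with the identity $(x_i(t))_l \equiv z^*_l$ for $l \in \mathcal{Z}_i$, this yields $(x_k(t))_l \equiv z^*_l + c_{ik,l}$ identically, and the remaining constancy constraints imposed by the invariant-set dynamics of $x_k$ (using $\dot\xi_k \equiv 0$ and the structure of (\ref{eq:7.1a}) on the invariant set) force $c_{ik,l} = 0$. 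Connectedness of $G'$ in Assumption \ref{ass:12} together with $\cup_i \mathcal{Z}_i = \{1,\dots,N\}$ then propagates $(x_k)_l \equiv z^*_l$ to every agent $k$ and every coordinate $l$. Hence the invariant set reduces to $x_i = z^*$ for all $i$, and LaSalle's principle concludes $x_i \to z^*$.

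\textbf{Main obstacle.} The delicate step is Step 4: propagation of the correct $l$-th coordinate across $G'$. In Theorem \ref{thm:10} the analogous identity (\ref{eq:8.38}) could be used directly along $G$-edges because the asymptotic analysis only required limits. Here, because the invariant set must be characterized \emph{identically in time}, the $2T$-periodic residual left by the scattering variables must be removed rather than taken to a limit, and the extra Lagrange-multiplier dynamics $\dot\rho_i = \psi_i(\rho_i, g_i(x_i))$ constrains only coordinates $l$ on which $g_i$ actually depends. It is precisely the combination of homogeneous delays (so all scattering variables share a single period $2T$), common $b$ (so the pairwise identities for neighbors of a common $j$ have matching coefficients), and $G'$-connectedness (so that agents sharing a common neighbor suffice to reach every node) that makes the offset-elimination and chaining argument go through. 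Establishing rigorously that the constant offsets $c_{ik,l}$ must vanish, using the invariant-set dynamics of $x_k$ and $\rho_k$, is where the bulk of the technical work lies.
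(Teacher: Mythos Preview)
Your overall strategy---use the energy function $W$, extract boundedness and the two dissipation identities from (\ref{eq:7.17}), then propagate along $G'$---matches the paper's, but two points of execution diverge from the paper and the second one is precisely what creates the obstacle you flag.

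First, the paper does \emph{not} invoke LaSalle--Krasovskii here; it stays with Barbalat's lemma exactly as in Theorem~\ref{thm:9}. From (\ref{eq:7.17}) one obtains $(x_i - r^x_{ij})\in\mathcal{L}_2$, and after establishing $\dot x_i,\dot\xi_i,\dot r_{ij}\in\mathcal{L}_\infty$, Barbalat gives $\lim_{t\to\infty}(x_i-r^x_{ij})=0$. This avoids having to justify an invariance principle for a time-delay system coupled with the projected (discontinuous) $\rho$-dynamics (\ref{eq:7.1c}); note that all constrained results in the paper (Theorems~\ref{thm:2}, \ref{thm:9}, \ref{thm:20}) use Barbalat for this reason.

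Second---and this is the key point---the paper pairs the scattering identities differently, which makes your ``offset elimination'' unnecessary. Instead of taking $(i,j)$ and $(k,j)$ with a common neighbor $j$, the paper fixes $i$ with $|\N_i|\ge 2$ and two neighbors $j,k\in\N_i$, and writes (\ref{eq:8.35}) for the pairs $(i,j)$ and $(i,k)$. With $T_{ij}=T$ and $b_{ij}=b$, both equations carry the \emph{same} residual $(b/\eta)(\xi_i(t)-\xi_i(t-2T))$, so subtracting them cancels it outright and yields $\lim_{t\to\infty}(r^x_{ji}-r^x_{ki})=0$, hence $\lim_{t\to\infty}(x_j-x_k)=0$ for $(j,k)\in\E'$. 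No convergence of $\xi_i$, no $2T$-periodicity argument, no averaging, and no offsets $c_{ik,l}$ are needed. Your pairing $(i,j)$ and $(k,j)$ leaves residuals involving the \emph{different} integrators $\xi_i$ and $\xi_k$ (and different scattering signals $r^\xi_{ij},r^\xi_{kj}$), which is exactly why you are left with a $2T$-periodic remainder you then struggle to kill. Once consensus $x_i-c\to 0$ is obtained via connectedness of $G'$, the remainder of the proof is literally that of Theorem~\ref{thm:2}.
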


\begin{proof}
The inequality (\ref{eq:7.17}) means that $(x_i - r^x_{ij}) \in {\mathcal L}_2$
for all $j\in \N_i$ and $i = 1,2,\dots, n$.
As proved in Theorem \ref{thm:9}, $\dot x_i \in {\mathcal L}_{\infty}\ {\forall i}$,
and we can prove $\dot \xi_i \in {\mathcal L}_{\infty}\ {\forall i}$ in the same way.
Solving (\ref{eq:8.22a}), $r_{ij}(t)$ is given by a convolution sum of the input $\beta_{ij}$
whose time derivative is bounded.
Thus, we have
$\dot r_{ij} \in {\mathcal L}_{\infty}\ {\forall i}$ for all $j\in \N_i$ and $i = 1,2,\dots, n$.
Invoking Barbalat's lemma, we obtain 
\begin{eqnarray}
\lim_{t\to \infty} (x_i - r^x_{ij}) = 0 \ \ {\forall j}\in \N_i \mbox{ and } {\forall i} = 1,2,\dots, n.
\label{eq:15.4}
\end{eqnarray}

Under Assumption \ref{ass:12}, there exists $i$ such that $|\N_i| \geq 2$.
Take two neighbors $j, k\in \N_i$ of such $i$.
Then, (\ref{eq:15.4}) implies that
\begin{eqnarray}
\lim_{t\to \infty} (r_{ij}^x - r_{ik}^x) = 0.
\label{eq:15.5}
\end{eqnarray}
Following the same procedure as Theorem \ref{thm:10}, the equations
(\ref{eq:8.34}) and (\ref{eq:8.35}) hold.
From (\ref{eq:15.4}), we can also prove (\ref{eq:8.100}).
Remark that (\ref{eq:8.36})  is not proved here.
Taking the limit of (\ref{eq:8.35}) with $T_{ij} = T_{ji} = T$ and $b_{ij}= b$ and using (\ref{eq:8.100}) and (\ref{eq:8.34})  yield
\begin{eqnarray}
\lim_{t\to \infty} \{r_{ij}^x  \!\!&\!\!+\!\!&\!\!  r_{ij}^x(t-2T) -
2 r_{ji}^x(t-T) 
\nonumber\\
&&\hspace{1cm}+ (b/\eta)(
\xi_i - \xi_i(t-2T))\} = 0
\label{eq:15.1}
\end{eqnarray}
under Assumption \ref{ass:12}.
The same equation holds for $k$ as
\begin{eqnarray}
\lim_{t\to \infty} \{r_{ik}^x  \!\!&\!\!+\!\!&\!\! r_{ik}^x(t-2T) -
2 r_{ki}^x (t-T) 
\nonumber\\
&&\hspace{1cm}
+ (b/\eta)(\xi_i - \xi_i(t-2T))\} = 0.
\label{eq:15.2}
\end{eqnarray}
Subtracting  (\ref{eq:15.2}) from  (\ref{eq:15.1}) and using (\ref{eq:15.5}),
we have
\begin{eqnarray}
\lim_{t\to \infty} (r_{ji}^x - r_{ki}^x) = 0.
\label{eq:15.3}
\end{eqnarray}
From (\ref{eq:15.4}),
 (\ref{eq:15.3}) means that  $\lim_{t \to \infty}(x_j - x_k) = 0$,
which holds for any pair such that $(j,k)\in \E'$. 
Under Assumption \ref{ass:12}, we can conclude that
there exists a trajectory $c(\cdot)$ such that
$\lim_{t\to \infty}(x_i - c) = 0$ for all $i$.
The remaining of the proof is the same as Theorem \ref{thm:2}.
\end{proof}

\section{Application to Human Localization Using Pedestrian Detection Algorithm}

\begin{figure}[t]
\centering
\includegraphics[width=8.4cm]{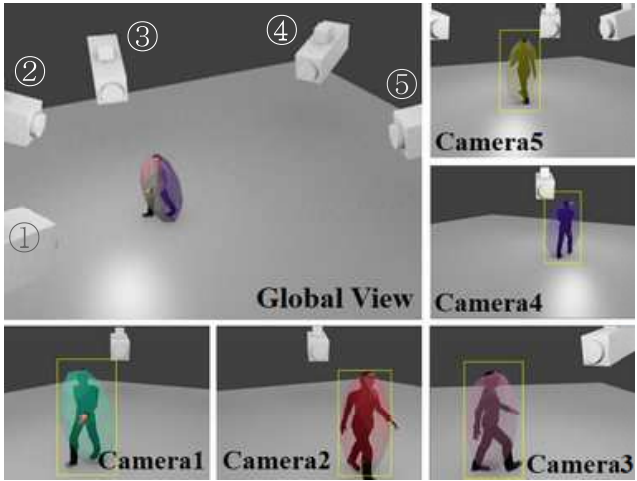}
\caption{Intended scenario of 3-D visual human localization.
Each camera acquires a rectangle enclosing the human (yellow boxes in the small windows)
using a pedestrian detection algorithm. 
The transparent colored ellipsoids in the small windows are final estimates generated by 
the algorithm presented in Subsection \ref{sec:4.3}, and all of them are also shown in the large window.}
\label{fig:14}
\end{figure}


We finally apply the proposed algorithms 
to the visual human localization problem investigated in \cite{Nec}.
Here, multiple networked cameras are assumed to be distributed over the 3-D Euclidean 
space to monitor a human as shown in Fig. \ref{fig:14}.
Each camera acquires 2-D rectangles on its own image plane in which the human lives, as shown in the small windows of Fig. \ref{fig:14},
by executing a pedestrian detection algorithm e.g. in \cite{DT_CVPR}.
Then, if camera $i$ detects the human, then it knows that 
the human must be inside of a cone ${\mathcal H}_i$ defined by connecting 
the focal center and the vertices of the rectangle.
In this paper, we suppose that all of the five cameras detect the human. 
Please refer to \cite{Nec} for the case where some cameras do not detect.

\begin{figure}[t]
\centering
\begin{minipage}{4.2cm}
\includegraphics[width=4.2cm]{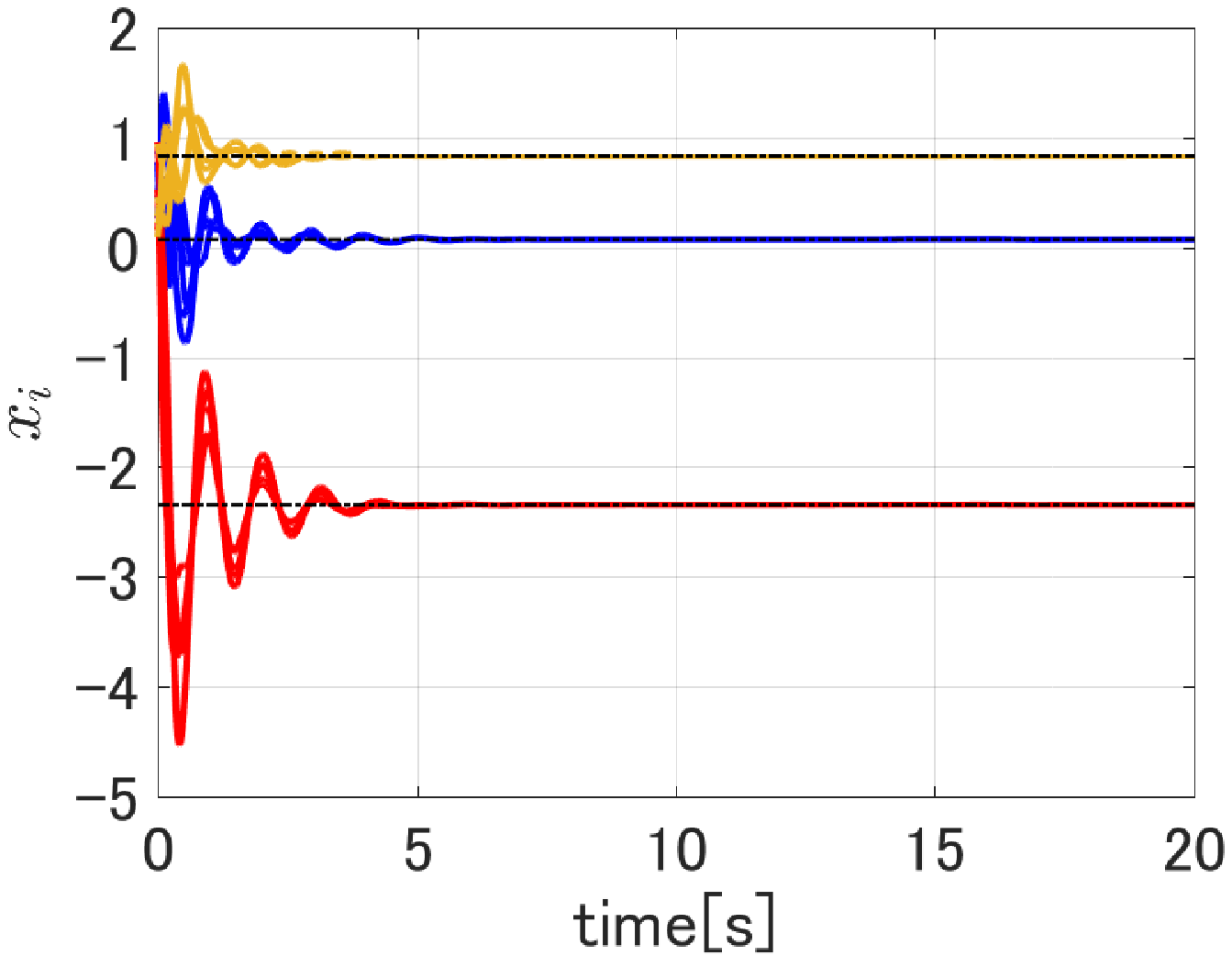}
\end{minipage}
\begin{minipage}{4.2cm}
\includegraphics[width=4.2cm]{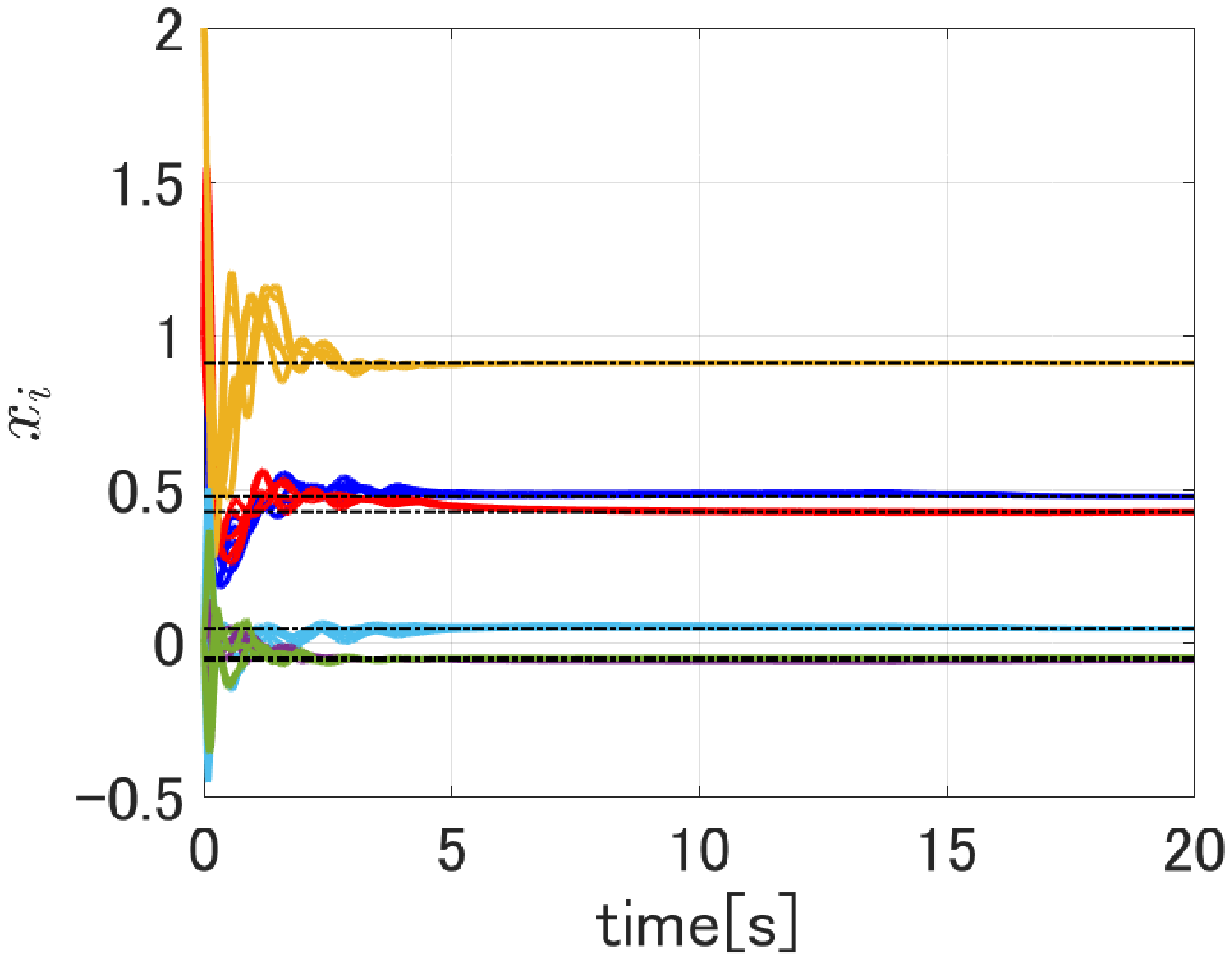}
\end{minipage}
\caption{Trajectories of $x_1, \dots, x_5$ without communication delays (left: vector variable $q$, right: matrix variable $Q$).}
\label{fig:10}
\medskip

\begin{minipage}{4.2cm}
\includegraphics[width=4.2cm]{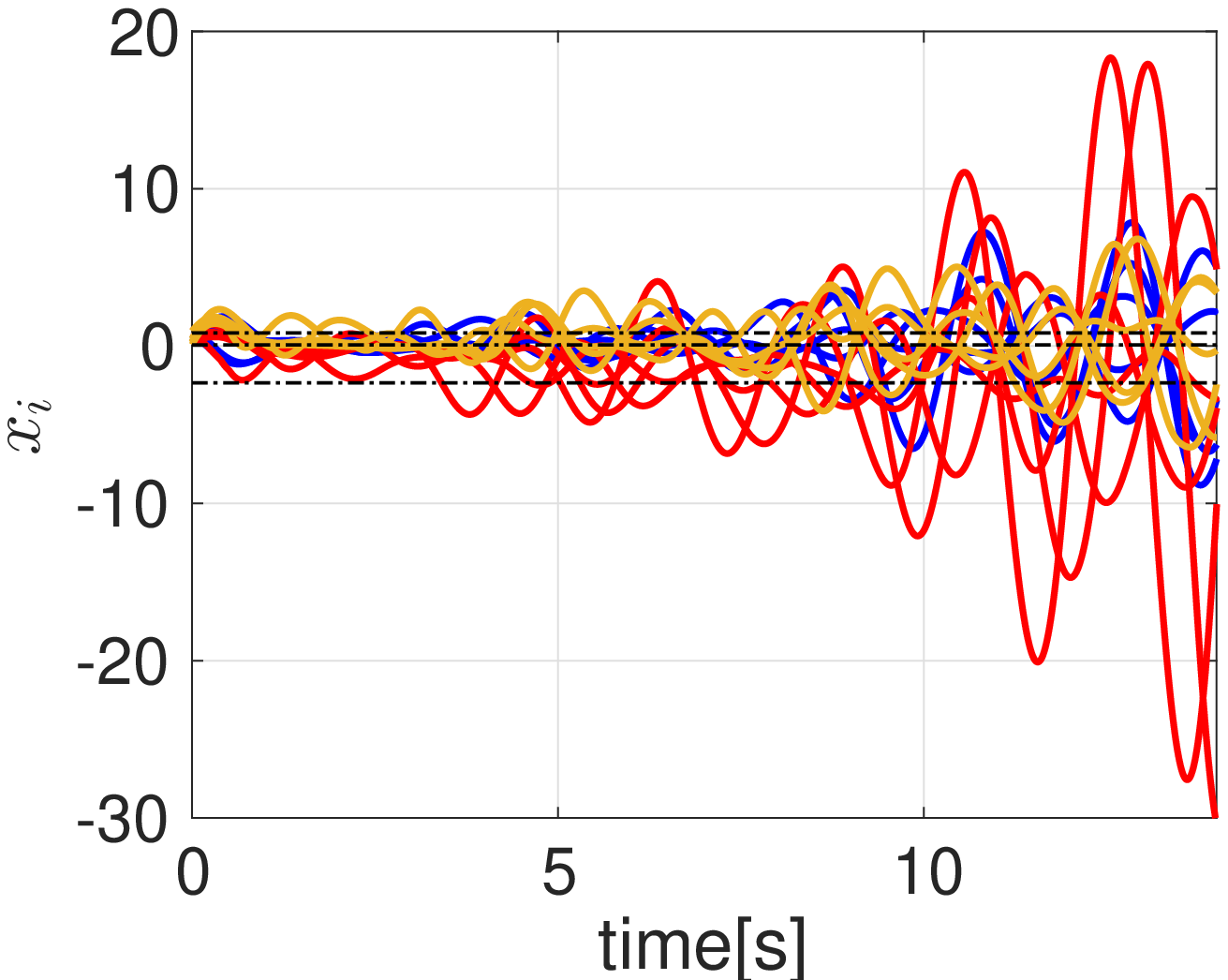}
\end{minipage}
\begin{minipage}{4.2cm}
\includegraphics[width=4.2cm]{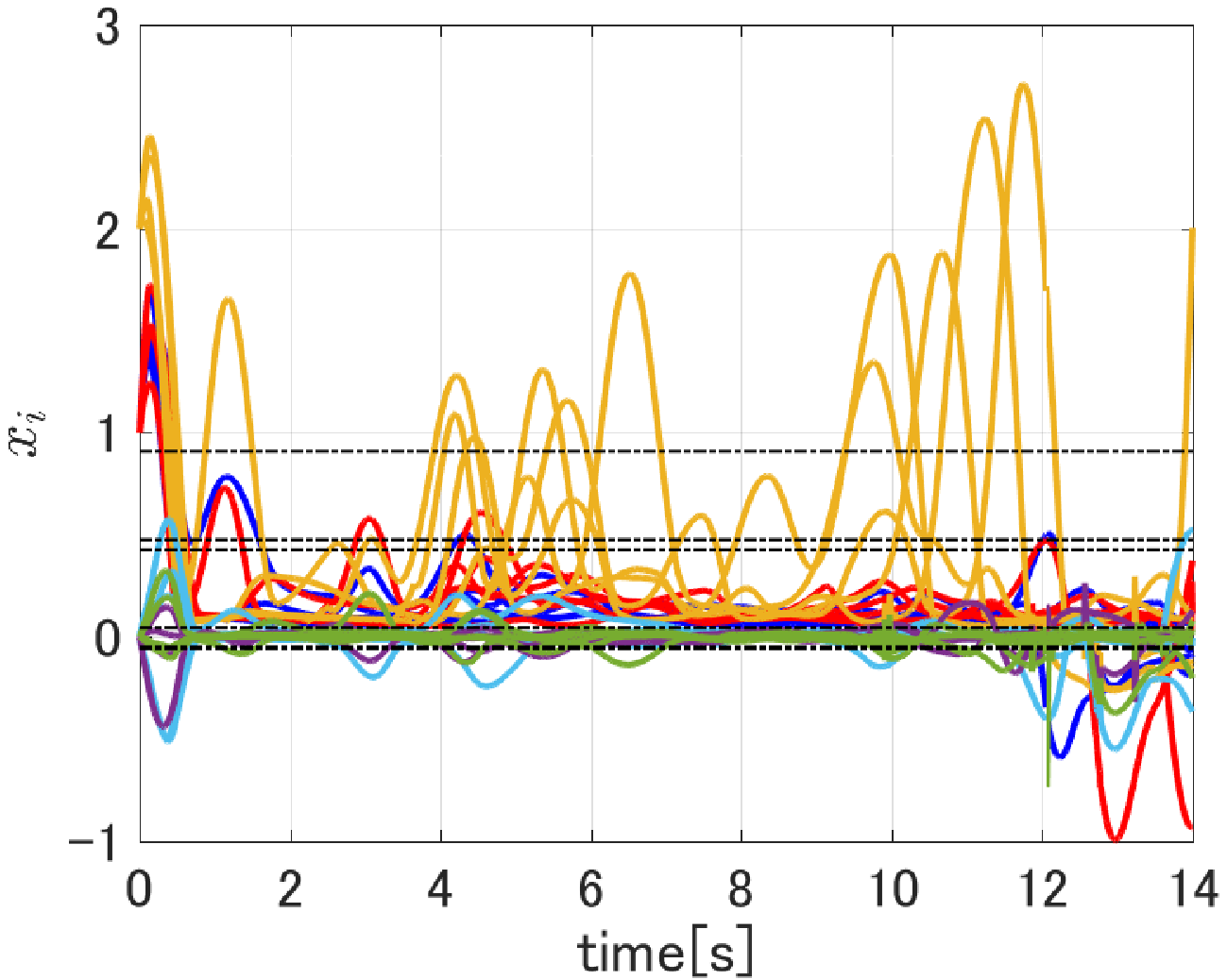}
\end{minipage}
\caption{Trajectories of $x_1, \dots, x_5$ with communication delays and without the scattering transformation (left: vector variable $q$, right: matrix variable $Q$).}
\label{fig:11}
\medskip

\begin{minipage}{4.2cm}
\includegraphics[width=4.2cm]{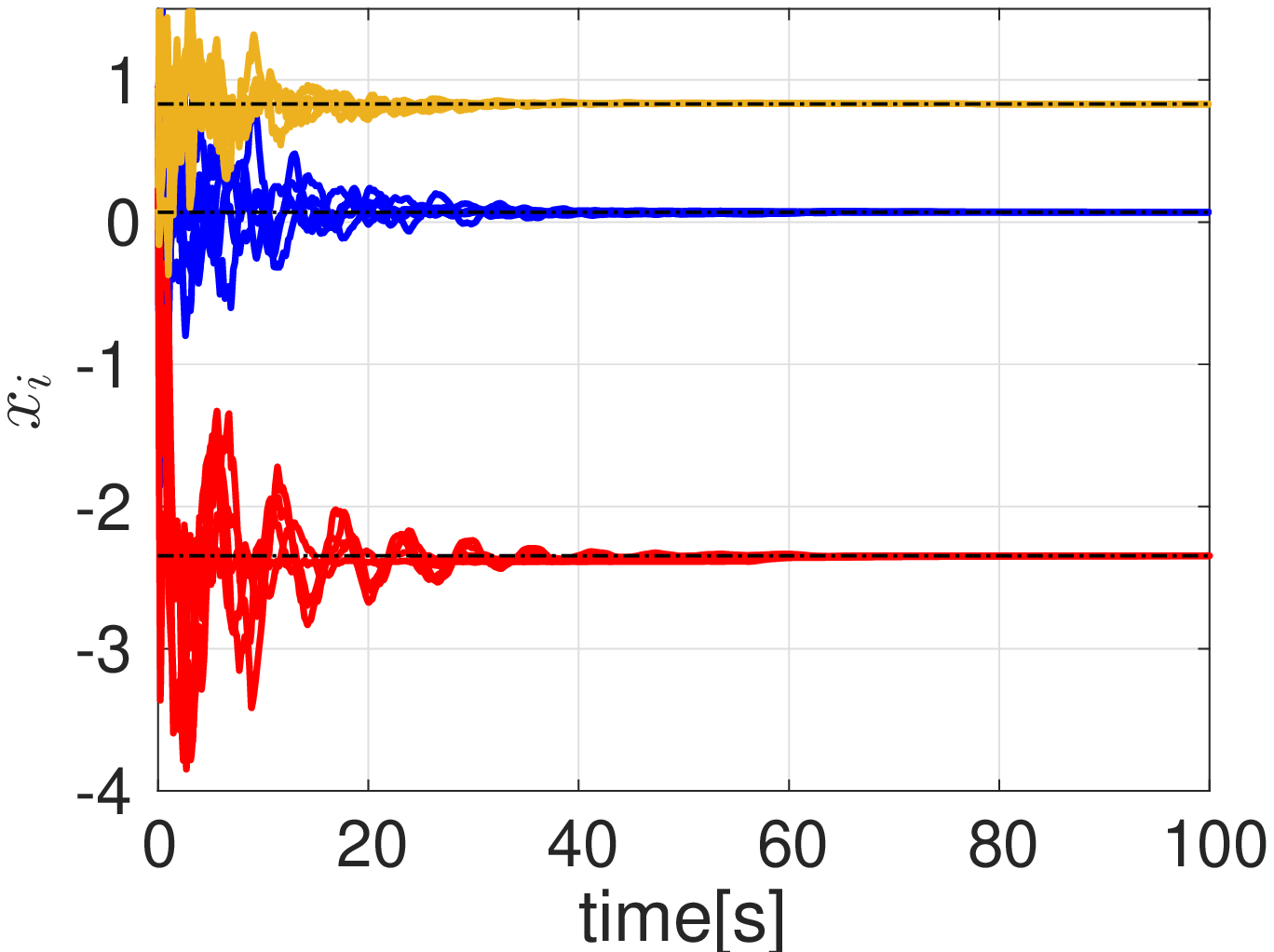}
\end{minipage}
\begin{minipage}{4.2cm}
\includegraphics[width=4.2cm]{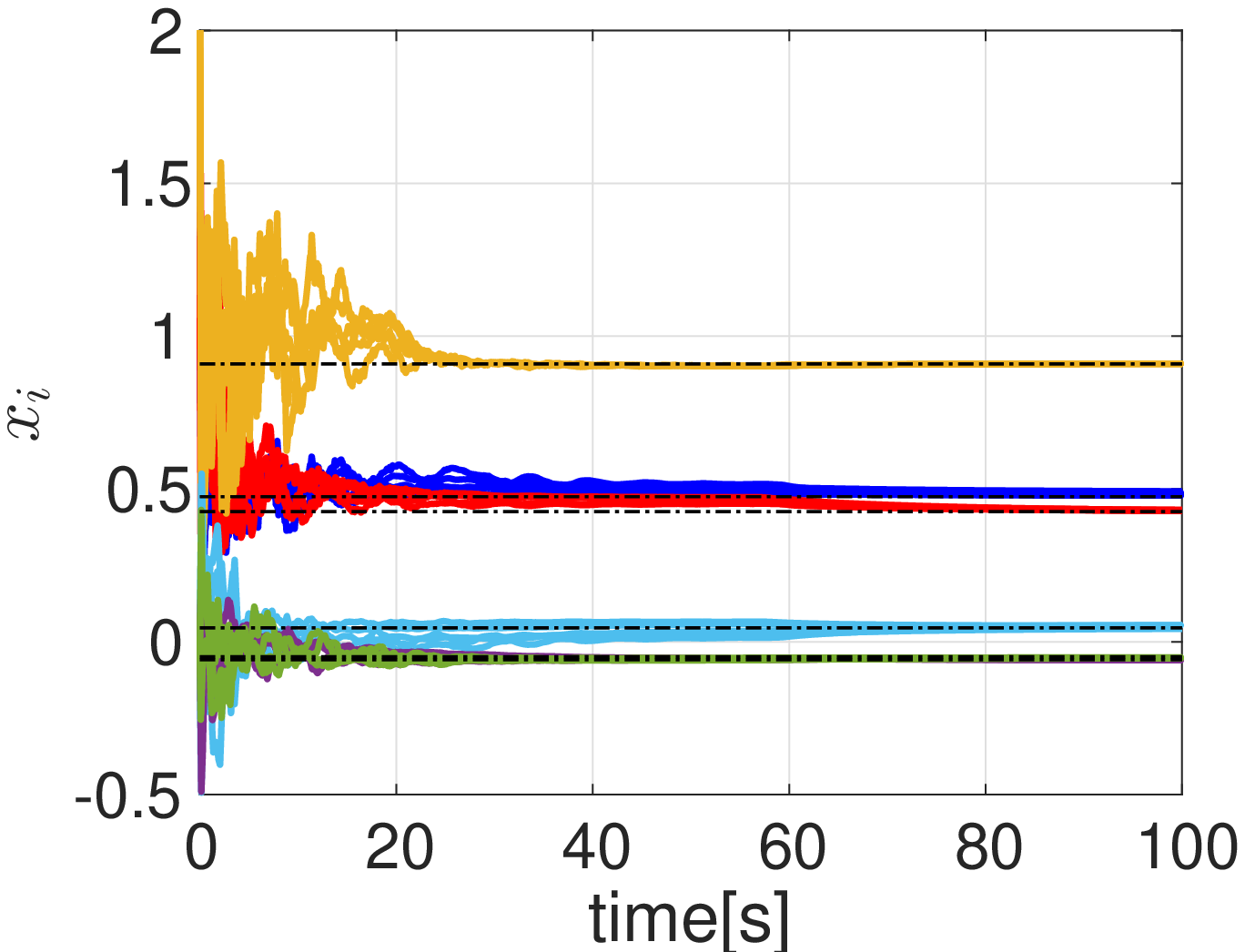}
\end{minipage}
\caption{Trajectories of $x_1, \dots, x_5$  with communication delays and scattering transformation 
(left: vector variable $q$, right: matrix variable $Q$).}
\label{fig:12}
\end{figure}

If the human is modeled as an ellipsoid 
$\Omega(q,Q) = \{p \in \R^3|\ (p-q)^TQ^{-2}(p-q)\leq 1\}$,
the decision variable $z$ consists of the elements of 
$q\in \R^3$ and $Q \in {\mathbb S}^{3\times 3}$.
Note that the symmetric matrices are parametrized by 6 variables.
Then, \cite{Nec} formulates the local cost function
\[
f_i(z) = -\log\det(Q) + w\min_{p \in {\mathcal C}_i}\|p - q\|^2,
\] 
where ${\mathcal C}_i$ is the line segment connecting the focal center and
the center of the rectangle on the image.
The scalar $w > 0$ is a weighting coefficient, and it is set to 
$w = 1$ in this simulation.
The local constraints are given by
$\Omega(q,Q) \subseteq {\mathcal H}_i$ and $Q > 0$, which
are reduced to the form of $g_i(z)\leq 0$.
See \cite{Nec} for more details on the problem formulation.
Note that the problem 
satisfies Assumptions \ref{ass:3}
and \ref{ass:4} in a realistic situation, 
but does not Assumption \ref{ass:11}.
It is actually confirmed that the present algorithm works
for the problem, but, to ensure preciseness, we add a term
$10^{-5} \|q\|^2$ to $f_i$.

We first run the algorithm presented in Subsection \ref{sec:4.1}
without adding communication delays.
The communication network is set to a ring graph, where
$a_{ij}$ and $b_{ij}$ are selected as $a_{ij}=1$ 
and $b_{ij} = 3$ for all $(i,j)\in {\mathcal E}$.
The initial values of the estimates of $q$ are randomly selected within $[0\ 1]$ and 
those for $Q$ are set to a diagonal matrix with elements $1$, $1$, and $2$ for all $i$\footnote{We can prove that if the initial estimate of $Q$ is positive definite,
they remain positive definite, over which the optimization problem is ensured to be convex, 
for all subsequent time. See Lemma 5 in \cite{Nec} for more details.}.
The initial values of $\rho_i$ are also randomly selected within $[0\ 1]$, and 
$\xi_i(0) = [1\ 2\ 3\ 1\ 1\ 2\ 0\ 0\ 0]^T$
for all $i$.
The gain $\alpha$ is set to $\alpha = 2$.
Then, the trajectories of the estimates $x_1, x_2, \dots, x_5$
are illustrated in Fig. \ref{fig:10},
where the dashed line describes the actual optimal solution.
We see that the estimates converge to the solution.

Let us next add communication delays, where each
$T_{ij}$ is selected randomly within $[0\ 1]$.
Then, we run the above algorithm under the same setting.
In the presence of delays, the trajectories of $x_1, x_2, \dots, x_5$
are changed to Fig. \ref{fig:11}, namely they diverge and the simulation stops
with errors.

We finally implement the algorithm presented in Subsection \ref{sec:4.3}.
Then, the resulting trajectories $x_1, x_2, \dots, x_5$
are shown in Fig. \ref{fig:12}.
We see that the system is stabilized by the scattering transformation,
and they successfully converge the optimal solution.
The final estimates of the ellipsoid are illustrated in Fig. \ref{fig:14},
where we see that every camera successfully computes an ellipsoid
tightly enclosing the human.

The problem of slow convergence is left as a future work.
Application of the technique in \cite{dorfler} can be an option.


\section{Conclusion}

In this paper, we have addressed a class of distributed optimization problems
in the presence of the inter-agent communication delays.
To this end, we first have focused on unconstrained distributed optimization problem, and 
presented a passivity-based perspective for the PI consensus-based
distributed optimization algorithm.
We then have proved that the inter-agent communication delays can be integrated 
while ensuring the convergence property using scattering transformation.
Moreover, we have extended the results to distributed optimization with local inequality 
constraints, and presented a passivity-based solution both in the absence and presence of delays.
Finally, the present algorithm has been applied to a visual human localization problem.





\begin{thebibliography}{99}

\bibitem{OL_BK}
R. Ortega, A. Loria, P.J. Nicklasson and H. Sira-Ramirez, {\it Passivity-Based Control of Euler-
Lagrange Systems: Mechanical, Electrical and Electromechanical Applications}, 2nd ed.
Communications and Control Engineering Series. Springer, London, 2010.


\bibitem{BAW_BK}
H. Bai, M. Arcak and J.T. Wen, {\it Cooperative Control Design: A Systematic, Passivity-based
Approach}, Communications and Control Engineering Series. Springer, New York, 2011.


\bibitem{HCFS_BK}
T. Hatanaka, N. Chopra, M. Fujita and M.W. Spong, {\it Passivity-Based Control and Estimation in Networked Robotics}, Communications and Control Engineering Series, Springer-Verlag, 2015.

\bibitem{KD_AT14}
S. Knorn, A. Donaire, J.C. Aguero and R.H. Middleton,
``Passivity-based control for multi-vehicle systems subject to string constraints,''
Automatica vol. 50, no. 12, pp. 3224--3230, 2014.

\bibitem{MMW_CDC12}
S. Mukherjee, S. Mishra and J.T. Wen,
``Building temperature control: A passivity-based approach,''
Proc. IEEE Conf. Decision and Control,
pp. 6902-6907, 2012.

\bibitem{IB_CDC14}
M. Ilic, K. Bachovchin, M. Cvetkovic and M. Xia,
``Physics-based foundations for cyber and market design in complex electric energy systems,''
Proc. IEEE Conf. Decision and Control, pp. 4635--4654, 2014.


\bibitem{LC_TAC12}
P. Lee, A. Clark, L. Bushnell and R. Poovendran,
``A passivity framework for modeling and mitigating wormhole attacks on networked control systems,''
IEEE Trans. Autom. Control, vol. 59, no. 12, pp. 3224--3237, 2014.

\bibitem{AG_EJC13}
P.J. Antsaklis, B. Goodwine, V. Gupta, M.J. McCourt, Y. Wang, P. Wu, M. Xia, H. Yu and
F. Zhu, ``Control of cyberphysical systems using passivity and dissipativity based methods,''
Eur. J. Control, vol. 19, no. 5, 379--388, 2013.

\bibitem{HCF_CDC15}
T. Hatanaka, N. Chopra and M. Fujita, ``Passivity-based bilateral human-swarm-interactions for cooperative robotic networks and human passivity analysis,''
Proc. of 54th IEEE Conf. Decision and Control, pp.1033--1039, 2015.

\bibitem{arcak2}
J.T. Wen and M. Arcak, ``A unifying passivity framework for network
flow control,'' IEEE Trans. Autom. Control, vol. 49, no. 2,
pp. 162--174, 2004.

\bibitem{arcak}
X. Fan, T. Alpcan, M. Arcak, T.J. Wen and T. Basar,
``A passivity approach to game-theoretic CDMA power control,''
Automatica, vol. 42, no. 11, pp. 1837--1847, 2006.

\bibitem{BZA_AT14}
M. B\"{u}rger, D. Zelazo and F. Allg\"{o}wer, ``Duality and network theory
in passivity-based cooperative control,'' Automatica, vol. 50, no. 8, pp.
2051--2061, 2014.

\bibitem{BP_AT15}
M. B\"{u}rger and C.D. Persis,
``Dynamic coupling design for nonlinear output agreement and
time-varying flow control,'' Automatica, vol. 51, no. 1, pp. 210--222, 2015.

\bibitem{YT_TR}
H. Yamamoto and K. Tsumura, ``Control of smart grids based on
price mechanism and network structure,'' The University of Tokyo,
Mathematical Engineering Technical Reports, 2012.

\bibitem{IUI_ECC16}
T. Ishizaki, A. Ueda and J. Imura,
``Convex gradient controller design for incrementally passive systems with quadratic storage functions ,''
Proc. 55th IEEE Conf. Decision and Control, to appear, 2016.

\bibitem{GDOB_Pdh15}
G.D.O. Bravo,
{\it Distributed Methods for Resource Allocation : A Passivity
Based Approach}, PhD thesis, Universidad de Los Andes, Ecole des Mines de Nantes, 2015.




\bibitem{B_BK}
S. Boyd and L. Vandenberghe, {\it Convex Optimization}, Cambridge University
Press, 2004.


\bibitem{NO_TAC09}
A. Nedi\'{c} and A. Ozdaglar, ``Distributed subgradient methods for multiagent
optimization,'' IEEE Trans. Autom. Control, vol. 54, no. 1, pp.
48--61, Jan. 2009.


\bibitem{NOP_TAC10}
A. Nedi\'{c}, A. Ozdaglar and P. A. Parrilo, ``Constrained consensus and optimization
in multi-agent networks,'' IEEE Trans. Autom. Control, vol. 55, no. 4, pp. 922--938, 2010.


\bibitem{ZM_TAC12}
M. Zhu and S. Martinez, ``On distributed convex optimization under inequality
and equality constraints,'' IEEE Trans. Autom. Control, vol. 57, no. 1, pp. 151--164, 2012.

\bibitem{CNS_TAC14}
T-H. Chang, A. Nedi\'{c} and A. Scaglione, ``Distributed constrained optimization
by consensus-based primal-dual perturbation method,'' IEEE Trans. Autom. Control,
vol. 59, no. 6, pp. 1524--1538, 2014.

\bibitem{GN_arxiv}
G. Qu and N. Li,
``Harnessing smoothness to accelerate distributed optimization,''
arXiv:1605.07112v1, 2016.

\bibitem{LO_TAC10}
I. Lobel and A. Ozdaglar, ``Distributed subgradient methods for convex
optimization over random networks,'' IEEE Trans. Autom. Control,
vol. 56, no. 6, pp. 1291--1306, 2011.

\bibitem{RNV_JOTA}
S.S. Ram, A. Nedi\'{c} and V.V. Veeravalli, ``Distributed stochastic sub-gradient projection algorithms for convex optimization,'' J. Optim. Theory
Appl., vol. 147, pp. 516--545, 2010.


\bibitem{WE_AAC10}
J. Wang and N. Elia, ``Control approach to distributed optimization,'' Proc. Forty-Eighth Annual Allerton Conference, pp. 557--561, 2010.

\bibitem{WE_CDC11}
J. Wang and N. Elia, ``A control perspective for centralized and distributed convex optimization,'' 
Proc. 50th IEEE Conf. Decision and Control, pp. 557--561, 2011.


\bibitem{DE_ACC14}
G. Droge and M. Egerstedt
``Proportional integral distributed optimization for dynamic
network topologies,'' 
Proc. 2014 American Control Conference, pp. 3621--3626, 2014.


\bibitem{FYL_CDC06}
R.A. Freeman, P. Yang and K.M. Lynch, ``Stability and convergence
properties of dynamic average consensus estimators,'' Proc. 45th
IEEE Conf. Decision and Control, pp. 398--403, 2006.

\bibitem{BFL_CDC10}
H. Bai, R.A. Freeman, K.M. Lynch, ``Robust dynamic average consensus of time-varying inputs,''
Proc. 49th IEEE Conf. Decision and Control, pp. 3104--3109, 2010.


\bibitem{CS_CDC06}
N. Chopra and M.W. Spong, ``Output synchronization of nonlinear systems with time delay in communication,''
Proc. 45th IEEE Conf. Decision and Control, pp. 4986--4992, 2006.


\bibitem{AD_CDC12}
A. Agarwal and J.C. Duchi
``Distributed delayed stochastic optimization,''
Proc. 51st IEEE Conf. Decision and Control, pp. 5451-5452, 2012.


\bibitem{TTM_IFAC11}
H. Terelius, U. Topcu and R.M. Murray,
``Decentralized multi-agent optimization
via dual decomposition,''
Proc. 18th IFAC World Congress, 
pp. 11245--11251, 2011.



\bibitem{HNE_A15}
M. T Hale, A. Nedi\'{c} and M. Egerstedt,
``Cloud-based centralized/decentralized multi-agent optimization with communication delays,''
arXiv preprint arXiv:1508.06230, 2015.



\bibitem{Nec}
T. Hatanaka, R. Funada, G. Gezer and M. Fujita,
``Distributed visual 3-D localization of a human using pedestrian detection algorithm: A passivity-based approach,''
Proc. 6th IFAC Workshop on Distributed Estimation and Control in Networked Systems, to be uploaded, 2016.





\bibitem{hale}
J.K. Hale and S.M.V. Lunel,
{\it Introduction to Functional Differential Equations},
Applied Mathematical Sciences Series, vol. 99, Springer, New York, 1993.




\bibitem{FP_AT10}
D. Feijer and F. Paganini, ``Stability of primal--dual gradient dynamics and applications to network optimization,'' Automatica, vol. 46, no. 12, pp. 1974--1981, 2010.

\bibitem{sastry}
J. Lygeros, K.H. Johansson, S.N. Simic, J. Zhang and S.S. Sastry, 
``Dynamical properties of hybrid automata,'' IEEE Trans. Autom. Control,
vol. 48, pp. 2--17, 2003.

\bibitem{cortes}
A. Cherukuri, E. Mallada, and J. Cort\'{e}s, ``Asymptotic convergence of
constrained primal-dual dynamics,'' System and Control Letters, vol. 87,
pp. 10--15, 2016.



\bibitem{DT_CVPR}
N. Dalal and B. Triggs,
``Histograms of oriented gradients for human detection,''
IEEE Conference on Computer Vision and Pattern Recognition, pp. 886--893, 2005.

\bibitem{dorfler}
X. Wu, F. D\"{o}rfler, and M.R. Jovanovi\'{c},
``Topology identification and design of distributed integral action
in power networks,'' Proc. 2016 American Control Conference, pp. 5921--5926, 2016.


\end{thebibliography}
\end{document}